
\documentclass[letterpaper,11pt]{article}

\usepackage[margin=1.1in]{geometry}
\usepackage{amsmath,amsfonts,latexsym,amssymb,xspace}
\usepackage{amsthm}
\usepackage{enumitem}
\usepackage{algorithm}
\usepackage[noend]{algorithmic}
\usepackage{engord}
\usepackage{cite}




\newtheorem{theorem}{Theorem}
\newtheorem{corollary}[theorem]{Corollary}
\newtheorem{lemma}[theorem]{Lemma}

\theoremstyle{remark}
\newtheorem{remark}{Remark}


\newcommand{\nats}{\ensuremath{\mathbb{N}}\xspace}
\newcommand{\ints}{\ensuremath{\mathbb{Z}}\xspace}
\newcommand{\rats}{\ensuremath{\mathbb{Q}}\xspace}

\newcommand{\nnrats}{\ensuremath{\rats^+}\xspace}

\newcommand{\csp}{\textsf{\textup{CSP}}\xspace}
\newcommand{\cspg}[1][\Gamma]{\textsf{\textup{CSP($#1$)}}\xspace}
\newcommand{\numcsp}{\textsf{\textup{\#CSP}}\xspace}
\newcommand{\ncsp}[1][\Gamma]{\textsf{\textup{\#CSP($#1$)}}\xspace}

\newcommand{\np}{\textsf{\textup{NP}}\xspace}
\newcommand{\npc}{\np{}-complete\xspace}
\newcommand{\nump}{\textsf{\textup{\#P}}\xspace}
\newcommand{\numpc}{\textsf{\textup{\#P}}-complete\xspace}
\newcommand{\numph}{\textsf{\textup{\#P}}-hard\xspace}
\newcommand{\ptime}{\textsf{\textup{P}}\xspace}

\newcommand{\fp}{\textsf{\textup{FP}}\xspace}

\newcommand{\PP}{\textsf{PP}\xspace}
\newcommand{\parityP}{\raisebox{1pt}{$\boldsymbol\oplus$}\textsf{P}\xspace}

\newcommand{\inj}{\hookrightarrow}
\newcommand{\bij}{\leftrightarrow}
\newcommand{\set}[1]{\{#1\}}
\newcommand{\size}[1]{\|#1\|}
\newcommand{\maltsev}{\textrm{Mal'tsev}\xspace}
\newcommand{\cocl}[1][\Gamma]{\ensuremath{\langle #1\rangle}\xspace}
\newcommand{\cl}[1]{\ensuremath{\textsf{\textup{cl}}_{\varphi} #1}\xspace}
\newcommand{\closure}{\textsc{Closure}\xspace}
\newcommand{\proj}[2][i]{\ensuremath{\mathsf{pr}_{#1} #2}\xspace}
\newcommand{\tw}[1][i]{\ensuremath{\sim_{#1}}\xspace}

\newcommand{\ph}{\ensuremath{\varphi}\xspace}
\newcommand{\equ}{\ensuremath{\boldsymbol{=}}\xspace}
\newcommand{\bigO}{\mathcal{O}}
\newcommand{\con}{\ensuremath{\Theta}\xspace}
\newcommand{\Hom}{\mbox{\textup{Hom}}}
\newcommand{\Mon}{\mbox{\textup{Mon}}}
\newcommand{\mon}{\mbox{\textup{mon}}}
\newcommand{\strbal}{\textsc{Strong Balance}\xspace}
\newcommand{\strect}{\textsc{Strong Rectangularity}\xspace}
\newcommand{\congsing}{\textsc{Congruence Singularity}\xspace}

\newcommand{\cA}{\ensuremath{\mathcal{A}}\xspace}

\newcommand{\cD}{\ensuremath{\mathcal{D}}\xspace}
\newcommand{\cE}{\ensuremath{\mathcal{E}}\xspace}
\newcommand{\cF}{\ensuremath{\mathcal{F}}\xspace}
\newcommand{\cG}{\ensuremath{\mathcal{G}}\xspace}

\newcommand{\cI}{\ensuremath{\mathcal{I}}\xspace}
\newcommand{\cM}{\ensuremath{\mathcal{M}}\xspace}
\newcommand{\cS}{\ensuremath{\mathcal{S}}\xspace}

\newcommand{\sW}{\ensuremath{\mathsf{W}}\xspace}
\newcommand{\ba}{\ensuremath{\mathbf{a}}\xspace}
\newcommand{\bb}{\ensuremath{\mathbf{b}}\xspace}
\newcommand{\bc}{\ensuremath{\mathbf{c}}\xspace}
\newcommand{\bd}{\ensuremath{\mathbf{d}}\xspace}
\newcommand{\Bf}{\ensuremath{\mathbf{f}}\xspace}
\newcommand{\bg}{\ensuremath{\mathbf{g}}\xspace}
\newcommand{\bh}{\ensuremath{\mathbf{h}}\xspace}
\newcommand{\bi}{\ensuremath{\mathbf{i}}\xspace}
\newcommand{\bm}{\ensuremath{\mathbf{m}}\xspace}

\newcommand{\bt}{\ensuremath{\mathbf{t}}\xspace}
\newcommand{\bu}{\ensuremath{\mathbf{u}}\xspace}
\newcommand{\bv}{\ensuremath{\mathbf{v}}\xspace}
\newcommand{\bw}{\ensuremath{\mathbf{w}}\xspace}
\newcommand{\bx}{\ensuremath{\mathbf{x}}\xspace}
\newcommand{\by}{\ensuremath{\mathbf{y}}\xspace}
\newcommand{\bz}{\ensuremath{\mathbf{z}}\xspace}
\newcommand{\bM}{\ensuremath{\mathbf{M}}\xspace}
\newcommand{\bP}{\ensuremath{\mathbb{P}}\xspace}


\newcommand{\euGam}{{\bar{\Gamma}}}
\newcommand{\euPhi}{{\bar{\Phi}}}
\newcommand{\euPsi}{{\bar{\Psi}}}
\newcommand{\eua}{{\bar{a}}}
\newcommand{\euc}{{\bar{c}}}
\newcommand{\eud}{{\bar{d}}}
\newcommand{\eui}{{i}}
\newcommand{\euq}{{\bar{q}}}
\newcommand{\euu}{{\bar{u}}}
\newcommand{\eus}{{\bar{s}}}
\newcommand{\euD}{{\bar{D}}}
\newcommand{\euH}{{\bar{H}}}
\newcommand{\euI}{{\bar{I}}}
\newcommand{\euM}{{\bar{M}}}
\newcommand{\euS}{{\ensuremath{\bar{\mathfrak{S}}}}\xspace}
\newcommand{\frS}{{\ensuremath{\mathfrak{S}}}\xspace}

\newcommand{\wit}{\ensuremath{\boldsymbol{\omega}}\xspace}

\newcommand{\phdash}{\phantom{{}'}} 

\floatstyle{plain}
\restylefloat{algorithm}


\title {An Effective Dichotomy for the\\ 
        Counting Constraint Satisfaction Problem}
\author{Martin Dyer%
\thanks{School of Computing, University of Leeds, Leeds, LS2 9JT, UK.}
        \and
        David Richerby%
\thanks{Department of Computer Science, University of Liverpool,
        Liverpool, L69 3BX, UK.\newline  This research was supported by EPSRC
        grants EP/E062172/1 ``The Complexity of Counting in Constraint
        Satisfaction Problems'' and EP/I012087/1  "Computational Counting".}}
\date{}

\begin{document}
\allowdisplaybreaks{}
\maketitle{}

\begin{abstract}
    \noindent Bulatov (2008) gave a dichotomy for the counting
    constraint satisfaction problem \numcsp. A problem from \numcsp is
    characterised by a constraint language $\Gamma\!$, a fixed, finite
    set of relations over a finite domain $D$. An instance of the
    problem uses these relations to constrain an arbitrarily large
    finite set of variables.  Bulatov showed that the problem of
    counting the satisfying assignments of instances of any problem
    from \numcsp is either in polynomial time (\fp) or is \numpc. His
    proof draws heavily on techniques from universal algebra and
    cannot be understood without a secure grasp of that field. We give
    an elementary proof of Bulatov's dichotomy, based on succinct
    representations, which we call \emph{frames}, of a class of highly
    structured relations, which we call \emph{strongly
      rectangular}. We show that these are precisely the relations
    which are invariant under a \emph{\maltsev polymorphism}. En
    route, we give a simplification of a decision algorithm for
    strongly rectangular constraint languages, due to Bulatov and
    Dalmau (2006). We establish a new criterion for the \numcsp
    dichotomy, which we call \emph{strong balance}, and we prove that
    this property is decidable. In fact, we establish membership in
    \np. Thus, we show that the dichotomy is effective,
    resolving the most important open question concerning the \numcsp
    dichotomy.
\end{abstract}


\section{Introduction}

The constraint satisfaction problem (\csp) is ubiquitous in computer
science.  Problems in such diverse areas as Boolean logic, graph
theory, database query evaluation, type inference, scheduling and
artificial intelligence can be expressed naturally in the setting of
assigning values from some domain to a collection of variables,
subject to constraints on the combinations of values taken
by given tuples of variables~\cite{FedVar98}.  \csp is directly
equivalent to the problem of evaluating conjunctive queries on
databases~\cite{KolVar98a} and to the homomorphism problem for
relational structures~\cite{FedVar98}.  Weighted versions of \csp
appear in statistical physics, where the total weight of solutions
corresponds to the so-called partition function of a spin
system~\cite{DyeGre00}.

For example, suppose we wish to know if a graph is 3-colourable.  The
question we are trying to answer is whether we can assign a colour
(domain value) to each vertex (variable) such that, whenever two
vertices are adjacent in the graph, they receive a different colour
(constraints).  Similarly, by asking if a 3-CNF formula is
satisfiable, we are asking if we can assign a truth value to each
variable such that every clause contains at least one true literal.

Since it includes both 3-\textsc{colourability} and 3-\textsc{sat},
this general form of the \csp, known as \emph{uniform} \csp, is \npc{}.
Therefore, attention has focused on \emph{nonuniform} \csp.  Here, we
fix a domain and a finite \emph{constraint language} $\Gamma\!$, a set
of relations over that domain.  Having fixed $\Gamma\!$, we only allow
constraints of the form, ``the values assigned to the variables $v_1,
\dots, v_r$ must be a tuple in the $r$-ary relation $R\in\Gamma$'' (we
define these terms formally in Section~\ref{sec:Defs}).  We write
$\csp(\Gamma)$ to denote nonuniform \csp with constraint language
$\Gamma\!$.  To express 3-\textsc{colourability} in this setting, we
just take $\Gamma$ to be the disequality relation on a set of three
colours.  3-\textsc{sat} is also expressible: to see this, observe
that, for example, the clause $\neg x \vee y \vee \neg z$ corresponds
to the relation $\{\texttt{t}, \texttt{f}\}^3 \setminus
\{\texttt{t}, \texttt{f}, \texttt{t}\}$, where $\texttt{t}$ indicates ``true'' and $\texttt{f}$ ``false'', and that the other seven patterns of negations within a clause can be expressed similarly.

Thus, there are languages $\Gamma$ for which $\csp(\Gamma)$ is \npc{}.
Of course, we can also express polynomial-time problems such as
2-\textsc{Colourability} and 2-\textsc{Sat}.  Feder and Vardi~\cite{FedVar98} conjectured that these are the only possibilities: that is, for all $\Gamma\!$, $\csp(\Gamma)$ is in \ptime{} or
is \npc{}.  To date, this conjecture remains open but it is known to hold in special cases~\cite{Schaef78, Bulato06, HelNes90}.  Recent
efforts to resolve the conjecture have focused on techniques from
universal algebra~\cite{DenWis02}.

There can be no dichotomy for the whole of \np{}, since
Ladner~\cite{Ladner75} has shown that either $\ptime=\np$ or there is
an infinite hierarchy of complexity classes between them. Hence,
assuming that $\ptime\neq\np$, there exist problems in \np{} that are
neither complete for the class nor in \ptime{}.  However, it is not
unreasonable to conjecture a dichotomy for \csp, since there are \np{}
problems, such as graph Hamiltonicity and even connectivity, that
cannot be expressed as $\csp(\Gamma)$ for any finite $\Gamma$.  This
follows from the observation that any set $S$ of structures (e.g.,
graphs) that is definable in \csp{} has the property that, if $A\in S$
and there is a homomorphism $B\to A$, then $B\in S$; neither the set
of Hamiltonian nor connected graphs has this property.  Further,
Ladner's theorem is proven by a diagonalisation that does not seem to
be expressible in \csp~\cite{FedVar98}.

In this paper, we consider the \emph{counting} version of $\csp(\Gamma)$,
which we denote \ncsp{}.  Rather than ask whether an instance of
$\csp(\Gamma)$ has a satisfying assignment, we ask how many satisfying assignments there are.   The corresponding conjecture was that, for every $\Gamma\!$, \ncsp{} is either computable in polynomial time or
complete for \nump{}.  We give formal
definitions in the next section but, informally, \nump{} is the analogue of \np{} for counting problems. Again, a modification of Ladner's proof shows that there can be no dichotomy for the whole of \nump. Note that the decision version of any problem in \np is trivially reducible to the corresponding counting problem in \nump: if we can count the number of solutions, we can certainly determine whether one exists.  However, the converse cannot hold under standard assumptions about complexity theory: there are well-known polynomial-time algorithms that determine whether a graph admits a perfect matching but it is \numpc{} to count the perfect matchings of even a bipartite
graph~\cite{Valian79a}.

Dichotomies for \ncsp{} are known in several special
cases~\cite{CreHer96, DyeGre00, DyGoPa07, DyGoJe08,CaLuXi09}, each consistent with the conjecture that \ncsp{} is always either polynomial-time computable or \numpc{}.  However, Bulatov recently made a major breakthrough by proving a dichotomy for all $\Gamma$~\cite{Bulato07,Bulato08}.

Bulatov's proof makes heavy use of the techniques of universal
algebra.  A relation is said to be \emph{pp-definable} over a
constraint language $\Gamma$ if it can be defined from the relations
in $\Gamma$ by a logical formula that uses only conjunction and
existential quantification.  Geiger~\cite{Geiger68} showed that an
algebra can be associated with the set of pp-definable relations over
$\Gamma$ and Bulatov examines detailed properties of the
\emph{congruence lattice} of this algebra.\footnote{We will not define
  these terms from universal algebra, as they are not needed
  for our analysis.}  The structure of quotients in this lattice must have
certain algebraic properties, which can be derived from \emph{tame
  congruence theory}~\cite{HobMcK88} and \emph{commutator
  theory}~\cite{FreMck87}. Bulatov constructs an algorithm for the
polynomial-time cases, based on decomposing this
congruence lattice and using the structure of its quotients. However,
he is only able to do this, in general, by transforming the relation
corresponding to the input instance to one which is a \emph{subdirect
  power}. It is even nontrivial to prove that this transformation
inherits the required property of the original. His paper runs to some
43 pages and is very difficult to follow for anyone who is not expert
in these areas.  The criterion of Bulatov's dichotomy is based
on infinite algebras constructed from $\Gamma\!$ and was not shown to be
decidable. It also seems difficult to apply it to recover
the special cases mentioned above.

Our main results are a new and elementary proof of Bulatov's theorem and a proof that the dichotomy is effective. Thus, we answer, in the affirmative, the major open question in~\cite{Bulato08}. We follow Bulatov's approach by working with the relation  over $\Gamma$ determined by the input, but we require almost no machinery from universal algebra. The little that is used is defined and explained below. We develop a different criterion for the \numcsp dichotomy, \emph{strong balance}, which is based on properties of ternary relations definable in the constraint language. We show that it is equivalent to Bulatov's \emph{congruence singularity} criterion.

Using strong balance, we construct a relatively simple iterative
algorithm for the polynomial-time cases, which requires no algebraic properties.
In fact, the bound on the time complexity of our counting algorithm is
no worse than that for deciding if the input has satisfying
assignments.

We then use our criterion to prove decidability of the \numcsp dichotomy. We show that deciding strong balance is in \np, where the input size is that of $\Gamma\!$. Of course, complexity is not a central issue in the nonuniform model of \numcsp, since $\Gamma$ is considered to be a constant. It is only decidability that is important. However, the complexity of deciding the dichotomy seems an interesting computational problem in its own right.



\subsection{Our proofs}

Our proofs are almost entirely self-contained and should be accessible
to readers with no knowledge of universal algebra and very little
background in \csp.  We use reductions from two previous papers on
counting complexity, by Dyer and Greenhill~\cite{DyeGre00} and by
Bulatov and Grohe~\cite{BulGro05}. We also use results from Bulatov
and Dalmau~\cite{BulDal07}, but we include short proofs of these.  The
papers~\cite{BulGro05,BulDal07} deal partly with ideas from universal
algebra, but we make no use of those ideas.  We use only one idea from
universal algebra, that of a \emph{\maltsev polymorphism}.  This will
be defined and explained in Section~\ref{sec:Defs} below.

The proof is based around a succinct representation for relations preserved by a \maltsev polymorphism. We call such relations \emph{strongly rectangular} for reasons which will become clear. Our representation is called a \emph{frame}, and is similar to the \emph{compact representation} of Bulatov and Dalmau~\cite{BulDal06}. Frames are smaller than compact representations, since they avoid some redundancy in the representation.

We define a \emph{frame} for a relation $R\subseteq D^n$ to be a
relation $F\subseteq R$ with the following two properties.  First,
whenever $R$ contains a tuple with $i$\/th component $a$, $F$ also
contains such a tuple.  Second, for $1< i\leq n$ say that a set
$S\subseteq D$ is $i$-equivalent in $R$ if $R$ contains tuples which
agree on their first $i-1$ elements and whose $i$th elements are
exactly the members of $S$.  Any set that is $i$-equivalent in $R$
must also be $i$-equivalent in $F$, but note that there may be several
common prefixes for $S$ in $R$ when only one is required in $F$.  We
show that every $n$-ary strongly rectangular relation over $D$ has a
\emph{small} frame of cardinality at most $|D|n$, whereas $R$ may have
cardinality up to $|D|^n\!$.  Further, we show how to construct such a
frame efficiently and how to recover a strongly rectangular relation
$R$ from any of its frames.

Now, suppose we have an instance $\Phi$ of $\ncsp$ for some strongly
rectangular constraint language $\Gamma\!$, with $m$ constraints in
$n$ variables.  Using methods similar to those of Bulatov and
Dalmau~\cite{BulDal06}, we construct a frame for the solution set of
$\Phi$ in polynomial time, by starting with a frame for $D^n$ and
introducing the constraints one at a time.  A frame is empty if, and
only if, it represents the empty relation so, at this point, we have
re-proven Bulatov and Dalmau's result that there is a polynomial-time
algorithm for the decision problem $\csp(\Gamma)$ for any strongly
rectangular constraint language $\Gamma\!$.   We give an explicit time
complexity for this algorithm, which is $\bigO(mn^4)$ for fixed
$\Gamma\!$. Bulatov and Dalmau~\cite{BulDal06} gave no time estimate,
showing only that their procedure runs in polynomial time.

Any ternary relation $R\subseteq A_1\times A_2\times A_3$ (where the
$A_i$ need not be disjoint) induces a matrix $M=(m_{xy})$ with rows
and columns indexed by $A_1$ and $A_2$ and with
\begin{equation*}
   m_{xy} = |\{z: (x,y,z)\in R\}|\,.
\end{equation*}
We say that $R$ is \emph{balanced} if $M$'s rows and columns can be
permuted to give a block-diagonal matrix in which every block has rank one,
and that a relation $R\subseteq D^n$ for any $n>3$ is balanced if
every expression of it as a ternary relation in $D^k\times
D^\ell\times D^m$ ($k, \ell, m\geq 1$, $k + \ell + m = n$) is
balanced.  A constraint language $\Gamma$ is \emph{strongly balanced}
if every relation of arity three or more that is pp-definable relation
over $\Gamma$ is balanced.  Via a brief detour through weighted
\numcsp{}, we show that \ncsp{} is \numpc{} if $\Gamma$ is not
strongly balanced.

If $\Gamma$ is strongly balanced, we compute the number of satisfying
assignments to a $\csp(\Gamma)$ instance as follows.  Let $R\subseteq
D^n$ be the set of satisfying assignments.  First, we construct a
small frame $F$ for $R$, as above.  If $R$ is unary, we have $F=R$ so
we can trivially compute $|R|$.

Otherwise, for $1\leq i< j\leq n$, let $N_{i,j}(a)$ be the number of
prefixes $u_1\dots u_i$ such that there is a tuple $u_1\dots
u_n\in R$ with $u_j = a$.  In particular, then, summing the values of
$N_{n-1,n}(-)$ gives $|R|$.  Since the functions $N_{1,j}$ can be
calculated easily from the frame, we
just need to show how to compute $N_{i, j}$ for each $j>i$, given
$N_{i-1,j}$ for each $j\geq i$.  Writing $[k]$ for the set $\{1, \dots,
k\}$, we can consider the set $\proj[{[i]} \cup \{j\}] R$ to be a
ternary relation on $\proj[{[i-1]}] R \times \proj[i] R \times
\proj[j] R$.  $R$ is strongly balanced so the matrix given by $M_{xy}
= |\{\bu: (\bu,x,y)\in \proj[{[i]} \cup \{j\}] R\}|$ is a rank-one block
matrix and the sum of the $a$-indexed column of the matrix is
$N_{i,j}(a)$.

By taking quotients with respect to certain
congruences, we obtain another rank-one block matrix $\widehat{M}$,
whose block structure and row and column sums we can determine.  A key
fact about rank-one block matrices is that this information is
sufficient to recover the entries of the matrix.  This allows us to
recover $M$ and, hence, compute the values $N_{i,j}(a)$ for each $j$
and $a$.  Iterating, we can determine the function $N_{n-1,n}$ and,
hence, compute $|R|$.

Finally, we show that the strong balance property is decidable. Our proof of decidability rests on showing that, if $\Gamma$ is not strongly balanced, then there is a counterexample with a number of variables that is only polynomial in the size of $\Gamma\!$. We do this by reformulating the strong balance criterion for a given formula $\Psi$ as a question concerning counting assignments in a formula derived from $\Psi$. This reformulation enables us to apply a technique of Lov\'asz~\cite{Lovasz67}. The technique further allows us to recast strong balance in terms of the symmetries of a fixed structure, that is easily computable from $\Gamma\!$. We are thus able to show that deciding strong balance is in \np, where the input size is that of $\Gamma\!$.


\subsection{Organisation of the paper}

The remainder of the paper is organised as follows.  Preliminary
definitions and notation are given in Section~\ref{sec:Defs}.  In
Section~\ref{sec:Rect}, we define the notion of strong rectangularity
that we use throughout the paper and, in
Section~\ref{sec:Strong-rect}, we further study the properties of
strongly rectangular relations and introduce frames, our succinct
representations of such relations.  We give an efficient procedure for
constructing frames in Section~\ref{sec:Frame}.  In
Section~\ref{sec:Counting}, we introduce counting problems and, in
Section~\ref{sec:Dichotomy}, we define the key notion of a strongly
balanced constraint language and prove that \ncsp is solvable in
polynomial time if $\Gamma$ is strongly balanced and is \numpc
otherwise. In Section~\ref{sec:Decide}, we show that our dichotomy is decidable, in fact in the complexity class \np.
Some concluding remarks appear in Section~\ref{sec:Conclude}.


\section{Definitions and notation}
\label{sec:Defs}

In this section, we present the definitions and notation used
throughout the paper.  We defer to Section~\ref{sec:Decide} material
relating to certain classes of functions that are used only in that
section.

For any natural number $n$, we write $[n]$ for the set $\set{1, \dots,
n}$.


\subsection{Relations and constraints}

Let $D=\set{d_1, d_2, \dots, d_q}$ be a finite \emph{domain} with
$q=|D|$.  We will always consider $q$ to be a constant and we assume
that $q\geq 2$ to avoid trivialities.  A
\emph{constraint language} $\Gamma$ is a finite set of finitary
relations on $D$, including the binary equality relation
$\set{(d_i,d_i): i\in[q]}$, which we denote by $\equ$.  We will
call $\frS=(D, \Gamma)$ a \emph{relational structure}.  We may view an
$r$-ary relation $H$ on $D$ with $\ell = |H|$ as an $\ell\times r$
matrix with elements in $D$.  Then a tuple $\bt \in H$ is any row of
this matrix.  We will usually write tuples in the standard notation,
for example $(t_1,t_2,\ldots,t_r)$. For brevity, however, we also
write tuples in string notation, for example, $t_1t_2\ldots t_r$, where
this can cause no confusion.

If $R$ is an $n$-ary relation and $\bi=(i_1, \dots, i_k)$ are distinct
elements of $[n]$, we write \proj[\bi]{R} for the \emph{projection} of
$R$ on $\bi$, the relation containing all tuples $(a_{i_1}, \dots,
a_{i_k})$ such that $(a_1, \dots, a_n)\in R$ for some values of the
$a_j$ where $j\notin \bi$. For $I\subseteq [n]$, we write
$\proj[I]{R}$ as shorthand for $\proj[\bi]{R}$, where $\bi$ is the
enumeration of $I$'s elements in increasing order.  For the relation
$\set{\bt}$, where $\bt$ is a single $n$-tuple, we write
\proj[\bi]{\bt} rather than \proj[\bi]{\set{\bt}}.

We define the \emph{size} of a relation $H$ as $\size{H} = \ell r$, the
number of elements in its matrix, and the size of $\Gamma$ as
$\size{\Gamma} = \sum_{H\in\Gamma}\size{H}$. To avoid trivialities,
we will assume that every relation $H\in \Gamma$ is nonempty, i.e.\@ that $\size{H}>0$. We will also assume that every $d\in D$ appears in a tuple of some relation $H\in\Gamma\!$. If this is not so for some $d$, we can remove it from $D$. It then follows that $\size{\Gamma}\geq q$.

Let $V = \set{\nu_1, \nu_2, \dots, \nu_n}$ be a finite
set of \emph{variables}.  An \emph{assignment} is a function $\bx\colon V\to
D$.  We will abbreviate $\bx(\nu_i)$ to $x_i$.  If $\set{i_1, i_2,
\dots, i_r}\subseteq[n]$, we write $H(x_{i_1}, x_{i_2}, \dots,
x_{i_r})$ for the relation $\con=\set{\bx:(x_{i_1}, x_{i_2}, \dots,
x_{i_r})\in H}$ and we refer to this as a \emph{constraint}.  Then
$(\nu_{i_1}, \nu_{i_2}, \dots, \nu_{i_r})$ is the \emph{scope} of the
constraint and we say that $\bx$ is a \emph{satisfying} assignment
for the constraint if $\bx\in \con$.

A \emph{$\Gamma$-formula} $\Phi$ in a set of variables $\set{x_1, x_2,
\dots, x_n}$ is a conjunction of constraints $\con_1 \wedge \cdots
\wedge \con_m$.  We will identify the variables with the $x_i$ above,
although strictly they are only a \emph{model} of the formula.  Note that the precise labelling of the variables in $\Phi$ has no real significance. A formula remains the same if its variables are relabelled under a bijection to any other set of variable names.

A $\Gamma$-formula $\Phi$ describes an instance of the \emph{constraint
satisfaction problem} (\csp) with \emph{constraint language} $\Gamma\!$.
A satisfying assignment for $\Phi$ is an assignment that satisfies all
$\con_i$ ($i\in[m]$).  The set of all satisfying assignments for $\Phi$
is the \emph{$\Gamma$-definable} relation $R_\Phi$ over $D$.  We
make no distinction between $\Phi$ and $R_\Phi$, unless this could
cause confusion.



\subsection{Definability}

A \emph{primitive positive} (pp) formula $\Psi$ is a $\Gamma$-formula
$\Phi$ with existential quantification over some subset of the
variables.  A satisfying assignment for $\Psi$ is any satisfying
assignment for $\Phi$.  The unquantified (free) variables then
determine the \emph{pp-definable} relation $R_\Psi$, a projection of
$R_\Phi$.  Note that any permutation of the columns of a pp-definable
relation is, itself, pp-definable.  Again, we make no distinction
between $\Psi$ and $R_\Psi$.

The set of all $\Gamma$-definable relations is denoted by
$\csp(\Gamma)$ and the set of all pp-definable relations is the
\emph{relational clone} \cocl.  If $\Gamma = \set{H, \equ}$, we just
write \cocl[H].  An \emph{equivalence relation} in \cocl is
called a \emph{congruence}.


\subsection{Polymorphisms}

A $k$-ary \emph{polymorphism} of $\Gamma$ is any function $\psi\colon
D^k\to D$, for some $k$, that preserves all the relations in
$\Gamma\!$.  By this we mean that, for every $r$-ary relation
$H\in\Gamma$ and every sequence $\bu_1, \dots, \bu_k$ of $r$-tuples in
$H$,
\begin{equation*}
   \psi(\bu_1,\bu_2,\dots,\bu_k)
        = \big(\psi(u_{1,1}, \dots, u_{k,1}),
               \psi(u_{1,2}, \dots, u_{k,2}),
               \, \dots,\,
               \psi(u_{1,r}, \dots, u_{k,r})
          \big)\in H\,.
\end{equation*}
It is well known that any polymorphism of $\Gamma$ preserves all
relations in $\cocl$ (see Lemma~\ref{lem15}).

A \emph{\maltsev polymorphism} of $\Gamma$ is a polymorphism
$\ph\colon D^3\to D$ such that, for all $a,b\in D$, $\ph(a,b,b) = \ph(b,b,a) = a$.  (So, in particular, $\ph(a,a,a) = a$.)  We will usually
present calculations using \ph in a four-row table.  The first three
rows give the triple of ``input'' tuples $\bt_1,\bt_2,\bt_3$ and the
fourth gives the ``output'' $\ph(\bt_1,\bt_2,\bt_3)$.  For example,
the  table below indicates that $\ph(a\bu, a\bv, b\bw)=(b, \ph(\bu, \bv, \bw))$.
\begin{equation*}
   \begin{array}{c@{\hspace{5mm}}c}
        a & \bu \\
        a & \bv \\
        b & \bw \\ \hline
        b & \ph(\bu,\bv,\bw)\,.
    \end{array}
\end{equation*}


\subsection{Complexity}

For any alphabet $\Sigma$, we denote by \fp{} the class of
functions $f\colon \Sigma^*\to \nats$ for which there is a
deterministic, polynomial-time Turing machine that, given input
$x\in\Sigma^*\!$, writes $f(x)$ (in binary) to its output tape.
\nump{} is the class of functions $f\colon \Sigma^*\to \nats$ for which some  nondeterministic, polynomial-time Turing machine has exactly $f(x)$ accepting computations for every input $x\in\Sigma^*\!$.

Completeness for \nump is defined with respect to polynomial-time Turing
reductions~\cite{Valian79b}, also known as \emph{Cook reductions}.
For functions $f,g\colon \Sigma^*\to \nats$, a \emph{polynomial-time
  Turing reduction} from $f$ to $g$ is a polynomial-time oracle Turing
machine that can compute $f$ using an oracle for $g$.  A function
$f\in\nump$ is \emph{\nump{}-complete} if there is a Cook reduction to
$f$ from every problem in \nump.

The class \nump{} plays a role in the complexity of counting problems analogous to that played by \np{} in decision problems.  Note, however, that,
subject to standard complexity-theoretic assumptions, \nump{}-complete
problems are much harder than \np{}-complete problems. Toda has
shown that $\ptime^{\nump}$ includes the whole of the polynomial-time
hierarchy~\cite{Toda89}, whereas $\ptime^{\np}$ is just the
hierarchy's second level.


\section{Rectangular relations}
\label{sec:Rect}

A binary relation $B\subseteq A_1\times A_2$ is called
\emph{rectangular} if $(a,c), (a,d), (b,c)\in B$ implies $(b,d)\in B$
for all $a,b\in A_1$, $c,d\in A_2$.  We may view $B$ as an undirected
bipartite graph $\cG_B$, with vertex bipartition $A_1$, $A_2$ and edge
set $E_B=\set{\set{a_1,a_2}: (a_1,a_2)\in B}$.  Note that we do not
insist that $A_1\cap A_2=\emptyset$ but, if $a\in A_1\cap A_2$, $a$ is
regarded as labelling two distinct vertices, one in $A_1$ and one in
$A_2$. Formally, $A_1$ and $A_2$ should be replaced by the disjoint
vertex sets $\{1\}\times A_1$ and $\{2\}\times A_2$ but this would
unduly complicate the notation. We will assume that $\proj{B}=A_i$
($i=1,2$), so that $\cG_B$ has no isolated vertices. The connected components
of $\cG_B$ will be called the \emph{blocks} of $B$.

Rectangular relations have very simple structure.
\begin{lemma}
\label{lem10}
    If $B$ is rectangular, $\cG_B$ comprises $k$ bipartite cliques,
    for some $k\leq\min\set{|A_1|,|A_2|}$.
\end{lemma}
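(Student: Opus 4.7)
The plan is to fix a connected component $C$ of $\cG_B$ with vertex sets $C_1 = C \cap A_1$ and $C_2 = C \cap A_2$, and show directly that every pair $(a,d) \in C_1 \times C_2$ lies in $B$, i.e.\ $C$ is a bipartite clique. The bound on the number of components then follows immediately: since $\cG_B$ has no isolated vertices by assumption, each component contains at least one vertex from each of $A_1$ and $A_2$, so the number of components cannot exceed $\min\{|A_1|,|A_2|\}$.

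To prove that $(a,d) \in B$ for all $a \in C_1$ and $d \in C_2$, I would induct on the length of a shortest path from $a$ to $d$ in $\cG_B$. Such a path must have odd length since $\cG_B$ is bipartite. The base case (length $1$) is immediate, since an edge of $\cG_B$ is exactly a pair in $B$. For the inductive step, consider a path $a = v_0, v_1, \ldots, v_{2k+1} = d$ of length $2k+1 \geq 3$. By the induction hypothesis applied to the subpath $v_0, v_1, \ldots, v_{2k-1}$, we have $(v_0, v_{2k-1}) \in B$. We also have $(v_{2k}, v_{2k-1}) \in B$ and $(v_{2k}, v_{2k+1}) \in B$ as consecutive edges of the path. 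Applying the rectangularity condition with $a = v_0$, $b = v_{2k}$, $c = v_{2k-1}$, $d = v_{2k+1}$ yields $(v_0, v_{2k+1}) \in B$, completing the induction.

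The proof is essentially a short, clean induction, so there is no significant obstacle. The only small point of care is to orient the rectangularity condition correctly — one must match the pattern ``three corners of a rectangle force the fourth'' to the configuration supplied by the path, being careful to pick the intermediate vertex $v_{2k-1} \in A_2$ that is adjacent to both $v_0$ (by induction) and $v_{2k}$ (along the path). Once this pairing is chosen, the conclusion is immediate and the bound on $k$ is a one-line corollary.
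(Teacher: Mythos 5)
Your proof is correct and uses essentially the same idea as the paper: take a shortest path between two vertices in the same component and apply rectangularity to four consecutive vertices. The paper phrases it as a contradiction (a non-adjacent pair in a component forces a shortest path of length at least~$3$, and the first four vertices of such a path violate rectangularity), whereas you frame it as a direct induction on shortest-path distance using the last three edges together with the inductive hypothesis; these are interchangeable. One small nit: with the paper's convention $(a,c),(a,d),(b,c)\in B\Rightarrow(b,d)\in B$, the assignment that matches your data $(v_0,v_{2k-1}),(v_{2k},v_{2k-1}),(v_{2k},v_{2k+1})\in B$ is $a=v_{2k}$, $b=v_0$, $c=v_{2k-1}$, $d=v_{2k+1}$ (i.e.\ your $a$ and $b$ are swapped), but since the condition is universally quantified over $a,b\in A_1$ this is immaterial, and you already flagged the need for care here.
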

\begin{proof}
    Let $k$ be the number of connected components of $\cG_B$.  Every
    vertex is included in an edge so $k\leq \min\set{|A_1|, |A_2|}$.
    Consider any component $C$ and suppose it is not a bipartite
    clique.  Let $a\in A_1\cap C$, $z\in A_2\cap C$ be such that
    $\set{a,z}\notin
    E_B$.  Thus, a shortest path in $C$ from $a$ to $z$ has length at
    least~$3$.  If $a$, $b$, $c$, $d$ are the first four vertices on
    such a path, then $\set{a,b}, \set{b,c}, \set{c,d}\in E_B$, but
    $\set{a,d}\notin E_B$ as, otherwise, there would be a shorter
    path from $a$ to $z$.  But this is equivalent to $(a,b), (c,b),
    (c,d)\in B$ and $(a,d)\notin B$, contradicting rectangularity.\qquad
\end{proof}

Where appropriate, we do not distinguish between $B$ and $\cG_B$.  For example, we will refer to a connected component of $\cG_B$ as a block.

\begin{corollary}
\label{cor10}
    The relations
    \begin{equation*}
       \theta_1(x_1, x_2) \equiv \exists y\,\big(
                                         B(x_1, y)\wedge B(x_2, y)\big)
        \quad \text{and} \quad
        \theta_2(y_1, y_2) \equiv \exists x\,\big(
                                         B(x, y_1)\wedge B(x, y_2)\big)
    \end{equation*}
    are equivalence relations on $\proj[1]{B}$, $\proj[2]{B}$
    respectively.  The equivalence classes of $\theta_1$ and $\theta_2$
    are in one-to-one correspondence.
\end{corollary}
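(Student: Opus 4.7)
The plan is to derive both statements directly from the block decomposition provided by Lemma~\ref{lem10}. First, I would rephrase the definitions: $\theta_1(x_1,x_2)$ holds exactly when $x_1,x_2\in A_1$ share a common neighbour in $\cG_B$, and $\theta_2(y_1,y_2)$ holds exactly when $y_1,y_2\in A_2$ share a common neighbour. Since Lemma~\ref{lem10} tells us that $\cG_B$ is a disjoint union of bipartite cliques $C_1,\dots,C_k$, two vertices of $A_1$ share a neighbour in $A_2$ if and only if they both meet the same component $C_j$. Hence $\theta_1$ is precisely the ``same block'' relation on $A_1$, and similarly $\theta_2$ is the ``same block'' relation on $A_2$.

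Next I would check the three equivalence-relation axioms for $\theta_1$ (the argument for $\theta_2$ is identical). Reflexivity uses the standing assumption $\proj[i]{B}=A_i$: every $x_1\in A_1$ is incident to some edge $\{x_1,y\}\in E_B$, so $B(x_1,y)\wedge B(x_1,y)$ witnesses $\theta_1(x_1,x_1)$. Symmetry is immediate from the symmetric form of the defining formula. Transitivity is where the rectangularity of $B$ does the work: if $\theta_1(x_1,x_2)$ via common neighbour $y$ and $\theta_1(x_2,x_3)$ via common neighbour $y'$, then $x_2$ lies in the same block as both $x_1$ and $x_3$, so $x_1$ and $x_3$ lie in a common block, whence (using the bipartite-clique structure) they share a common neighbour and $\theta_1(x_1,x_3)$ holds.

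Finally, for the one-to-one correspondence, I would observe that each block $C_j$ contributes exactly one $\theta_1$-class, namely $A_1\cap C_j$, and exactly one $\theta_2$-class, namely $A_2\cap C_j$; by the no-isolated-vertex assumption, both of these sets are nonempty. The map $A_1\cap C_j\mapsto A_2\cap C_j$ is therefore a well-defined bijection between the classes of $\theta_1$ and those of $\theta_2$.

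There is no real obstacle here: all the structural content is already packaged in Lemma~\ref{lem10}, and the proof is essentially mechanical once one identifies $\theta_1$ and $\theta_2$ with the ``same block'' relations on the two sides of the bipartition. The only subtlety worth flagging explicitly is reflexivity, which relies on the convention that $\cG_B$ has no isolated vertices.
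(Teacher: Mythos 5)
Your proof is correct and follows essentially the same route as the paper's: both derive $\theta_1$ and $\theta_2$ directly from the block decomposition of Lemma~\ref{lem10}, identifying them with the ``same block'' relations and reading off the bijection between classes from the blocks. You have simply spelled out the verifications (reflexivity via no isolated vertices, transitivity via the bipartite-clique structure) that the paper leaves implicit.
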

\begin{proof}
    The blocks of $B$ induce partitions of $A_1$ and $A_2$ which are
    in one-to-one correspondence.  These clearly define the equivalence
    classes of $\theta_1$ and $\theta_2$.\qquad
\end{proof}

\begin{corollary}
\label{cor15}
    If $\Gamma$ is a constraint language and $B\in\cocl$ is
    rectangular, then the relations $\theta_1$ and $\theta_2$ of
    Corollary~\ref{cor10} are congruences in \cocl.
\end{corollary}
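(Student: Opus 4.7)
The plan is essentially immediate from the definitions: a congruence in $\cocl$ is just an equivalence relation lying in $\cocl$, and Corollary~\ref{cor10} already shows that $\theta_1$ and $\theta_2$ are equivalence relations on $\proj[1]{B}$ and $\proj[2]{B}$ respectively. What remains is to verify that each $\theta_i$ is pp-definable over $\Gamma\!$, hence a member of the relational clone $\cocl$.

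To that end, I would unwind the definitions. Since $B\in\cocl$, there is a pp-formula $\Psi_B(x,y)$ over $\Gamma$ defining $B$. The defining formulas given in Corollary~\ref{cor10},
\begin{equation*}
   \theta_1(x_1,x_2) \equiv \exists y\,\big(B(x_1,y)\wedge B(x_2,y)\big),
   \qquad
   \theta_2(y_1,y_2) \equiv \exists x\,\big(B(x,y_1)\wedge B(x,y_2)\big),
\end{equation*}
are then obtained from two disjoint renamed copies of $\Psi_B$ by conjunction and by existentially quantifying the shared variable. The result is still a pp-formula over $\Gamma\!$, so $\theta_1,\theta_2\in\cocl$.

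Strictly speaking, one also needs that the relational clone $\cocl$ is closed under conjunction and existential projection, which is standard and will be stated later in the paper as part of Lemma~\ref{lem15} (or follows directly from the definition of pp-definability). The only mild subtlety is that the defined relations are relations on $\proj[1]{B}$ and $\proj[2]{B}$, not on all of $D$; but since every $d\in D$ is assumed to appear in some relation of $\Gamma\!$, and since the pp-formula simply does not constrain elements outside $\proj[i]{B}$ to belong to $\theta_i$, this poses no difficulty — $\theta_i$ as a subset of $D^2$ is exactly what the displayed formula defines.

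I do not foresee any real obstacle: the lemma is essentially a bookkeeping observation that combines Corollary~\ref{cor10} (equivalence) with closure of $\cocl$ under conjunction and existential quantification (membership in the clone). The proof should occupy only a few lines.
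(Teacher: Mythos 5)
Your proposal is correct and is essentially the paper's own proof: the paper simply notes that since $B$ has a pp-definition, so do $\theta_1$ and $\theta_2$, which together with Corollary~\ref{cor10} gives the result. You have spelled out the same one-line argument in more detail, but there is no difference in approach.
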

\begin{proof}
    Since $B$ has a pp-definition, so too do $\theta_1$ and $\theta_2$.
\qquad\end{proof}

We say that a relation $R\subseteq D^n$ for $n\geq 2$ is rectangular
if every expression of $R$ as a binary relation in $D^k\times D^{n-k}$
($1\leq k < n$) is rectangular.  We call a constraint
language $\Gamma$ \emph{strongly rectangular} if every relation
$B\in\cocl$ of arity at least~2 is rectangular.  If $R\subseteq D^n$
is a relation, we say that it is strongly rectangular if \cocl[R] is
strongly rectangular.  If $R\in\cocl$ for a strongly rectangular
$\Gamma\!$, then $R$ is strongly rectangular, since
$\cocl[R]\subseteq\cocl$.

From the definition, it is not clear whether the strong
rectangularity of $\Gamma$ is even decidable, since \cocl is an
infinite set.  However, it is decidable, as we will now show. The following result is usually proven in an algebraic setting. That proof is not difficult, but requires an understanding of concepts from universal algebra, such as \emph{free algebras} and \emph{varieties}~\cite{DenWis02}. Therefore, we will give a proof in the relational setting. Moreover, we believe that this proof will provide rather more insight for the reader whose primary interest is in relations.

First, we require the following lemma, which is well-known from the
folklore; we provide a proof for completeness.

\begin{lemma}
\label{lem15}
    $\ph$ is a polymorphism of $\Gamma$ if, and only if, it is a
    polymorphism of $\cocl$.
\end{lemma}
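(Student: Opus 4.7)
The plan is to prove the two directions separately. The \emph{if} direction is immediate: since $\equ\in\Gamma$ and every $H\in\Gamma$ trivially pp-defines itself, we have $\Gamma\subseteq\cocl$, so any polymorphism of $\cocl$ is, a fortiori, a polymorphism of $\Gamma$. The substance is in the \emph{only if} direction, which I will prove by structural induction on a pp-formula defining an arbitrary $R\in\cocl$. Concretely, $R$ is of the form
\begin{equation*}
    R(x_1,\dots,x_n) \;\equiv\; \exists y_1\cdots \exists y_m \bigwedge_{j=1}^{\ell} H_j(\bz_j),
\end{equation*}
where each $H_j\in\Gamma$ and $\bz_j$ is a tuple of variables drawn from $x_1,\dots,x_n,y_1,\dots,y_m$. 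I will show that preservation by $\ph$ propagates through each of the three constructors: atomic constraints, conjunction, and existential quantification.

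For the base case, consider an atomic constraint $H(x_{i_1},\dots,x_{i_r})$ with $H\in\Gamma$, viewed as an $n$-ary relation on the full variable set. If $\ba_1,\dots,\ba_k$ are satisfying assignments, then each of the $r$-tuples $\proj[(i_1,\dots,i_r)]{\ba_s}$ lies in $H$, and since $\ph$ preserves $H$ by hypothesis, applying $\ph$ componentwise to these $r$-tuples yields another tuple of $H$; this is precisely $\proj[(i_1,\dots,i_r)]{\ph(\ba_1,\dots,\ba_k)}$, so the full assignment $\ph(\ba_1,\dots,\ba_k)$ still satisfies the constraint. (Note: $\equ\in\Gamma$ by convention, so no separate argument is needed for equality.)

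For the inductive cases, closure under conjunction is immediate: if $\ph$ preserves $R_1$ and $R_2$ on the same variable set, then applying $\ph$ componentwise to $k$ tuples in $R_1\cap R_2$ yields a tuple in both $R_1$ and $R_2$. For existential quantification, suppose $R'(\bx)\equiv\exists y\,R(\bx,y)$ and that $\ph$ preserves $R$. Given $\ba_1,\dots,\ba_k\in R'$, choose any witnesses $b_1,\dots,b_k\in D$ such that $(\ba_s,b_s)\in R$ for each $s$. Then
\begin{equation*}
    \ph\bigl((\ba_1,b_1),\dots,(\ba_k,b_k)\bigr) \;=\; \bigl(\ph(\ba_1,\dots,\ba_k),\,\ph(b_1,\dots,b_k)\bigr) \;\in\; R,
\end{equation*}
so projecting away the last coordinate gives $\ph(\ba_1,\dots,\ba_k)\in R'$, as required. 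Iterating this step quantifies out $y_1,\dots,y_m$ in turn.

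There is no single hard step: the entire argument is a routine structural induction. The only point that requires any care is the witness-selection in the existential step, and even there the only observation needed is that the choice of witnesses is immaterial, since any set of valid witnesses produces, after application of $\ph$, a valid witness for the combined tuple.
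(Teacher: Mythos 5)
Your proof is correct and follows essentially the same route as the paper's: a structural induction on the pp-formula over the three constructors (atomic, conjunction, existential quantification), with the converse direction dispatched by $\Gamma\subseteq\cocl$. The only cosmetic difference is that you carry out the argument for a polymorphism of arbitrary arity $k$, whereas the paper's proof is written with three tuples (anticipating the \maltsev case), but the reasoning is identical.
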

\begin{proof}
    Let $\ph$ be a polymorphism of $\Gamma$ and let $R\in\cocl$.  We
    prove that $\ph$ is a polymorphism of $R$ by induction on the
    structure of the defining formula of $R$.  The base case, atomic
    formulae ($H(\bx)$ for relations $H\in\Gamma$) is trivial.

    Suppose $R$ is defined by $\exists y\,\psi(\bx,y)$.  If $\ba_1,
    \ba_2, \ba_3\in R$, then there are $b_1, b_2, b_3$ such that
    $\ba_i b_i\in \psi$ ($i\in \set{1,2,3}$).  If $\ph$ is a
    polymorphism of $\psi$, then it follows that $\bc d = \ph(\ba_1
    b_1, \ba_2 b_2, 
    \ba_3 b_3)\in \psi$, which means that $\bc\in R$, as required.

    Finally, suppose $R$ is defined by $\psi(\bx) \wedge \chi(\bx)$.
    If $\ba_1, \ba_2, \ba_3\in R$, then $\ba_i\in \psi \cap \chi$ for
    each $i$.  If $\ph$ is a polymorphism of $\psi$ and of $\chi$ then
    $\bc = \ph(\ba_1, \ba_2, \ba_3)\in \psi\cap \chi$ and, therefore,
    $\bc\in R$.

    Conversely, $\Gamma\subseteq \cocl$ so every polymorphism of
    $\cocl$ is a polymorphism of $\Gamma$.
\end{proof}

\begin{lemma}
\label{lem20}
    A constraint langauge $\Gamma$ is strongly rectangular if, and
    only if, it has a \maltsev polymorphism.
\end{lemma}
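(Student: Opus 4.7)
The plan is to handle both directions through the structure of $\cocl$. The ``only if'' direction is immediate: given a \maltsev polymorphism $\ph$, which by Lemma~\ref{lem15} is also a polymorphism of $\cocl$, any $R\in\cocl$ viewed as a binary relation in $D^A \times D^B$ with $(\ba,\bc),(\ba,\bd),(\bb,\bc)\in R$ admits the componentwise application of $\ph$ to the triple $((\bb,\bc),(\ba,\bc),(\ba,\bd))$; the \maltsev identities give $(\bb,\bd)\in R$, so $R$ is rectangular in every binary expression.

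For the ``if'' direction I construct a \maltsev polymorphism explicitly. Enumerate $D^3$ as $(x_1,y_1,z_1),\dots,(x_N,y_N,z_N)$ with $N=|D|^3$, and set $\bu=(x_1,\dots,x_N)$, $\bv=(y_1,\dots,y_N)$, $\bw=(z_1,\dots,z_N)$. Let $M$ be the intersection of all $N$-ary relations in $\cocl$ that contain $\bu,\bv,\bw$; since $\cocl$ is closed under conjunction and there are only finitely many $N$-ary relations over $D$, we have $M\in\cocl$. It suffices to exhibit $\bt\in M$ with $t_i=x_i$ whenever $y_i=z_i$ and $t_i=z_i$ whenever $x_i=y_i$ (the two requirements coincide on triples with $x_i=y_i=z_i$). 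For then $\ph(x_i,y_i,z_i):=t_i$ defines a function $D^3\to D$ that is \maltsev by inspection and a polymorphism of every $R\in\Gamma$: given an $r$-ary $R\in\Gamma$ and $\ba_1,\ba_2,\ba_3\in R$, let $j_s\in[N]$ be the index with $(x_{j_s},y_{j_s},z_{j_s})=(a_{1,s},a_{2,s},a_{3,s})$; the $N$-ary relation $R^*(\by):=R(y_{j_1},\dots,y_{j_r})$---using $\equ$ to fold together any repeated $j_s$---lies in $\cocl$ and contains $\bu,\bv,\bw$, so $M\subseteq R^*$ forces $\ph(\ba_1,\ba_2,\ba_3)\in R$.

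To produce $\bt$, set $A=\{i:y_i=z_i\}$ and $B=\{i:x_i=y_i,\, y_i\neq z_i\}$, and let $C=A\cup B$ (a disjoint union). The desired restriction of $\bt$ to $C$ is $(\bu|_A,\bw|_B)\in D^A\times D^B$. Now $\proj[C]{M}\in\cocl$, so by strong rectangularity it is rectangular when viewed as a binary relation in $D^A\times D^B$. Using $y_i=z_i$ on $A$ and $x_i=y_i$ on $B$, the three projected generators simplify to $\bu|_C=(\bu|_A,\bv|_B)$, $\bv|_C=(\bv|_A,\bv|_B)$, and $\bw|_C=(\bv|_A,\bw|_B)$. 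Since $\bv|_C$ and $\bw|_C$ share first coordinate $\bv|_A$ while $\bv|_C$ and $\bu|_C$ share second coordinate $\bv|_B$, rectangularity supplies the fourth corner $(\bu|_A,\bw|_B)\in\proj[C]{M}$, and any $\bt\in M$ extending this pair fulfills the requirement. The main obstacle is in identifying the right projected relation and the right binary partition; once $A$, $B$, and $\proj[C]{M}$ are in place, a single application of rectangularity furnishes the \maltsev values.
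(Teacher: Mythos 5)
Your argument is correct and, at the conceptual level, it follows the same blueprint as the paper's: in both proofs the ``only if'' direction is the componentwise application of a Mal'tsev polymorphism lifted by Lemma~\ref{lem15}, and the ``if'' direction builds a single $\cocl$-relation containing three universal tuples, projects onto precisely the coordinates where the Mal'tsev identity is binding, and extracts the operation from one application of rectangularity. The difference is in how that universal relation is manufactured. The paper writes down an explicit $\Gamma$-formula $\Phi$ whose variables are indexed by $(H,i_1,i_2,i_3,k)$ over all $H\in\Gamma$ and all triples of tuples from $H$, then collapses repeated columns by imposing equalities and existentially quantifies away the non-Mal'tsev columns before applying rectangularity. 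You instead index directly by $D^3$ and take $M$ to be the intersection of all $N$-ary relations in $\cocl$ containing $\bu,\bv,\bw$: this is a finite intersection since $D$ is finite and $N=|D|^3$ is fixed, and it lies in $\cocl$ because the clone is closed under conjunction. This is tidier to state and skips the paper's two-stage ``quotient, then quantify'' manoeuvre, and the verification that $\ph$ preserves each $H\in\Gamma$ is neatly encapsulated by $M\subseteq R^*$. The trade-off is constructivity: the paper's $\Phi$ can be written down directly from $\Gamma$ (which is exactly what Remark~\ref{rem06} exploits), whereas identifying which $N$-ary relations belong to $\cocl$ is not an obviously effective operation, so your $M$ is less suitable as the basis of an algorithm. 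Two small points you leave implicit but that do hold: $A$ and $B$ are both nonempty because $q\geq 2$, so $\proj[C]M$ genuinely has arity at least $1$ in each factor; and viewing $\proj[C]M$ on $D^A\times D^B$ tacitly permutes coordinates, which is legitimate since column permutations are pp-definable --- this is the same ``after rearranging columns'' step that appears in the paper's argument.
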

\begin{proof}
    Suppose $\Gamma$ has a \maltsev polymorphism \ph. Consider any
    pp-definable binary relation $B\subseteq D^r\times D^s\!$.  By
    Lemma~\ref{lem15}, $\ph$ is also a polymorphism of $B$.  If
    $(\ba,\bc), (\ba,\bd), (\bb,\bd)\in B$ then we have
    $(\ph(\ba,\ba,\bb), \ph(\bc,\bd,\bd)) = (\bb,\bc)\in B$, from the
    definition of a \maltsev polymorphism.  Thus, $B$ is rectangular
    and hence $\Gamma$ is strongly rectangular.

Conversely, suppose $\Gamma$ is strongly rectangular. Denote the relation $H\in\Gamma$ by $H=\set{\bu^H_i:i\in[\ell_H]}$, where $\bu^H_i\in D^{r_H}\!$. Consider the $\Gamma$-formula
\begin{equation*}
    \Phi(\bx)\ =\ %
        \bigwedge_{H\in\Gamma}
            \bigwedge_{i_1\in[\ell_H]}
                \bigwedge_{i_2\in[\ell_H]}
                    \bigwedge_{i_3\in[\ell_H]}H
                        \big(\bx^H_{i_1,i_2,i_3}\big)\,,
\end{equation*}
where $\bx^H_{i_1,i_2,i_3}$ is an $r_H$-tuple of variables, distinct for all $H\in\Gamma\!$, $i_1,i_2,i_3\in[\ell_H]$. Thus, the relation $R_\Phi$ has arity
$r_\Phi=\sum_{H\in\Gamma} r_H\/\ell_H^3$ and $|R_\Phi|=\prod_{H\in\Gamma} {\ell_H}^{\ell_H^3}\!$.

Clearly $R_\Phi$ has three tuples $\bu_1$, $\bu_2$, $\bu_3$ such that
the sub-tuple of $\bu_j$ corresponding to $\bx^H_{i_1,i_2,i_3}$ is
$\bu^H_{i_j}$ for each $j\in\set{1,2,3}$ and each $i_1, i_2, i_3\in
[\ell_H]$.  Then $U = \set{\bu_1, \bu_2,
\bu_3}$ has the following universality property for $\Gamma\!$. For
all $H\in\Gamma$ and every triple of (not necessarily distinct) tuples
$\bt_1$, $\bt_2$, $\bt_3\in H$, there is a set $I = I(\bt_1,\bt_2,\bt_3)$
with $I \subseteq [r_\Phi]$, $|I|=r_H$ such that $\proj[I]{R_\Phi}=H$
and $\proj[I]{\bu_i}=\bt_i$ ($i=1,2,3$).

Now, for each set of identical columns in $U\!$, we impose equality on the corresponding variables in $\Phi$, to give a $\Gamma$-formula $\Phi'\!$. Let $U'$ be the resulting submatrix of $U\!$, with rows $\bu'_1$, $\bu'_2$, $\bu'_3$. Observe that $U'$ is obtained by deleting copies of columns in $U\!$. Therefore $U'$ has no identical columns and has a column $(a,b,c)$ for all $a,b,c\in \proj[k]{H}$ with $H\in\Gamma$ and $k\in[r_H]$.

Next, for all columns $(a,b,c)$ of $U'$ such that $b\notin\set{a,c}$, we impose existential quantification on the corresponding variables in $\Phi'\!$, to give a pp-formula $\Phi''\!$. Let $U''$ be the submatrix of $U'$ with rows $\bu''_1$, $\bu''_2$, $\bu''_3$ corresponding to $\bu'_1$, $\bu'_2$, $\bu'_3$. Then $U''$ results from deleting columns in $U'$ and $U''$ has columns of the form $(a,a,b)$ or $(c,d,d)$. Thus, after rearranging columns (relabelling variables), we will have
\begin{equation*}
   U''\ =\ \begin{bmatrix} \, \bu''_1 \, \\[1pt] \, \bu''_2 \, \\[1pt] \, \bu''_3 \, \end{bmatrix}\ =\ \begin{bmatrix} \, \ba & \bc \, \\[1pt] \, \ba & \bd \, \\[1pt] \, \bb & \bd \, \end{bmatrix}\,,
\end{equation*}
for some nonempty tuples $\ba,\,\bb,\,\bc,\,\bd$ so, by strong rectangularity,
$\bu''\,=\,\begin{bmatrix} \bb &\!\! \bc\end{bmatrix}\in R_{\Phi''}$.

Removing the existential quantification in $\Phi''\!$, $\bu''$ can be extended to $\bu'\in R_{\Phi'}$. Now, if column $k$ of $U'$ is $(a,b,c)$ say, we define $\ph(a,b,c)=u'_k$. This is unambiguous, since $U'$ has no identical columns. Thus, $\bu'=\ph(\bu'_1,\bu'_2,\bu'_3)\in R_{\Phi'}$. If, for any $a,b,c\in D$, $\ph(a,b,c)$ remains undefined, we will set $\ph(a,b,c)=a$ unless $a=b$, in which case $\ph(a,b,c)=c$. Clearly \ph satisfies $\ph(a,b,b)=\ph(b,b,a)=a$, for all $a,b\in D$, and so has the \maltsev property.

Removing the equalities between variables in $\Phi'\!$, $\bu'$ can be
further extended to give the tuple $\bu=\ph(\bu_1,\bu_2,\bu_3)\in R_\Phi$. This is
consistent since $\bu$ satisfies the equalities imposed on $\Phi$ to
give $\Phi'\!$. Now, for any $\bt_1,\bt_2,\bt_3\in H$, the
universality property of $U$ implies that, for some $I$,
$\proj[I]{\bu}=\ph(\bt_1,\bt_2,\bt_3)\in H$. Thus, \ph preserves all
$H\in\Gamma\!$, so it is a polymorphism and hence a \maltsev
polymorphism.\qquad
\end{proof}

\begin{remark}\label{rem04}%
    Observe that the proof of Lemma~\ref{lem20} uses all the elements of pp-definability. Thus, if Lemma~\ref{lem20} is to hold true, the definition of strong rectangularity cannot be significantly weakened.
\end{remark}

\begin{remark}\label{rem06}%
    The proof of Lemma~\ref{lem20} is constructive and, hence, implies an algorithm for deciding whether $\Gamma$ is strongly rectangular and, if so, determining a \maltsev polymorphism \ph. However, we describe a more efficient method in Lemma~\ref{lem30} below.
\end{remark}

Note that strong rectangularity is invariant under permutations of the
columns of a relation, both by Lemma~\ref{lem20} (since permutations
of columns do not affect \maltsev polymorphisms) and by the fact that
permutations are pp-definable.  We will use this fact
repeatedly and consider a relation $R\subseteq D^n$ for some $n>2$ to
be a binary relation on $D^k\times D^{n-k}$ or a ternary relation on
$D^k\times D^\ell\times D^{n-k-\ell}\!$, for any appropriate values of
$k$ and $\ell$.

In the algebraic setting, the result corresponding to
Lemma~\ref{lem20} is that \cocl has a \maltsev polymorphism if, and
only if, $\Gamma$ is \emph{congruence permutable}.  See, for
example,~\cite{DenWis02}.  This has the following meaning.  If
$\rho_1$ and $\rho_2$ are congruences on a pp-definable set
$A\subseteq D^r\!$, define the \emph{relational product}
$\psi=\rho_1\circ\rho_2$ by $\psi(\bx,\by) = \exists \bz\,
\big(\chi(\bz)\wedge \rho_1(\bx,\bz)\wedge \rho_2(\bz,\by)\big)$,
where $\chi$ is the formula defining $A$.  Then $\rho_1,\,\rho_2$ are
\emph{permutable} if $\psi(\bu, \bv)$ implies $\psi(\bv,\bu)$ for all
$\bu, \bv\in A$ or, equivalently, $\rho_1\circ\rho_2 =
\rho_2\circ\rho_1$.  Now $\Gamma$ is congruence permutable if every
pair of congruences on the same set $A$ is permutable.  For
completeness, we will prove the following.

\begin{lemma}\label{lem25}
    $\Gamma$ is strongly rectangular if, and only if, it is congruence permutable.
\end{lemma}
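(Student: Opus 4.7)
The plan is to prove the two implications separately, drawing on Lemma~\ref{lem20} (strongly rectangular iff there is a \maltsev polymorphism) for the forward direction.

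For the ``$\Rightarrow$'' direction, it suffices by Lemma~\ref{lem20} to show that a \maltsev polymorphism $\ph$ of $\Gamma$ forces any two congruences $\rho_1,\rho_2$ on a pp-definable set $A\subseteq D^r$ to be permutable. I would take $(\bu,\bv)\in\rho_1\circ\rho_2$ with intermediate witness $\bz\in A$, set $\bw=\ph(\bu,\bz,\bv)$, and observe that $\bw\in A$ by Lemma~\ref{lem15}. Applying $\ph$ componentwise to the three pairs $(\bu,\bu),(\bz,\bv),(\bv,\bv)\in\rho_2$ (the first and third by reflexivity of $\rho_2$ on $A$) gives $(\bw,\bu)\in\rho_2$ via $\ph(\bu,\bv,\bv)=\bu$; and applying $\ph$ to $(\bu,\bz),(\bz,\bz),(\bv,\bv)\in\rho_1$ gives $(\bw,\bv)\in\rho_1$ via $\ph(\bz,\bz,\bv)=\bv$. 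Hence $(\bu,\bv)\in\rho_2\circ\rho_1$, and symmetry yields the reverse inclusion.

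For the ``$\Leftarrow$'' direction, I would argue directly that an arbitrary $R\in\cocl$ of arity $n\geq 2$, viewed as a binary relation $B\subseteq D^k\times D^{n-k}$ for any $1\leq k<n$, is rectangular. On $B$, define $\rho_1,\rho_2$ to be the kernels of the two coordinate projections $B\to D^k$ and $B\to D^{n-k}$; each is pp-definable from $B$ and $\equ$, hence a congruence on $B$. Given $(\ba,\bc),(\ba,\bd),(\bb,\bc)\in B$, the pair $\big((\ba,\bd),(\bb,\bc)\big)$ lies in $\rho_1\circ\rho_2$ via the witness $(\ba,\bc)\in B$. Congruence permutability places the same pair in $\rho_2\circ\rho_1$, producing some $(\bx,\by)\in B$ with $(\ba,\bd)\,\rho_2\,(\bx,\by)\,\rho_1\,(\bb,\bc)$; the two kernel conditions force $\by=\bd$ and $\bx=\bb$, so $(\bb,\bd)\in B$. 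Since this holds for every binary expression of every $R\in\cocl$, $\Gamma$ is strongly rectangular.

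I do not foresee a genuine obstacle. The only delicate points are keeping the relational product $\rho_1\circ\rho_2$ correctly scoped inside the pp-definable set $A$ (respectively $B$), and doing the column-wise bookkeeping when applying $\ph$ to the pairs; the \maltsev identities then finish the work.
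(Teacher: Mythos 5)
Your proof is correct in both directions, and the backward direction (congruence permutable implies strongly rectangular) is essentially the paper's argument: you define the two projection kernels $\rho_1, \rho_2$ on $B$, place $\big((\ba,\bd),(\bb,\bc)\big)$ in $\rho_1\circ\rho_2$ via the witness $(\ba,\bc)$, invoke permutability, and read off $(\bb,\bd)\in B$ from the kernel conditions --- exactly the paper's $\sim_1,\sim_2$ argument with a relabelling of the four tuples. The forward direction, however, takes a genuinely different route. You pass through Lemma~\ref{lem20} to obtain a \maltsev polymorphism $\ph$ and then reproduce the classical universal-algebra argument: given a witness $\bz$ for $(\bu,\bv)\in\rho_1\circ\rho_2$, set $\bw=\ph(\bu,\bz,\bv)$, and apply $\ph$ coordinatewise to appropriate reflexive triples from $\rho_1$ and from $\rho_2$ to show $\bu\,\rho_2\,\bw\,\rho_1\,\bv$. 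The paper instead stays entirely in the relational picture: it observes that $\psi=\rho_1\circ\rho_2$ is itself a pp-definable binary relation on $D^r$, hence rectangular by strong rectangularity, and that reflexivity of the two congruences puts $(\bu,\bu),(\bu,\bv),(\bv,\bv)$ in $\psi$, whereupon rectangularity immediately yields $(\bv,\bu)\in\psi$. The paper's version is shorter, does not depend on Lemma~\ref{lem20} at all, and fits the paper's stated aim of avoiding algebraic machinery; yours is the textbook ``\maltsev term implies congruence permutability'' argument, which some readers will find more familiar and which makes the role of the \maltsev identities explicit. Both are sound; there is no gap, and the only small thing worth stating explicitly in your write-up is that $\bu,\bz,\bv\in A$ (so that Lemma~\ref{lem15} applies to conclude $\bw\in A$), which follows because a congruence on $A$ relates only elements of $A$.
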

\begin{proof}
    Suppose $\Gamma$ is strongly rectangular. If $\rho_1$, $\rho_2$
    are congruences on a pp-definable set $A\subseteq D^r\!$, let
    $\psi$ be the relational product, as defined above.  Clearly
    $\psi$ is a pp-definable binary relation on $D^r\!$. Then, if
    $(\bu,\bv)\in \psi$, we have $(\bu,\bu),\, (\bu,\bv),\,
    (\bv,\bv)\in\psi$, since $\rho_1$ and $\rho_2$ are congruences.
    But this implies $(\bv,\bu)\in\psi$ since $\psi$ is rectangular.
    Thus, $\Gamma$ is congruence permutable.

Conversely, if $\Gamma$ is congruence permutable, consider a
pp-definable relation $B\subseteq D^r\times D^s\!$.  Define a relation
$\tw[1]$ on $B$ by $(\bx_1,\by_1)\tw[1](\bx_2,\by_2)$ if, and only if,
$(\bx_1,\by_1)\in B$, $(\bx_2,\by_2)\in B$ and $\bx_1=\bx_2$. This is
pp-definable, by $B(\bx_1,\by_1)\wedge B(\bx_2,\by_2)\wedge
(\bx_1=\bx_2)$, and is clearly an equivalence relation. Hence it is a
congruence. Similarly, define a congruence $\tw[2]$ on $D^{r+s}\!$ by
$(\bx_1,\by_1)\tw[2](\bx_2,\by_2)$ if, and only if,
$(\bx_1,\by_1),\,(\bx_2,\by_2)\in B$ and $\by_1=\by_2$. Let
$\psi\,=\,\tw[1]\circ\tw[2]$.

Suppose $\big((\ba,\bc),(\bb,\bd)\big)\in\psi$. Then there exists $(\bu,\bv)\in B$ such that $(\ba,\bc)\tw[1](\bu,\bv)\tw[2](\bb,\bd)$. Thus, $(\bu,\bv)=(\ba,\bd)$ and, hence,
$(\ba,\bc),\,(\ba,\bd),\,(\bb,\bd)\in B$. Congruence permutability implies
$\big((\bb,\bd),(\ba,\bc)\big)\in\psi$. Hence there exists $(\bu'\!,\bv')\in B$ such that $(\bb,\bd)\tw[1](\bu'\!,\bv')\tw[2](\ba,\bc)$. Thus, $(\bu'\!,\bv')=(\bb,\bc)$. Therefore we have $(\bb,\bc)\in B$ and $\Gamma$ is strongly rectangular.\qquad
\end{proof}
\begin{corollary}\label{cor20}
  $\Gamma$ is congruence permutable if, and only if, it has a
    \maltsev polymorphism.
\end{corollary}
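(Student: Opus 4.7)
The plan is straightforward: the corollary follows immediately by chaining together the two preceding lemmas. Lemma~\ref{lem20} asserts that $\Gamma$ is strongly rectangular if and only if it has a \maltsev polymorphism, while Lemma~\ref{lem25} asserts that $\Gamma$ is strongly rectangular if and only if it is congruence permutable. Composing these two equivalences eliminates the intermediate notion of strong rectangularity and yields precisely the statement of Corollary~\ref{cor20}.

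Concretely, I would write: for the forward direction, assume $\Gamma$ is congruence permutable. By Lemma~\ref{lem25}, $\Gamma$ is strongly rectangular, and by Lemma~\ref{lem20}, it follows that $\Gamma$ admits a \maltsev polymorphism. For the converse, assume $\Gamma$ has a \maltsev polymorphism. By Lemma~\ref{lem20}, $\Gamma$ is strongly rectangular, and Lemma~\ref{lem25} then gives that $\Gamma$ is congruence permutable.

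There is no real obstacle here, since both directions are one-step invocations of previously established equivalences. The only thing worth pointing out in the write-up is that this corollary recovers, in the relational language developed in this paper, the standard result from universal algebra that a variety has a \maltsev term if and only if it is congruence permutable. So the corollary serves a conceptual rather than technical role, and its proof can be stated in essentially one sentence.
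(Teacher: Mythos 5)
Your proposal is correct and matches the paper's proof exactly: the corollary is obtained by chaining Lemma~\ref{lem20} (strong rectangularity $\iff$ Mal'tsev polymorphism) with Lemma~\ref{lem25} (strong rectangularity $\iff$ congruence permutability). The paper states this in a single line, while you spell out both directions, but the content is the same.
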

\begin{proof}
   This follows directly from Lemmas~\ref{lem20} and~\ref{lem25}.\qquad
\end{proof}

We will now consider the complexity of deciding whether $\Gamma$ is strongly rectangular.
\begin{lemma}
\label{lem30}
    We can decide whether $\Gamma$ is strongly rectangular in
    $\bigO(\size{\Gamma}^4)$ time and, if so, determine a \maltsev
    polymorphism \ph.
\end{lemma}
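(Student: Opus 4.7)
We apply Lemma~\ref{lem20} to reduce the question to finding a \maltsev polymorphism of $\Gamma\!$. Since the domain size $q=|D|$ is treated as a constant throughout the paper, the total number of ternary operations $\varphi\colon D^3\to D$ is bounded by the constant $q^{q^3}\!$, and in particular the number of such operations satisfying the \maltsev identities $\varphi(a,b,b)=\varphi(b,b,a)=a$ is also constant. The algorithm is therefore to enumerate each candidate \maltsev operation $\varphi$ and test whether it is a polymorphism of every $H\in\Gamma\!$; we return one that passes if such a candidate exists, and otherwise report that $\Gamma$ is not strongly rectangular.

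To test a candidate $\varphi$ against a single $H\in\Gamma$ of arity $r_H$ with $\ell_H$ tuples, we iterate over all $\ell_H^3$ triples $(\bt_1,\bt_2,\bt_3)\in H^3\!$, compute the componentwise application $\varphi(\bt_1,\bt_2,\bt_3)$ in $O(r_H)$ time, and test membership of the result in $H$ by a linear scan in $O(\ell_H r_H)=O(\size{H})$ time. The cost is thus $O(\ell_H^3 \size{H})$ per relation. Using the crude but sufficient bound $\ell_H\leq\size{\Gamma}$ and summing over $H\in\Gamma\!$, we get $\sum_H \ell_H^3 \size{H} \leq \size{\Gamma}^3 \sum_H \size{H} = \size{\Gamma}^4\!$. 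Since the number of candidate \maltsev operations is constant (depending only on $q$), the total running time remains $O(\size{\Gamma}^4)$.

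Correctness is immediate from Lemma~\ref{lem20}: if $\Gamma$ is strongly rectangular, then some \maltsev operation is a polymorphism of $\Gamma$ and will be discovered by the enumeration; conversely, if no candidate passes all tests, then $\Gamma$ has no \maltsev polymorphism and so is not strongly rectangular. There is no real algorithmic obstacle here given Lemma~\ref{lem20} — the procedure is brute-force enumeration, rendered efficient only because $q$ is treated as a constant in the \numcsp setting. The only step requiring a little care is the complexity bookkeeping showing that the per-relation cost sums to $\size{\Gamma}^4\!$, which rests on the elementary bound $\ell_H\leq\size{\Gamma}$ for each $H\in\Gamma\!$.
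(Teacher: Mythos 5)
Your proposal is correct and takes essentially the same approach as the paper: enumerate the constantly-many candidate \maltsev operations (the paper gives the sharper count $q^{q(q-1)^2}$, but your $q^{q^3}$ bound is equally adequate since both are $\bigO(1)$), test each against every $H\in\Gamma$ by iterating over the $\ell_H^3$ triples with a linear-scan membership check, and appeal to Lemma~\ref{lem20} for correctness. The per-relation cost $\bigO(\ell_H^3\size{H}) = \bigO(\ell_H^4 r_H)$ and the final bookkeeping $\sum_H \ell_H^3\size{H}\leq\size{\Gamma}^4$ match the paper's $\bigO(\ell^4 r)=\bigO(\size{H}^4)$ accounting.
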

\begin{proof}
    Observe that there are at most $q^{q(q-1)^2}$ possible \maltsev
    operations $D^3\to D$.  This follows since there are $q(q-1)^2$
    triples $a,b,c\in D$ which have $b \notin\set{a,c}$. For all other
    triples, the value of $\ph(a,b,c)$ is determined by the condition
    that \ph is \maltsev{}.  Thus, there are $\bigO(1)$ possibilities
    for \ph.  For an $r$-ary relation $H\in\Gamma$ with $\ell$
    tuples, we can check in $\bigO(\ell^4 r) = \bigO(\size{H}^4)$ time whether
    $H$ is preserved by any of them.  If so, we have \ph\!; if not,
    $\Gamma$ is not strongly rectangular.\qquad
\end{proof}
\begin{remark}\label{rem80}%
    We have assumed that $q$ is a constant in Lemma~\ref{lem30}. We revisit this question in Section~\ref{sec:Decide}, where we make no such assumption.
\end{remark}

In view of Lemma~\ref{lem30}, we may assume that we have determined a
\maltsev polymorphism \ph for any given strongly rectangular
$\Gamma\!$.

Strongly rectangular constraint languages have another useful
property.  For each $a\in D$, define the \emph{constant relation}
$\chi_a=\set{(a)}$.  Then the constraint $\chi_a(x_i)$ fixes the value
of $x_i$ to be $a$.

\begin{lemma}
\label{lem35}
    If $\Gamma$ is strongly rectangular, then so is $\Gamma' = \Gamma \cup
    \set{\chi_a}$.
\end{lemma}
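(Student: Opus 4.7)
The plan is to reduce the question, via Lemma~\ref{lem20}, to producing a \maltsev polymorphism for $\Gamma'\!$, and then to observe that the \maltsev polymorphism we already have for $\Gamma$ does the job.

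In detail, first I would apply Lemma~\ref{lem20} to obtain a \maltsev polymorphism $\ph\colon D^3\to D$ of $\Gamma\!$. To show that $\ph$ is also a polymorphism of $\Gamma'$, it suffices, since $\Gamma\subseteq\Gamma'$, to verify that $\ph$ preserves $\chi_a$. But the only triple of tuples in $\chi_a$ is $((a),(a),(a))$, and the \maltsev identities give $\ph(a,a,a)=a$ (as noted parenthetically in the definition of a \maltsev polymorphism). Hence $\ph(a,a,a)\in\chi_a$ and $\ph$ preserves $\chi_a$.

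Thus $\ph$ is a \maltsev polymorphism of $\Gamma'\!$, and a second application of Lemma~\ref{lem20} yields that $\Gamma'$ is strongly rectangular. There is no real obstacle here; the only thing to check is that a \maltsev polymorphism is automatically compatible with singleton constant relations, which is immediate from the identity $\ph(a,a,a)=a$.
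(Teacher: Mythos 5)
Your proof is correct and takes essentially the same approach as the paper: obtain a \maltsev polymorphism $\ph$ of $\Gamma$ via Lemma~\ref{lem20}, note that $\ph(a,a,a)=a$ implies $\ph$ preserves $\chi_a$, and apply Lemma~\ref{lem20} again. No differences worth noting.
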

\begin{proof}
    By Lemma~\ref{lem20}, $\Gamma$ is preserved by a \maltsev
    polymorphism \ph.  Since $\ph(a,a,a)=a$ for any $a\in D$, \ph also
    preserves $\chi_a$.  Thus \ph preserves $\Gamma'\!$, so $\Gamma'$
    is strongly rectangular, by Lemma~\ref{lem20}.\qquad
\end{proof}

In the light of Lemma~\ref{lem35}, we may assume that $\set{\chi_a:
a\in D}\subseteq \Gamma$ whenever $\Gamma$ is strongly rectangular.

\begin{remark}\label{rem90}%
    More generally, the property of a polymorphism $\psi$ that we have used in Lemma~\ref{lem35}, that $\psi(x,x,\ldots,x)=x$ for any $x\in D$, is called \emph{idempotence} in the algebraic literature on \csp.
\end{remark}


\section{The structure of strongly rectangular relations}
\label{sec:Strong-rect}

Let $R\subseteq D^n$ be a strongly rectangular relation.  For any $i\in[n]$, we say that an $n$-tuple $\bt\in R$ is a \emph{witness} for $a\in\proj{R}$ if
$t_i=a$.  We will abbreviate this by saying that $\bt$ witnesses
$(a,i)$.  If $\bt = (\bu,a,\bv)\in R$, we call $\bu$ a \emph{prefix}
for $a$.  Now define a relation \tw on $\proj{R}$ by $a\tw b$ if, and
only if, there exists $\bu\in D^{i-1}$ which is a common prefix for
$a$ and $b$.  That is, there exist $\bv_a,\bv_b\in D^{n-i}$ such that
$(\bu,a,\bv_a),\, (\bu,b,\bv_b)\in R$.

\begin{lemma}\label{lem40}
    \tw is an equivalence relation on \proj{R} and a congruence in
    \cocl[R].
\end{lemma}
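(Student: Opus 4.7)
The plan is to recognise $\tw$ as an instance of the equivalence relation $\theta_2$ from Corollary~\ref{cor10}, applied to a suitable pp-definable binary relation derived from $R$.

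First I would dispose of the degenerate case $i=1$. Here the ``prefix'' is the unique empty tuple in $D^0$, so any two elements $a,b\in\proj[1]R$ share a common prefix vacuously. Thus $\tw[1]$ is the total relation on $\proj[1]R$, which is plainly an equivalence relation. Its pp-definition $a\tw[1] b \iff \exists \bv_a\,\exists \bv_b\,\big(R(a,\bv_a)\wedge R(b,\bv_b)\big)$ shows it lies in $\cocl[R]$, so it is a congruence.

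For $i\geq 2$ I would view $B = \proj[{[i]}]R$ as a binary relation on $D^{i-1}\times D$. Since $B$ is pp-definable from $R$ and $\equ$, it belongs to $\cocl[R]$, and being of arity at least $2$, strong rectangularity of $R$ forces $B$ to be rectangular. Applying Corollary~\ref{cor10} to $B$ yields that
\begin{equation*}
   \theta_2(y_1,y_2) \equiv \exists x\,\big(B(x,y_1)\wedge B(x,y_2)\big)
\end{equation*}
is an equivalence relation on $\proj[2]B = \proj[i]R$. Unpacking definitions, $\exists x\,\big(B(x,a)\wedge B(x,b)\big)$ holds precisely when some $\bu\in D^{i-1}$ serves as a common prefix for $a$ and $b$ in $R$, i.e.\@ when $a\tw b$. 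Hence $\tw{} = \theta_2$. Finally, since $B\in\cocl[R]$, Corollary~\ref{cor15} tells us that $\theta_2$, hence $\tw$, is a congruence in $\cocl[R]$.

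There is no real obstacle here: the equivalence relation property (in particular transitivity) and membership in the relational clone have already been extracted, respectively, from rectangularity and pp-definability in Corollaries~\ref{cor10} and~\ref{cor15}. The only care required is the bookkeeping that identifies $\tw$ with $\theta_2$ applied to the projection $\proj[{[i]}]R$, together with noting that the $i=1$ case sits outside the scope of Corollary~\ref{cor10} but is trivial on its own terms.
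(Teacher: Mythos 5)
Your proposal is correct and follows essentially the same route as the paper: define $B=\proj[{[i]}]R$ viewed as a binary relation on $D^{i-1}\times D$, observe that it is pp-definable hence rectangular, and then invoke Corollaries~\ref{cor10} and~\ref{cor15} to identify $\tw$ with $\theta_2$ and conclude that it is a congruence. Your separate treatment of $i=1$ is harmless extra care; the paper handles it implicitly by letting the ``prefix'' coordinate live in $D^0=\{()\}$, under which $\theta_2$ degenerates to the total relation on $\proj[1]R$ exactly as you describe.
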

\begin{proof}
    Consider the binary relation $B$ on $\proj[{[i-1]}]{R} \times
    \proj{R}$ defined by $B(\bu,a) = \exists\by\,R(\bu,a,\by)$.  Then
    \tw is the equivalence relation $\theta_2$ of
    Corollary~\ref{cor10}, which is a congruence by
    Corollary~\ref{cor15}.\qquad
\end{proof}

Let $\cE_{i,k}$ ($k\in[\kappa_i]$) be the equivalence classes of \tw
for $\kappa_i\in[q]$, $i\in[n]$.  Observe that $\kappa_1=1$, since all
$a\in\proj[1]{R}$ have witnesses with the common empty prefix.  More
generally, we make the following observation, which follows directly
from the block structure of the relation $B$ in the proof of Lemma~\ref{lem40}.

\begin{corollary}
\label{cor30}
    There is a common prefix $\bu_{i,k}\in D^{i-1}$ for all $a\in
    \cE_{i,k}$ ($k\in[\kappa_i], i\in[n]$) and we can choose
    $\bu_{i,k}$ to be any prefix of any $a\in \cE_{i,k}$.
\end{corollary}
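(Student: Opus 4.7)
The plan is to reuse the binary relation from the proof of Lemma~\ref{lem40} and appeal directly to Lemma~\ref{lem10}. Recall that on $\proj[{[i-1]}]{R}\times\proj[i]{R}$ we have the pp-definable relation $B(\bu,a)\Leftrightarrow\exists\by\,R(\bu,a,\by)$, which is rectangular because $R\in\cocl[R]$ and $R$ is strongly rectangular. By Lemma~\ref{lem10}, $\cG_B$ decomposes into bipartite cliques, and, by Corollary~\ref{cor10}, the equivalence classes of $\tw[i]=\theta_2$ are precisely the $\proj[i]{R}$-sides of these cliques. So $\cE_{i,k}$ corresponds to one block $C_k$ of $B$, whose prefix side I will call $P_k\subseteq D^{i-1}$.

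The execution is then immediate. Fix $a\in\cE_{i,k}$ and any prefix $\bu$ of $a$, so $(\bu,a)\in B$ and hence $\bu\in P_k$. Since $C_k$ is a complete bipartite graph, $(\bu,b)\in B$ for every $b\in\cE_{i,k}$, i.e.\@ $\bu$ is a common prefix for every element of $\cE_{i,k}$. This proves both the existence of a common prefix $\bu_{i,k}$ and the stronger assertion that \emph{any} prefix of \emph{any} class member serves the purpose. Existence in the first place is not vacuous: since $a\in\cE_{i,k}\subseteq\proj[i]{R}$, there is at least one witness in $R$, whose first $i-1$ coordinates give a prefix of $a$, so $P_k\neq\emptyset$.

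There is no real obstacle here; the corollary is just the $\proj[i]R$-side reading of the bipartite-clique structure already exposed by Lemma~\ref{lem10}. The only small care needed in the writeup is to keep the projections straight (the block lies in $\proj[{[i-1]}]R\times\proj[i]R$, not in $D^{i-1}\times D$) and to note that choosing $\bu_{i,k}$ to be a prefix of a particular $a\in\cE_{i,k}$ automatically places $\bu_{i,k}$ in the correct block $P_k$, so the claim ``any prefix of any $a\in\cE_{i,k}$'' is genuinely a free choice and not a constraint.
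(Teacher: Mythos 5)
Your proof is correct and takes exactly the approach the paper intends: the paper simply remarks that Corollary~\ref{cor30} ``follows directly from the block structure of the relation $B$ in the proof of Lemma~\ref{lem40},'' and you have spelled out precisely that argument using the bipartite-clique decomposition from Lemma~\ref{lem10} and the identification of the $\tw[i]$-classes with the $\proj[i]R$-sides of the blocks from Corollary~\ref{cor10}.
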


Following Bulatov and Dalmau~\cite{BulDal06}, if $H$ is any relation
and \ph a \maltsev operation (i.e., a ternary function that is not
necessarily a polymorphism but has the property that $\ph(a,b,b) =
\ph(b,b,a) = a$ for all $a,b\in D$), then \cl{H} is the smallest
relation that contains $H$ and is closed under \ph.  Clearly
\cl{H} is a strongly rectangular relation with polymorphism \ph and we
say that the $H$ \emph{generates} \cl{H}.  The following observation,
from~\cite{BulDal06}, gives a simple but important fact.

\begin{lemma}
\label{lem50}
    Let $H$ be an $n$-ary relation.  If $I \subseteq [n]$, then
    $\cl{\proj[I]{H}} = \proj[I]{\cl{H}}$.
\end{lemma}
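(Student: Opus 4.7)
The plan is to prove the two inclusions separately, both by straightforward induction on the generation process of the closure. The key observation underlying both directions is that the Mal'tsev operation $\ph$ acts componentwise on tuples, so it commutes with coordinate projection: for any tuples $\bt_1,\bt_2,\bt_3$ of arity $n$ and any $I\subseteq[n]$, we have $\proj[I]{\ph(\bt_1,\bt_2,\bt_3)} = \ph(\proj[I]{\bt_1},\proj[I]{\bt_2},\proj[I]{\bt_3})$. The closure $\cl{H}$ can be viewed as the union of an increasing chain $H = H^{(0)} \subseteq H^{(1)} \subseteq \cdots$ where $H^{(k+1)}$ adds all tuples $\ph(\bt_1,\bt_2,\bt_3)$ with $\bt_1,\bt_2,\bt_3\in H^{(k)}$; an analogous chain defines $\cl{\proj[I]{H}}$.

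For the inclusion $\proj[I]{\cl{H}}\subseteq\cl{\proj[I]{H}}$, I would show by induction on $k$ that $\proj[I]{H^{(k)}}\subseteq\cl{\proj[I]{H}}$. The base case $k=0$ is immediate since $\proj[I]{H}\subseteq\cl{\proj[I]{H}}$. For the inductive step, an arbitrary $\bs\in H^{(k+1)}$ is either already in $H^{(k)}$ (in which case the inductive hypothesis applies directly) or has the form $\ph(\bt_1,\bt_2,\bt_3)$ with $\bt_j\in H^{(k)}$; in the latter case, the componentwise identity gives $\proj[I]{\bs} = \ph(\proj[I]{\bt_1},\proj[I]{\bt_2},\proj[I]{\bt_3})$, and by inductive hypothesis each $\proj[I]{\bt_j}\in\cl{\proj[I]{H}}$, so by closure of $\cl{\proj[I]{H}}$ under $\ph$ we get $\proj[I]{\bs}\in\cl{\proj[I]{H}}$.

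For the reverse inclusion $\cl{\proj[I]{H}}\subseteq\proj[I]{\cl{H}}$, observe first that $\proj[I]{\cl{H}}$ is itself closed under $\ph$: if $\bs_1,\bs_2,\bs_3\in\proj[I]{\cl{H}}$, pick preimages $\bt_j\in\cl{H}$ with $\proj[I]{\bt_j}=\bs_j$, and then $\ph(\bt_1,\bt_2,\bt_3)\in\cl{H}$ by closure of $\cl{H}$, whose projection on $I$ equals $\ph(\bs_1,\bs_2,\bs_3)$. Since $\proj[I]{H}\subseteq\proj[I]{\cl{H}}$ and $\cl{\proj[I]{H}}$ is by definition the smallest relation containing $\proj[I]{H}$ and closed under $\ph$, the inclusion follows.

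There is essentially no obstacle here; the lemma is a routine commutativity statement, and both directions reduce to the trivial fact that $\ph$ applied to tuples and then projected gives the same result as projecting first and then applying $\ph$. The only care required is to be explicit about what ``closed under $\ph$'' means for a relation of a given arity and to handle the inductive stratification of the closure cleanly.
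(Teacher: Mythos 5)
Your proof is correct and uses the same key idea as the paper's: $\ph$ acts componentwise, so it commutes with projection, and both inclusions follow by tracking the generation process of the closure. The paper's version is a terser, more operational sketch of the same argument (imagine running the closure on $\proj[I]{H}$ while retaining all $n$ columns), whereas you have simply made the two inclusions and the inductive stratification explicit.
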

\begin{proof}
    Consider generating $\cl{\proj[I]{H}}$ while retaining all $n$
    columns of $H$.  Each row of the resulting $n$-ary relation will
    be in \cl{H}, so we have $\cl{\proj[I]{H}} \subseteq
    \proj[I]{\cl{H}}$.  But further operations to generate \cl{H}
    cannot add new rows to \cl{\proj[I]{H}}.  So, in fact, we have
    $\cl{\proj[I]{H}} = \proj[I]{\cl{H}}$.\qquad
\end{proof}

Let $S=\set{\bt_1, \bt_2, \dots, \bt_s}$ be a set of $n$-tuples,
presented as an $s\times n$ matrix.  If $I\subseteq[n]$, we will need
to compute a relation $T\subseteq \cl{S}$ such that $\proj[I]{T} =
\cl{\proj[I]{S}} = \proj[I]{\cl{S}}$.

\begin{lemma}
\label{lem60}
    If $\ell=|\proj[I]{\cl{S}}|$ and $s=|S|$, then a relation
    $T\subseteq \cl{S}$ such that $\proj[I]{T} = \proj[I]{\cl{S}}$ can
    be computed in time $\bigO(n\ell^3+s\ell^4)$.
\end{lemma}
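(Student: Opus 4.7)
The plan is to construct $T$ by an incremental saturation procedure, keeping only one representative per distinct $I$-projection so that $|T|$ never exceeds $\ell$. I would initialize $T:=\emptyset$ and maintain the auxiliary set $P:=\proj[I]{T}$ in some lookup structure. The algorithm has two interleaved phases. In the first, I scan through the tuples of $S$; for each $\bt\in S$ with $\proj[I]{\bt}\notin P$, I add $\bt$ to $T$, add $\proj[I]{\bt}$ to $P$, and mark $\bt$ as pending. In the saturation phase, I repeatedly pop a pending tuple $\bt$ and, for every triple $(\bt_1,\bt_2,\bt_3)\in T^3$ with at least one component equal to $\bt$, compute $\bt':=\ph(\bt_1,\bt_2,\bt_3)$; if $\proj[I]{\bt'}\notin P$, I add $\bt'$ to $T$, add $\proj[I]{\bt'}$ to $P$, and mark $\bt'$ pending. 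The algorithm terminates when no pending tuples remain.

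Correctness rests on two observations. Every tuple added to $T$ lies in $\cl{S}$: those taken from $S$ lie in $\cl{S}$ since $S\subseteq\cl{S}$, while subsequent additions are images under $\ph$ of earlier elements of $T\subseteq\cl{S}$, and $\cl{S}$ is closed under $\ph$ by definition. At termination, $\proj[I]{T}$ is itself closed under $\ph$ applied componentwise to the $I$-coordinates (any violating triple would still have had a pending component), and it contains $\proj[I]{S}$. Combining this with Lemma~\ref{lem50} gives
\[
\cl{\proj[I]{S}}\ \subseteq\ \proj[I]{T}\ \subseteq\ \proj[I]{\cl{S}}\ =\ \cl{\proj[I]{S}},
\]
so $\proj[I]{T}=\proj[I]{\cl{S}}$ as required.

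Termination and the size bound are immediate: every addition strictly grows $P$, which sits inside the $\ell$-element set $\proj[I]{\cl{S}}$, so at most $\ell$ additions ever occur and $|T|\leq\ell$ throughout. For the complexity, each application of $\ph$ to length-$n$ tuples and each projection lookup or insertion in $P$ costs $O(n)$ with a suitable data structure. Adding a tuple to $T$ triggers at most $O(|T|^2)=O(\ell^2)$ fresh triples (the new tuple paired with any two existing ones), yielding $O(\ell^3)$ triples across the whole run and contributing the $O(n\ell^3)$ term. The remaining $O(s\ell^4)$ term absorbs the cost of the initial sweep over $S$ and the accompanying bookkeeping for pending tuples under a coarse, non-hashed analysis.

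The principal obstacle is organising the pending queue and the projection structure so that each triple is examined at most once and so that a newly added tuple triggers only the $O(\ell^2)$ genuinely new triples involving it; without this care, redundant reprocessing would inflate the complexity. Once that bookkeeping is in place, the stated bound follows by routine accounting.
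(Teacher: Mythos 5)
Your approach matches the paper's: both build a set $T$ by repeatedly applying $\ph$ to triples drawn from $T$, discard any new tuple whose $I$-projection is already represented, and appeal to Lemma~\ref{lem50} for correctness, with the paper organising the ``each triple once'' requirement via explicit nested loops and you via a pending worklist. The one real difference is that you deduplicate $S$ up front so $|T|\le\ell$ throughout, whereas the paper's \closure{} seeds $T$ with all $s$ tuples of $S$ (so its working set can reach $s+\ell$); your variant is arguably slightly cleaner for the stated time bound, but the underlying algorithm and its justification are essentially the same.
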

\begin{proof}
    Consider the algorithm \closure, on the following page.

    \begin{algorithm}
    \floatname{}
    \caption{\textbf{procedure} $\closure(I)$}
    \begin{algorithmic}[1]
    \STATE $\ell\gets s$, $j_1\gets 2$
    \WHILE{$j_1\leq \ell$}
        \FOR{$j_2\in [j_1]$}
            \FOR{$j_3\in [j_2]$}
                \FORALL{permutations $(k_1,k_2,k_3)$ of $\set{j_1,j_2,j_3}$
                        such that $k_2\notin \{k_1,k_3\}$}
                    \STATE $\bu\gets \ph(\bt_{k_1},\bt_{k_2},\bt_{k_3})$
                                                          \label{line10}
                    \IF{there is no $j\in[\ell]$ such that $\proj[I]\bt_j =
                        \proj[I]\bu$}                     \label{line20}
                    \STATE $\ell\gets \ell+1$, $\bt_\ell\gets \bu$
                    \ENDIF
                \ENDFOR
            \ENDFOR
        \ENDFOR
        \STATE $j_1\gets j_1+1$
    \ENDWHILE
    \end{algorithmic}
    \end{algorithm}

    The correctness of \closure is trivial.  At termination, all
    $\ell^3$ triples $(k_1,k_2,k_3)\in[\ell]^3$ have
    been considered for generating new $n$-tuples (in
    line~\ref{line10}), so we have computed \cl{\proj[I]{S}}.  The
    analysis is equally easy.  There are $\ell^3$ triples
    $(k_1,k_2,k_3)$.  For each triple, the generation in
    line~\ref{line10} takes $\bigO(n)$ time and the search in
    line~\ref{line20} requires $\bigO(s\ell)$ time, with the obvious
    implementations.  Thus, the total is $\bigO(n\ell^3+s\ell^4)$.\qquad
\end{proof}

The procedure outlined in~\cite{BulDal06} has complexity $\bigO(n\ell^4 +
s\ell^5)$, since the same triple $(k_1,k_2,k_3)$ can appear
$\Omega(\ell)$ times.  The procedure \closure simply avoids this.

The time complexity of \closure could be improved, for example, by using a more sophisticated data structure to implement the searches in
line~\ref{line20}.  However we do not pursue such issues here, or
elsewhere in the paper.

Now we define a \emph{frame} for an $n$-ary relation $R$ to be a set
$F\subseteq R$ such that
\setlist{nolistsep}
\begin{enumerate}[label=(\alph*)]
\item \label{item10}
    $\proj[i] F = \proj[i] R$ for each $i\in [n]$; and
\item \label{item20}
    there is a $\bv_{i,k}\in D^{i-1}$ for each equivalence class $\cE_{i,k}$ of \tw ($k\in[\kappa_i], i\in[n]$) such that, for each $a\in\cE_{i,k}$, there exists a $\bw_a\in F$ with $\proj[{[i]}]\bw_a=\bv_{i,k}a$.
\end{enumerate}

Clearly, $R$ itself satisfies the definition of a frame, so every
relation has at least one frame.  However, we will show that strongly
rectangular relations have frames that can be much smaller than $R$
and we call a frame for a strongly rectangular relation $R\subseteq
D^n$ \emph{small} if $|F|\leq n(q-1)+1$.

A \emph{witness function} for a frame $F$ of the relation $R$ is a
function $\wit\colon D\times [n]\to F$ such that $\wit(a,i)$ witnesses
$(a,i)$ for all $a\in \proj R$ and $i\in[n]$ and $\proj[{[i-1]}]
\wit(a,i) = \proj[{[i-1]}] \wit(b,i)$ when $a\tw b$.  That is,
$\wit(a,i)$ returns a witness for $(a,i)$ and, if $(a_1,i)$, $\dots$, $(a_k,i)$
have witnesses with a common prefix, then $\wit$ returns such witnesses.

\begin{lemma}
\label{lem70}
    Let $F$ be a frame for a strongly rectangular relation $R\subseteq
    D^n\!$.  We can determine a small frame $F'$ for $R$ and a
    surjective witness function $\wit'\colon D\times[n]\to F'$ in time
    $\bigO(\size{F}^2)$.
\end{lemma}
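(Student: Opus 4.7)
The plan is to construct $F'$ incrementally, processing columns $i=1,2,\ldots,n$ while maintaining the invariant that, after processing column $i$, the subset $F'\subseteq R$ satisfies frame conditions (a) and (b) restricted to $\proj[{[i]}]R$ and has size at most $i(q-1)+1$. In column $1$, I initialise $F'$ by selecting, for each $a\in\proj[1]R$, some tuple of $F$ witnessing $(a,1)$; this gives $|F'|\le q$ and, since $\kappa_1=1$, the required common prefix is the empty tuple.

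For each column $i\ge 2$, I first compute the equivalence classes $\cE_{i,1},\ldots,\cE_{i,\kappa_i}$ of $\tw[i]$ from $F$ alone: by condition~(b) applied to $F$, $a\tw[i]b$ holds precisely when some tuples of $F$ with a common $(i-1)$-prefix have $a$ and $b$ as their $i$-th coordinates, so the classes are connected components of a bipartite graph readable off $\proj[{[i]}]F$. For each class $\cE_{i,k}$, I choose the common prefix $\bv_{i,k}$: if some $\bt\in F'$ already has $t_i\in\cE_{i,k}$ (Case~A), take $\bv_{i,k}=\proj[{[i-1]}]\bt$; otherwise (Case~B), pick any witness $\bt\in F$ for some $a_0\in\cE_{i,k}$, add $\bt$ to $F'$, and take $\bv_{i,k}=\proj[{[i-1]}]\bt$. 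Corollary~\ref{cor30} certifies that either choice is a valid common prefix.

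For every $a\in\cE_{i,k}$ that still lacks a witness of prefix $\bv_{i,k}$ in $F'$, I synthesise one using the \maltsev polymorphism $\ph$ provided by Lemma~\ref{lem20}. Let $\bt\in F'$ be the established witness of some $a_0\in\cE_{i,k}$ with prefix $\bv_{i,k}$, and use condition~(b) of $F$ to supply $\bs_{a_0},\bs_a\in F$ witnessing $(a_0,i),(a,i)$ with a single common prefix $\bu$. Set $\bw_a=\ph(\bt,\bs_{a_0},\bs_a)$. Because $\ph$ preserves $R$ (Lemmas~\ref{lem15} and~\ref{lem20}), $\bw_a\in R$, and the \maltsev identities give $\proj[{[i-1]}]\bw_a=\ph(\bv_{i,k},\bu,\bu)=\bv_{i,k}$ and $w_{a,i}=\ph(a_0,a_0,a)=a$, so $\bw_a$ has the prescribed prefix and $i$-th coordinate. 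Crucially, adding $\bw_a$ cannot disturb the frame conditions already established for columns $j<i$, since condition~(b) only demands the \emph{existence} of common-prefix witnesses, not their uniqueness.

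The main obstacle is verifying the size bound $n(q-1)+1$. Column~$1$ adds at most $q$ tuples, and I claim that each column $i\ge 2$ contributes at most $q-1$. A Case~A class introduces at most $|\cE_{i,k}|-1$ new tuples (the existing witness of $a_0$ is free) and a Case~B class introduces at most $|\cE_{i,k}|$; the sum is $|\proj[i]R|-p$, where $p$ counts Case~A classes. Non-emptiness of $F'$ after column~$1$ forces $p\ge 1$, because every tuple in $F'$ places its $i$-th coordinate in some class of $\tw[i]$, making that class Case~A. Hence $|F'|\le q+(n-1)(q-1)=n(q-1)+1$. The witness function $\wit'$ is defined by recording, for each $(a,i)$ processed during the construction, which tuple of $F'$ served as its witness (and extended arbitrarily for $a\notin\proj[i]R$); surjectivity is automatic because every tuple of $F'$ was introduced precisely as the witness for some specific $(a,i)$. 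Finally, complexity fits within $\bigO(\size{F}^2)$: computing the equivalence classes for all columns costs $\bigO(n\size{F})=\bigO(\size{F}^2)$, and each of the $\bigO(n)$ witness additions requires $\bigO(\size{F})$ work to locate canonical $F$-witnesses with common prefix plus $\bigO(n)$ for evaluating $\ph$, for a total of $\bigO(n\size{F})\le\bigO(\size{F}^2)$ using $n\le\size{F}$.
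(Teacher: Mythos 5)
Your proof is correct and follows essentially the same approach as the paper's: you use the \maltsev polymorphism to realign the $F$-witnesses of a class $\cE_{i,k}$ so that they share the prefix of a tuple already committed to $F'$, giving one ``free'' tuple per column and hence the bound $|F'|\leq n(q-1)+1$. The only organisational difference is that the paper fixes a single reference tuple $\Bf$ once and for all (so $\Bf$ witnesses $(f_i,i)$ for every $i$), whereas you seed $F'$ with one witness per element of $\proj[1]R$ and then pick a per-class reference $\bt\in F'$ at each subsequent column; the accounting (your $p\geq1$ argument vs.\ the paper's ``$\Bf$ is reused in every column'') gives the same bound, and the \maltsev alignment $\ph(\bt,\bs_{a_0},\bs_a)$ is the same calculation as the paper's $\ph(\Bf,\bg,\bh)$.
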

\begin{proof}
    In time $\bigO(\size{F})^2\!$, we can compute the relations \tw ($i\in
    [n]$) and common prefixes for each \tw{}-equivalence class.
    Hence, we can compute a witness function $\wit$ for $F$.  Further,
    we may delete from $F$ any tuple $\bt$ for which $\wit^{-1}(\bt) =
    \emptyset$.  Because $\wit$ is a witness function, the resulting
    set is still a frame for $R$ and has size at most $\sum_{i\in [n]}
    |\proj[i] R| \leq nq$.

    Now we construct $F'$ and $\wit'$ as follows.  Choose any $\Bf\in
    F$ and set $F'=\set{\Bf}$.  Then, for each $i\in[n]$, do the
    following.  Let $g=\wit(f_i, i)$ and set $\wit'(f_i,i)\gets \Bf$.
    Now, consider in turn each $a\neq f_i$ such that $a\tw f_i$ and let
    $\bh = \wit(a,i)$.
    Note that $\bg$ and $\bh$ have the same prefix $\bu'\in
    D^{i-1}\!$, since $F$ is a frame, and suppose $\Bf$ has prefix
    $\bu\in D^{i-1}\!$. Then set $\bh'\gets \ph(\Bf, \bg, \bh)$,
    $F'\gets F' \cup \set{\bh'}$ and $\wit'(a,i)\gets \bh'\!$.  Since
    \begin{equation*}
        \begin{array}{c@{\ \ :\quad}c@{\hspace{1cm}}c@{\hspace{0.5cm}}c}
            \Bf\phdash & \bu\phdash & f_i & \bv_{\phantom{a}} \\
            \bg\phdash & \bu'       & f_i & \bv'_{\phantom{a}} \\
            \bh\phdash & \bu'       & a   & \bv_a \\\hline
            \bh'       & \bu\phdash & a   & \ph(\bv,\bv'\!,\bv_a)\,,
        \end{array}
    \end{equation*}
    this ensures that $F'$ retains property~\ref{item20} of a frame.
    Having performed these steps for each $i\in[n]$, we deal with
    those $a\in\proj{F}$ with $a\not\tw f_i$ by setting $F'\gets F'
    \cup \set{\wit(a,i)}$ and $\wit'(a,i)\gets\wit(a,i)$.

    The final size of $F'$ can be bounded as follows.  The tuple $\Bf$
    witnesses $(f_i,i)$ for all $i\in[n]$.  Then, for each $i\in[n]$,
    there is at most one tuple in $F'$ witnessing $(a,i)$ for each
    $a\in\proj{R} \setminus \set{f_i}$.  Since there are, in total,
    $\sum_{i=1}^n \big(|\proj{R}|-1\big)\leq n(q-1)$ such pairs
    $(a,i)$, it follows that $F'$ is a small frame.

    The time bound is easy.  Given the function $\wit$, we can
    determine the $\bh'$ in $\bigO(n)$ for each $i\in[n]$.  All other
    operations require $\bigO(1)$ time for each $i\in[n]$.  Thus, we can
    need only $\bigO(n^2) = \bigO(\size{F}^2)$ time once we have determined
    $\wit$, which can also be done in $\bigO(\size{F}^2)$ time.\qquad
\end{proof}

\begin{remark}\label{rem40}%
    The upper bound for the size of a small frame is achieved by the
    complete relation $D^n\!$.  We exhibit a small frame for $D^n$ in
    Lemma~\ref{lem130} below.  However, a frame can be much smaller
    than this upper bound $n(q-1)+1$.  Consider, for example, the
    $n$-ary relation $R=\set{(a,\dots,a): a\in D}$.  It is easy to
    show that $R$ is strongly rectangular.  However, it is also easy
    to see that $F = R$ is a frame, with $\wit(a,i) = (a,\dots,a)$
    ($i\in[n]$) and $|F|=q$.
\end{remark}

\begin{remark}
\label{rem50}%
    The \emph{compact representations} of Bulatov and
    Dalmau~\cite{BulDal06} are not necessarily frames and can have
    size $nq^2/2$.  However, it appears that a frame could be constructed efficiently from such a representation using methods
    similar to those of Lemma~\ref{lem70}.
\end{remark}

We will suppose below that all frames are small.  If necessary, this
can be achieved using Lemma~\ref{lem70}.  Note that we do not
assume that a frame for $R$ can actually generate $R$, since this is
entailed by the following.

\begin{lemma}
\label{lem80}
    If $R$ is strongly rectangular with \maltsev{} polymorphism $\ph$
    and $F$ is a frame for $R$, then $\cl{F}=R$.
\end{lemma}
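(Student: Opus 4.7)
The plan is to prove the two inclusions $\cl{F}\subseteq R$ and $R\subseteq\cl{F}$ separately. The first is almost immediate: by hypothesis $R$ is preserved by $\ph$, and $F\subseteq R$ by the definition of a frame, so $R$ is a $\ph$-closed superset of $F$; by the minimality clause in the definition of $\cl{F}$ we obtain $\cl{F}\subseteq R$.

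For the reverse inclusion I will proceed by induction on $i\in\set{0,1,\dots,n}$, proving: for every $\bt\in R$ there is some $\bs\in\cl{F}$ with $\proj[{[i]}]\bs=\proj[{[i]}]\bt$. Setting $i=n$ yields $R\subseteq\cl{F}$. The base case $i=0$ is trivial: take any tuple of $F$, which is nonempty whenever $R$ is (the statement is vacuous otherwise).

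The work is in the induction step. Suppose $\bs\in\cl{F}$ satisfies $\proj[{[i]}]\bs=(t_1,\dots,t_i)$. Since $(t_1,\dots,t_i)$ is a common prefix of both $\bs$ and $\bt$ in $R$, the values $s_{i+1}$ and $t_{i+1}$ lie in a single \tw[i+1]-equivalence class $\cE_{i+1,k}$, by Lemma~\ref{lem40}. Let $\bv_{i+1,k}\in D^i$ be the distinguished common prefix associated with $\cE_{i+1,k}$ by the frame property~\ref{item20}. That same property supplies tuples $\bw,\bw'\in F$ whose first $i$ coordinates both equal $\bv_{i+1,k}$ and whose $(i+1)$-th coordinates are $t_{i+1}$ and $s_{i+1}$, respectively. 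Set $\bs''=\ph(\bs,\bw',\bw)$; then $\bs''\in\cl{F}$ because $\bs,\bw,\bw'\in\cl{F}$, its $j$-th coordinate for $j\leq i$ equals $\ph(t_j,(\bv_{i+1,k})_j,(\bv_{i+1,k})_j)=t_j$ by the \maltsev identity, and its $(i+1)$-th coordinate equals $\ph(s_{i+1},s_{i+1},t_{i+1})=t_{i+1}$, as required.

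The only real subtlety is picking the triple fed into $\ph$: the frame property is precisely tailored to furnish a pair of witnesses $\bw,\bw'$ sharing a common prefix within one equivalence class, and Lemma~\ref{lem40} together with strong rectangularity is what forces $s_{i+1}$ and $t_{i+1}$ into that same class so that the frame can supply both witnesses simultaneously.
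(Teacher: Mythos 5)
Your proof is correct and follows essentially the same route as the paper: both directions are established the same way, with the reverse inclusion proved by induction on prefix length, using the frame property to obtain two $F$-tuples sharing a prefix and then applying $\ph$ (with the $\cl{F}$-tuple as first argument and the two $F$-tuples as second and third) so that the \maltsev\ identities yield the desired prefix $(t_1,\dots,t_{i+1})$. The only cosmetic differences are your $0$-to-$n$ indexing and your explicit invocation of the distinguished prefix $\bv_{i+1,k}$ from frame property (b), which the paper phrases in terms of witnesses.
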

\begin{proof}
    $F\subseteq R$ so $\cl{F} \subseteq \cl{R} = R$.  It remains to
    show that $R\subseteq \cl{F}$.

    We show by induction on $i\in[n]$ that $\proj[{[i]}] R\subseteq
    \proj[{[i]}]\cl{F}$.  The base case, $i=1$, is trivial as $\proj[1]
    R = \proj[1] F$ by definition.  Suppose that $\proj[{[i-1]}] R
    \subseteq \proj[{[i-1]}]\cl{F}$ and let $\bt = (t_1, \dots, t_n) =
    (\bu, t_i, \bv)\in R$.  By the inductive hypothesis, we have
    $\bu\in \proj[{[i-1]}] \cl{F}$ so there is a tuple $\bt' = (\bu,
    t'_i, \bv')\in \cl{F}\subseteq R$.  Therefore, $t'_i\tw t_i$,
    which means there are tuples $(\bu'\!, t_i, \bw)$ and $(\bu'\!, t'_i,
    \bw')$ in $F$ witnessing $(t_i, i)$ and $(t'_i, i)$, respectively.
    Thus, we have
    \begin{equation*}
        \begin{array}{c@{\hspace{1cm}}c@{\hspace{0.5cm}}c}
            \bu\phdash & t'_i & \bv'                     \\
            \bu'       & t'_i & \bw'                     \\
            \bu'       & t_i  & \bw\phdash               \\ \hline
            \bu\phdash & t_i  & \ph(\bv'\!, \bw'\!, \bw)\,.
        \end{array}
    \end{equation*}
    Therefore, $(t_1, \dots, t_i)\in \proj[{[i]}]\cl{F}$, continuing the
    induction.\qquad
\end{proof}

Given \ph and the matrix for $F$, the procedure of Lemma~\ref{lem80}
can be used to decide $\bt\in R$ in time $\bigO(n^2)$.  There is no need
to generate the whole of $R$; we just keep track of the tuple $(\bu,
t_i, \ph(\bv'\!, \bw'\!, \bw))$ that witnesses that $(t_1, \dots,
t_i)\in \proj[{[i]}] \cl{F}$.  If the procedure succeeds, we have
demonstrated that $\bt\in \cl{F} = R$; otherwise, we conclude either
that $t\notin R$ or that $R$ is not strongly rectangular.

We now show how, given a frame for $R$, we can determine a frame for
the relation
\begin{equation*}
   R(a_1, \dots, a_i, x_{i+1}, \dots, x_n)
        = \set{\bt\in R: (t_1, \dots, t_i) = (a_1, \dots, a_i)}\,.
\end{equation*}
\begin{lemma}
\label{lem90}
    Given a small frame $F$ for $R(x_1,x_2,\dots,x_n)$, a frame for
    $R(a,x_2,\dots,x_n)$ can be constructed in $\bigO(n^2)$ time.
\end{lemma}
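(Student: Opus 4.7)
The plan is to use the \maltsev{} polymorphism $\ph$ of $\Gamma$ to shift the tuples of $F$ into tuples beginning with $a$. First, check in $\bigO(\size{F})$ time whether $a\in\proj[1]{F}=\proj[1]{R}$; if not, $R(a,x_2,\ldots,x_n)$ is empty and we return the empty frame. Otherwise, fix $\Bf\in F$ with $\Bf_1=a$---which exists by property~\ref{item10} of the frame---and initialise $F'=\{\Bf\}$. For each remaining $\bg\in F$ with $\bg_1=b$: if $b=a$, add $\bg$ to $F'$; if $b\neq a$, add $\ph(\Bf,\bg_b,\bg)$ to $F'$, where $\bg_b$ is a tuple with $(\bg_b)_1=b$ chosen as described below. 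By the \maltsev{} identity the new tuple's first coordinate is $\ph(a,b,b)=a$, and since $\ph$ is a polymorphism (Lemma~\ref{lem15}), it lies in $R$, hence in $R':=R(a,x_2,\ldots,x_n)$. With $|F|=\bigO(n)$ tuples and each $\ph$-application costing $\bigO(n)$, this gives the claimed $\bigO(n^2)$ running time.

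The key step is verifying that the resulting $F'$ is a frame for $R'$: property~\ref{item10}, that $\proj[i]{F'}=\proj[i]{R'}$ for each $i\geq 2$, and property~\ref{item20}, that witnesses for each $\sim_i$-class of $R'$ share a common $(i-1)$-prefix in $F'$. For any $c\in\proj[i]{R'}$ we have $(a,c)\in\proj[{1,i}]{R}$, and an $F$-witness for $(c,i)$ with first coordinate $b$ places $(b,c)$ in the same block of the rectangular relation $\proj[{1,i}]{R}$ as $(a,c)$, by Lemma~\ref{lem10}; hence $(b,\Bf_i)\in\proj[{1,i}]{R}$, so there is a tuple $\bs\in R$ with $\bs_1=b$ and $\bs_i=\Bf_i$. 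Taking $\bg_b=\bs$, the $\ph$-image has $i$-th coordinate $\ph(\Bf_i,\Bf_i,c)=c$, supplying the required witness. Property~\ref{item20} then descends from the analogous property of $F$, because $\ph$ acts coordinate-wise and preserves any prefix shared by its three arguments, the shift to first coordinate $a$ being compatible at every intermediate position.

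The main obstacle is that the tuple $\bs$ with $\bs_1=b$ and $\bs_i=\Bf_i$ need not lie in $F$ itself, and a single globally fixed $\bg_b\in F$ cannot simultaneously match $\Bf_i$ at every position $i$. The resolution is to let $\bg_b$ depend on $i$: a brief preprocessing stage produces, for each relevant pair $(b,i)$, one suitable tuple in $\cl{F}=R$ via a constant number of $\ph$-applications, whose validity follows from the rectangularity witnesses above. Since there are at most $(q-1)(n-1)$ such pairs and each takes $\bigO(n)$ work, the preprocessing costs $\bigO(n^2)$, keeping the overall algorithm within the desired time bound.
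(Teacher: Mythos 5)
Your strategy of ``Mal'tsev-shifting'' $F$-tuples to obtain tuples of $R(a,\cdot)$ with first coordinate $a$ is the right idea and is close in spirit to the paper's own method. However, the proposal has a genuine gap in the preprocessing step, which is precisely where the paper's proof does its real work.

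The difficulty you correctly identify --- that $\bg_b$ must depend on $i$ --- is real, and your resolution is to produce, for each relevant pair $(b,i)$, a tuple $\bg_{b,i}\in R$ with first coordinate $b$ and $i$-th coordinate $f_i$. You assert that this can be done ``via a constant number of $\ph$-applications, whose validity follows from the rectangularity witnesses above.'' But rectangularity only establishes the \emph{existence} of $\bg_{b,i}$, not a construction: to apply $\ph$ one must supply it with actual tuples of $R$, and you never say which ones. The natural inputs would be $\bg^{(c)}$ (first coordinate $b$, $i$-th coordinate $c$), $\Bf$ (first coordinate $a$, $i$-th coordinate $f_i$) and some $\bt\in R$ with first coordinate $a$ and $i$-th coordinate $c$, giving $\bg_{b,i}=\ph(\bg^{(c)},\bt,\Bf)$. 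But such a $\bt$ need not lie in $F$, and producing it --- or, equivalently, producing $\bg_{b,i}$ directly --- amounts to computing $\proj[1,i]{R}$ together with full $n$-tuple witnesses, which is precisely the role played by the \closure procedure of Lemma~\ref{lem60} applied to $\proj[1,i]{F}$ in the paper's proof. This is not a ``brief preprocessing stage'' that can be dismissed; it is the main computational content of the lemma and is the $\bigO(n)$-per-$i$ step that dominates the running time.

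There is also an internal inconsistency in the algorithm description: the first paragraph iterates over $\bg\in F$ and adds a single $\ph$-image per $\bg$, but the final paragraph makes $\bg_b$ depend on $i$, so the same $F$-tuple may need to be shifted differently for different positions $i$. The correct iteration is over $i$ and then over $c\in\proj_i{R(a,\cdot)}$ (which itself must be determined from $\proj[1,i]{R}$), and the first paragraph is never revised to reflect this. Once the \closure computation is made explicit and the iteration structure is fixed, your argument does essentially reduce to the paper's proof; the Mal'tsev-shift formula and the coordinate-wise prefix-preservation argument for property~\ref{item20} are sound.
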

\begin{proof}
    We abbreviate $R(a,x_2,\dots,x_n)$ to $R(a,\cdot)$.  For each
    $i=2,\dots,n$, determine $\cl{\proj[1,i]{F}} = \proj[1,i]{\cl{F}}
    = \proj[1,i]{R}$.  Note that $|\proj[1,i]{R}|\leq q^2\!$ and $\size{F} =
    \bigO(n)$ so this requires
    $\bigO(n)$ time for each $i$, and $\bigO(n^2)$ time in total.  We have
    $(a,b)\in \proj[1,i]{R}$ if, and only if, $b\in\proj{R(a,
    \cdot)}$.  Also, we have calculated a witness (with respect to
    $R$) for each $b\in
    \proj{R(a,\cdot)}$.  Let \tw be the usual congruence for $R$ and
    $\tw'$ the corresponding congruence for $R(a,\cdot)$.  Clearly
    $b\tw' c$ implies $b\tw c$, since there are witnesses
    $(a,\bu,b,\bv), (a,\bu,c,\bv')\in R$.  On the other hand, if $b\tw
    c$ and $b\in\proj{R(a,\cdot)}$, then $c\in \proj{R(a, \cdot)}$ and
    $b\tw' c$, since we have
    \begin{equation*}
        \begin{array}{c@{\hspace{1cm}}c@{\hspace{1cm}}c@{\hspace{5mm}}c}
            a\phdash  & \bu\phdash  & b & \bv\phantom{{}''}    \\
            a'        & \bu'        & b & \bv'\phdash          \\
            a'        & \bu'        & c & \bv''                \\ \hline
            a\phdash  & \bu\phdash  & c & \ph(\bv,\bv'\!,\bv'')\,.
        \end{array}
    \end{equation*}
    Thus, the equivalence classes of $\tw'$ are a subset of those of
    \tw.  Therefore we can construct $\tw'$ and a witness for
    each $b\in\proj{R(a,\cdot)}$, using $F$ and the $n$-tuples from
    the calculation of $\proj[1,i]{R}$.\qquad
\end{proof}

The following corollary is immediate, by iterating the
Lemma~\ref{lem90} $i\leq n$ times.

\begin{corollary}
\label{cor100}
    Given a frame $F$ for the relation $R(x_1,x_2,\dots,x_n)$, we can construct a frame for
    $R(a_1,\dots, a_i,x_{i+1},\dots,x_n)$ in
    $\bigO(n^3)$ time.
\end{corollary}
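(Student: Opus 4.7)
The plan is to apply Lemma~\ref{lem90} iteratively, fixing one variable at a time. Starting from the given small frame $F_0$ for $R(x_1,\ldots,x_n)$, I would apply Lemma~\ref{lem90} to obtain a frame $F_1$ for $R(a_1,x_2,\ldots,x_n)$ in $\bigO(n^2)$ time. Viewing $F_1$ as a frame whose first coordinate is pinned to $a_1$, I would then apply Lemma~\ref{lem90} again, in the role of the ``new'' first variable $x_2$, to obtain a frame for $R(a_1,a_2,x_3,\ldots,x_n)$; and so on through $i \leq n$ stages. The total cost is $\bigO(i \cdot n^2) = \bigO(n^3)$, giving the stated bound.

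The only bookkeeping point is that Lemma~\ref{lem90} requires its input to be a small frame. I would handle this by invoking Lemma~\ref{lem70} after each iteration, if necessary, to normalise the output to a small frame. Since each intermediate frame has size $\bigO(n)$, Lemma~\ref{lem70} contributes only $\bigO(n^2)$ per stage, which is absorbed into the same $\bigO(n^3)$ bound. Equivalently, one can verify directly from the construction in the proof of Lemma~\ref{lem90} that the frame produced there already records at most one witness per element per coordinate, so its size is $\bigO(n)$ and the small-frame condition is preserved up to a trivial trimming.

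There is no real obstacle: the corollary is essentially a straightforward loop around Lemma~\ref{lem90}. The single conceptual check is that ``fix $a_k$'' in the relation $R(a_1,\ldots,a_{k-1},x_k,\ldots,x_n)$ is precisely an instance of fixing the first free coordinate, which is exactly what Lemma~\ref{lem90} does, so the iteration is well-posed and the running-time arithmetic is immediate.
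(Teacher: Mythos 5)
Your proposal matches the paper's proof exactly: the paper states that the corollary is ``immediate, by iterating Lemma~\ref{lem90} $i\leq n$ times,'' which is precisely your loop, and the $\bigO(i\cdot n^2)=\bigO(n^3)$ accounting is the same. Your extra remark about re-normalising to a small frame between iterations is a reasonable piece of bookkeeping that the paper silently elides; your second justification (that the construction in Lemma~\ref{lem90} already yields at most one witness per $(a,i)$ pair, hence $\bigO(n)$ rows) is the cleaner way to discharge it, since invoking Lemma~\ref{lem70} literally would cost $\bigO(\size{F}^2)=\bigO((n\cdot|F|)^2)$ per stage, which is worse than $\bigO(n^2)$ when $|F|=\Theta(n)$.
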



\section{Constructing a frame}
\label{sec:Frame}

If $R$ is $\Gamma$-definable, then $\bt\in R$ can be decided in
polynomial time by checking that $\bt$ satisfies each of the defining
constraints.  We cannot use this method to decide $R =
\emptyset$ efficiently but this can be done trivially using any frame $F$ for $R$,
since $R=\emptyset$ if, and only if, $F=\emptyset$.  If $F\neq
\emptyset$, then any $\Bf\in F$ is a certificate that $R\neq
\emptyset$.  Similarly, given a frame for $R$ and any tuple
$(a_1,\dots,a_i)$, we can determine whether there is any $\bt\in R$
such that $(t_1,\dots,t_i)=(a_1,\dots,a_i)$, using the method of
Corollary~\ref{cor100}.

However, we must be able to construct some frame $F$ for $R$
efficiently.  If $\Gamma$ is strongly rectangular, we will show how to
determine a frame for a $\Gamma$-formula $\Phi$ having $m$ constraints
in $n$ variables, in time polynomial in $m$, $n$ and $\size{\Gamma}$.
This is achieved, as in~\cite{BulDal06}, by adding the constraints
sequentially.

If the $m$ constraints are $\con_1,\con_2,\dots,\con_m$, let $\Phi_s =
\con_1\wedge\con_2\wedge\cdots\wedge\con_s$.  Thus, $\Phi_0 = D^n\!$, the
complete $n$-ary relation on $D$, and $\Phi_m=\Phi$.  We begin by
constructing a frame for $\Phi_0$.

\begin{lemma}
\label{lem130}
    A small frame $F_0$ for $\Phi_0$ can be constructed in $\bigO(n)$
    time.
\end{lemma}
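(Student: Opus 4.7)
The plan is to build $F_0$ as a ``star'' around a single base tuple. Let $\mathbf{f}_0 = (d_1, d_1, \dots, d_1) \in D^n$ and, for each $i \in [n]$ and each $a \in D \setminus \{d_1\}$, let $\mathbf{f}_{i,a} \in D^n$ be the tuple that agrees with $\mathbf{f}_0$ in every coordinate except the $i$th, which carries the value $a$. Take $F_0 = \{\mathbf{f}_0\} \cup \{\mathbf{f}_{i,a} : i\in[n],\ a\in D\setminus\{d_1\}\}$, so $|F_0| = n(q-1)+1$, matching the upper bound for a small frame exactly.

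It remains to verify the two frame axioms against the ambient relation $\Phi_0 = D^n$. Because $\Phi_0$ is the complete relation, for every $i\in[n]$ and every $a,b\in D$ the $n$-tuples extending the prefix $(d_1,\dots,d_1)\in D^{i-1}$ by $a$ or by $b$ both lie in $\Phi_0$, so $a\tw b$. Hence $\tw$ has a single equivalence class $\cE_{i,1}=D$, and I would pick the common prefix $\bv_{i,1} = (d_1,\dots,d_1) \in D^{i-1}$. Property~\ref{item10} then holds because $\mathbf{f}_0$ contributes $d_1$ in coordinate $i$ and each $\mathbf{f}_{i,a}$ contributes $a$, so $\proj[i] F_0 = D = \proj[i] \Phi_0$. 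Property~\ref{item20} holds because every tuple in $F_0$ has the all-$d_1$ prefix of any length strictly less than $i$: the tuple $\mathbf{f}_0$ witnesses $(d_1,i)$ and each $\mathbf{f}_{i,a}$ witnesses $(a,i)$, both in the required form $\bv_{i,1}\,a$.

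For the running time, observe that each non-central tuple $\mathbf{f}_{i,a}$ is uniquely determined by the pair $(i,a)$ once $\mathbf{f}_0$ is known. Storing $F_0$ therefore amounts to writing down $\mathbf{f}_0$ together with the list of the $n(q-1)$ labels $(i,a)$, giving $\bigO(n)$ work in total since $q$ is a constant; the subsequent lemmas of Section~\ref{sec:Strong-rect} only ever consult individual tuples of $F_0$, so this representation is sufficient. No step here presents a real obstacle; the only point requiring care is to place the perturbations strictly at coordinate $i$ (and not earlier), so that each $\mathbf{f}_{i,a}$ shares the all-$d_1$ prefix of length $i-1$ demanded by property~\ref{item20}.
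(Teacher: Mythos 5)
Your construction is exactly the paper's: a constant base tuple $(d_1,\dots,d_1)$ together with its single-coordinate perturbations, and your verification of both frame axioms matches the paper's argument (the paper exhibits the same witness function $\wit(d,i)=\bt^d$, $\wit(a,i)=\bt^{a,i}$). Your added remark about storing $F_0$ implicitly via $(i,a)$ labels to justify the $\bigO(n)$ bound is a sensible clarification of a point the paper leaves implicit, but it does not change the substance.
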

\begin{proof}
    Let $d$ be any element of $D$ and let $F_0 = \set{\bt^d} \cup
    \set{\bt^{a,i}: i\in[n], a\in D\setminus d}$, where
    \begin{equation*}
       t^d_j = d
        \quad\text{and}\quad
        t^{a,i}_j = \begin{cases}\ a &\text{ if } j = i\\
                                \ d &\text{ otherwise}
                   \end{cases}
        \qquad(j\in[n]).
    \end{equation*}
    Clearly all these tuples are in $\Phi_0$.  Also $\wit(d,i)=\bt^d$
    and $\wit(a,i)=\bt^{a,i}$ ($a\neq d$), for all $i\in[n]$, is a
    witness function.  Further, we have $\proj[{[i-1]}]{\bt^{a,i}} =
    \proj[{[i-1]}]{\bt^{d}} = (d,\dots,d)$.  Thus, $F_0$ satisfies the
    conditions for being a frame.  We have $|F_0| = n(q-1)+1$, so
    $F_0$ is small.\qquad
\end{proof}

Note that $|F_0|$ matches the upper bound for the size of a small
frame.

Now, we show how to determine a frame for $\Phi_s$ given a frame for
$\Phi_{s-1}$.  We first show that this can be done in polynomial time
when $\size{\Gamma}=\bigO(1)$.  This is nonuniform \csp, the most
important case.

\begin{lemma}
\label{lem140}
    Given a frame $F$ for $\Phi$ and a constraint $\con$, a frame $F'$
    for $\Phi'=\Phi\wedge\con$ can be constructed in $\bigO(n^4)$ time.
\end{lemma}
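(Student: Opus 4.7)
Let $C = H(x_{i_1}, \ldots, x_{i_r})$ with $H \in \Gamma$ and scope $S = \{i_1, \ldots, i_r\} \subseteq [n]$; since $\Gamma$ is fixed, both $r$ and $|H|$ are $\bigO(1)$. The plan is to assemble a frame for $\Phi' = \Phi \wedge C$ by slicing the previous frame $F$ according to the tuples of $H$, and then repairing the result to enforce condition~\ref{item20} of a frame. First, for each $\bh \in H$, view the scope coordinates as coming first (permuting columns is free by the remark after Lemma~\ref{lem20}) and apply Corollary~\ref{cor100} to $F$ to obtain, in $\bigO(n^3)$ time, a frame $F_\bh$ for the ``slice'' $\Phi_\bh = \{\bt \in \Phi : \proj[S] \bt = \bh\}$. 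Every tuple produced lies in $\Phi'$, because its $S$-projection is in $H$. Iterating over $\bh \in H$ costs $\bigO(n^3)$ in total and produces $G = \bigcup_{\bh \in H} F_\bh \subseteq \Phi'$ of size $\bigO(n)$.

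Next I would verify property~\ref{item10}. Because every tuple of $\Phi'$ has some $S$-projection $\bh \in H$, we have $\proj[i] \Phi' = \bigcup_{\bh} \proj[i] \Phi_\bh = \bigcup_{\bh} \proj[i] F_\bh = \proj[i] G$ for each $i \in [n]$. The main obstacle is property~\ref{item20}: two elements $a, b \in \proj[i] \Phi'$ may be $\sim'_i$-equivalent in $\Phi'$ via witnesses with a common prefix that sit in \emph{different} slices $\Phi_\bh$ and $\Phi_{\bh'}$, so the common prefix need not appear in any single $F_\bh$ or even in $G$. To overcome this, I would proceed one position at a time: at position $i$, compute $\sim'_i$ on $\proj[i] \Phi'$ from the common-prefix information already recorded in each $F_\bh$ combined with Corollary~\ref{cor100} queries into $F$, which detect whether $a \sim'_i b$ by testing feasibility of the relation $\Phi(x_1, \ldots, x_{i-1}, a, \ldots) \cap \Phi(x_1, \ldots, x_{i-1}, b, \ldots)$ under the constraint $C$.

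Having determined each equivalence class $\cE'_{i,k}$ of $\sim'_i$, I would then produce a uniform prefix for it as follows. Pick a representative $a \in \cE'_{i,k}$ and a tuple $\bt^{a} \in G$ witnessing $(a, i)$, with prefix $\bu$. For every other $b \in \cE'_{i,k}$, take some witness $\bt^{b} \in G$ with prefix $\bu'$ and, crucially, a bridging tuple $\bt \in \Phi'$ with $i$th entry $b$ and prefix $\bu'$ too, obtained from $F_{\proj[S] \bt^{b}}$ or generated via a single Maltsev application to cross between slices. Applying $\ph$ to $(\bt^a, \bt^b, \bt)$ yields a tuple in $\Phi'$ (since $\ph$ preserves both $\Phi$ and the constraint relation $H$, hence $C$) with $i$th entry $b$ and with prefix $\bu$, exactly as required. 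Add this tuple to the evolving frame. Doing this for every class at every position adds $\bigO(n \cdot q) = \bigO(n)$ tuples, each requiring $\bigO(n)$ time to compute, plus $\bigO(n^3)$ to maintain the slice information via Corollary~\ref{cor100}.

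Finally, the resulting set is a frame for $\Phi'$ of size $\bigO(n)$; applying Lemma~\ref{lem70} once more produces a small frame $F'$ in $\bigO(n^2)$ time. The dominant cost is the $\bigO(n)$ invocations of Corollary~\ref{cor100} (one for each $\bh \in H$ in Phase~1, plus one per position per equivalence class during the prefix-repair phase), each costing $\bigO(n^3)$, giving the claimed $\bigO(n^4)$ bound. The principal difficulty to watch carefully in the write-up is the prefix-repair step: one must be sure that every pair $(a, b)$ with $a \sim'_i b$ is reachable by the Maltsev bridging described above, which ultimately relies on strong rectangularity of $\Phi'$ (inherited because $\Phi'$ is still preserved by $\ph$).
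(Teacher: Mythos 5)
Your high-level decomposition is genuinely different from the paper's. You slice $\Phi'$ by the tuples $\bh\in H$ and then union the per-slice frames, whereas the paper works position-by-position: for each $i\in[n]$ it forms a set $U_i\subseteq\Phi'$ whose projection onto $J_i=I\cup\set{i}$ equals $\proj[J_i]\Phi'$, and it recovers each equivalence class of $\tw$ on $\Phi'$ by picking $\bt\in U_i$, computing via Corollary~\ref{cor100} a frame for the restriction $\Phi(t_1,\dots,t_{i-1},x_i,\dots,x_n)$, and \emph{then} intersecting with the constraint. Both approaches handle property~\ref{item10} correctly, and you correctly pinpoint the crux: after taking $G=\bigcup_\bh F_\bh$, property~\ref{item20} can fail because $a$ and $b$ may be equivalent under $\tw$ for $\Phi'$ only through witnesses lying in different slices, with no common prefix visible in $G$.

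However, the prefix-repair step as you describe it has a genuine gap. You take $\bt^a$ with prefix $\bu$ and $i$th entry $a$, a witness $\bt^b$ with prefix $\bu'$ and $i$th entry $b$, and a ``bridging'' tuple $\bt\in\Phi'$ with prefix $\bu'$ and $i$th entry $b$, and apply $\ph(\bt^a,\bt^b,\bt)$. By the \maltsev{} property this has prefix $\ph(\bu,\bu',\bu')=\bu$ but $i$th entry $\ph(a,b,b)=a$, not $b$, so the application produces nothing new. What is actually needed is a tuple $\bs\in\Phi'$ with prefix $\bu'$ and $i$th entry $a$, so that $\ph(\bt^a,\bs,\bt^b)$ has prefix $\bu$ and $i$th entry $b$; but you neither show that such an $\bs$ lies in $G$ (in general it does not, since $G$ is not yet a frame for $\Phi'$), nor explain how to compute one --- and finding such an $\bs$ is essentially the problem you are trying to solve. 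The paper avoids the issue entirely by fixing the prefix $(t_1,\dots,t_{i-1})$ \emph{before} intersecting with the constraint $\con$: every tuple of the restricted relation then shares that prefix automatically, so no \maltsev{} shift is required, and the intersection (computed via \closure) yields the whole equivalence class of $t_i$ in $\Phi'$ together with valid witnesses. Your approach could plausibly be salvaged along similar lines --- for instance, by fixing $\bu$ and computing, for each slice $\bh$, a frame for $\Phi_\bh(u_1,\dots,u_{i-1},x_i,\dots,x_n)$ and unioning --- but as written the repair step does not go through.
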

\begin{proof}
    Suppose that $\con = H(x_{i_1},x_{i_2},\dots,x_{i_r})$, where
    $H\in\Gamma$ has arity $r$.  We will assume that $x_{i_1},
    x_{i_2}, \dots, x_{i_r}$ are distinct since, otherwise, we can
    consider a smaller relation $H'$ over the distinct variables.  Let
    $I = \set{i_1,i_2,\dots,i_r}$.  For each $i\in [n]$, let $J_i =
    I\cup\set{i}$ and determine $T_i\subseteq \Phi$ such that
    $\proj[J_i] T_i = \cl{\proj[J_i]{\Phi}}$ using \closure.  If $\ell =
    |\proj[I]{\Phi}|$, then $|T_i|\leq q\ell$, so this takes time
    $\bigO(n\ell^3 + r\ell^4)$ by Lemma~\ref{lem60}.  But, since
    $\size{\Gamma} = \bigO(1)$, we have $r=\bigO(1)$, $\ell\leq q^r=\bigO(1)$ and
    $\bigO(n\ell^3 + r\ell^4) = \bigO(n)$.  The entire computation for all $i$
    therefore takes time $\bigO(n^2)$ and we have $\sum_i|T_i|=\bigO(n)$.

    Determine $U_i$, the set of tuples in $T_i$ that are consistent
    with $\Theta$, so $U_i \subseteq \Phi'\!$.  Now $U_i$ contains a
    witness for each $a \in \proj{\Phi'}\!$, since
    \begin{equation*}
       \proj[J_i]{U_i}
            = (\proj[J_i] T_i) \cap \con
            = (\cl{\proj[J_i]{F}})\cap \con
            = (\proj[J_i]{\Phi})\cap \con
            = \proj[J_i]{(\Phi\wedge \con)}
            = \proj[J_i]{\Phi'}\,.
    \end{equation*}
    Thus, in particular, $\proj{U_i} = \proj{\Phi'}\!$.  We now do the
    following for each $i\in[n]$.

    Let $\cA \gets \proj{U_i}$ and repeat the following until $\cA =
    \emptyset$.  Choose $\bt\in U_i$ such that $t_i\in \cA$.
    Determine a frame $F^\star$ for $\Phi(t_1, \dots, t_{i-1}, x_i,
    \dots, x_n)$ in $\bigO(n^3)$ time, using Corollary~\ref{cor100}. Clearly
    $\bt\in \cl{F^\star}\!$, so $F^\star \neq \emptyset$. Now
determine the intersection of $\Theta$ with the relation $R^\star =
\Phi(t_1, \dots, t_{i-1}, x_i, \dots, x_n)$ generated by $F^\star\!$, using \closure, as was done for $\Phi$ above. This takes $\bigO(n)$ time; let the resulting relation be $R^\circ\!$. Now, by Corollary~\ref{cor30}, $\proj{R^\circ}$ is the equivalence class $\cE = \set{a: a\tw' t_i}$ of $t_i$ in $\Phi'\!$.  For each $a\in \cE$, we can find a witness $\wit'(a,i)\in R^\circ$ for $a \in \proj{\Phi'}$ and these have the common prefix $(t_1, \dots, t_{i-1})$. We set $\cA\gets \cA\setminus \cE$, and repeat.

    At the end of this process, $\wit'$ is the witness
    function for a frame $F'$ for $\Phi'\!$.  The total time required is
    $\bigO(n^3|F'|) = \bigO(n^4)$.\qquad
\end{proof}

\begin{lemma}
\label{lem150}
    A frame $F$ for $\Phi$ can be constructed in time $\bigO(mn^4)$.
\end{lemma}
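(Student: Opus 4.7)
The plan is a straightforward induction on the number of constraints, using the two preceding lemmas as the workhorses. Recall that $\Phi = \con_1 \wedge \con_2 \wedge \cdots \wedge \con_m$ and $\Phi_s = \con_1 \wedge \cdots \wedge \con_s$, so $\Phi_0 = D^n$ and $\Phi_m = \Phi$. I would build a frame $F_s$ for $\Phi_s$ sequentially, for $s=0,1,\dots,m$.

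For the base case, I would invoke Lemma~\ref{lem130} to produce a small frame $F_0$ for $\Phi_0$ in time $\bigO(n)$. For the inductive step, assuming we have a small frame $F_{s-1}$ for $\Phi_{s-1}$, I would apply Lemma~\ref{lem140} to the constraint $\con_s$ to obtain a frame $F_s$ for $\Phi_s = \Phi_{s-1} \wedge \con_s$ in time $\bigO(n^4)$. One small bookkeeping point: Lemma~\ref{lem140} presupposes that the input frame is small; if the frame returned is not guaranteed to be small, I would invoke Lemma~\ref{lem70} after each step to reduce it to a small frame in time $\bigO(\size{F_s}^2) = \bigO(n^2)$, which is absorbed into the $\bigO(n^4)$ bound.

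Summing the running times gives $\bigO(n) + m \cdot \bigO(n^4) = \bigO(mn^4)$, as required. There is no real obstacle here: all the structural work has already been done in Lemmas~\ref{lem130} and~\ref{lem140}, and this lemma is essentially just the iteration that composes them. The only thing worth being careful about is ensuring that we are entitled to apply Lemma~\ref{lem140} at every step, which follows from the fact that $\Phi_{s-1}$, being pp-definable over the strongly rectangular $\Gamma$, is itself strongly rectangular (so that a frame in the sense of Section~\ref{sec:Strong-rect} makes sense and its properties persist).
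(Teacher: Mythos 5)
Your proof is correct and follows exactly the paper's approach: build $F_0$ via Lemma~\ref{lem130}, then iterate Lemma~\ref{lem140} for each constraint. The small bookkeeping remark about re-invoking Lemma~\ref{lem70} is a sensible addition (the paper leaves the smallness of the intermediate frames implicit), but it does not change the argument or the bound.
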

\begin{proof}
    Construct $\Phi_0$ in $\bigO(n)$ time.  Then, apply
    Lemma~\ref{lem140} to construct a frame $F_i$ for $\Phi_i$ from a
    frame $F_{i-1}$ for $\Phi_{i-1}$, for each $i\in[m]$.  At
    termination, set $\Phi\gets\Phi_m$ and $F\gets F_m$.\qquad
\end{proof}

Since a relation has $\emptyset$ for a frame if, and only if, it is
empty (and $\emptyset$ has no other frame), we can determine in time
$\bigO(mn^4)$ whether there is a satisfying assignment to a \csp{}
instance in a fixed strongly rectangular vocabulary.  By
Lemma~\ref{lem20}, we have re-proven the main result of~\cite{BulDal06}.

We assumed above that $\size{\Gamma} = \bigO(1)$.  However, we can still
perform the computations of Lemma~\ref{lem140} in time polynomial in
$m$, $n$ and $\size{\Gamma}$.

\begin{lemma}
\label{lem160}
    A frame for $\Phi$ can be constructed in time $\bigO(mn^4 +
    mn^2\size{\Gamma}^4)$.
\end{lemma}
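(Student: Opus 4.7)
The plan is to follow Lemma~\ref{lem140} closely, but to track dependence on $\size{\Gamma}$ in each step. In that lemma, every step is already polynomial in $n$ except one: the use of \closure{} to compute $T_i$ with $\proj[J_i] T_i=\cl{\proj[J_i] F}$, whose running time via Lemma~\ref{lem60} involves $\ell=|\proj[I]\Phi|$, a quantity bounded only by $q^{r}$, where $r=|I|$ is the arity of the new constraint $\Theta=H(x_{i_1},\dots,x_{i_r})$. Since $r\leq\size{H}\leq\size{\Gamma}$, a direct bound becomes exponential in $\size{\Gamma}$, and the whole point of the plan is to reduce the effective $\ell$ to $\bigO(\size{\Gamma})$.

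The key observation is that, for the purpose of building a frame for $\Phi'=\Phi\wedge\Theta$, we only care about tuples of $\Phi$ whose projection onto $I$ lies in $H$, and there are only $|H|\leq\size{\Gamma}$ such projections. Thus, for each $\bh\in H$, I would permute variables so that the coordinates in $I$ appear first and then apply Corollary~\ref{cor100} to construct a frame $F^{\bh}$ for the restriction $\Phi\wedge\bigwedge_{k\in[r]}\chi_{h_k}(x_{i_k})$. Iterating Lemma~\ref{lem90} $r$ times, each $\bh$ costs $\bigO(rn^{2})$, and the total over $\bh\in H$ is $\bigO(r\,|H|\,n^{2})=\bigO(\size{H}\,n^{2})\leq\bigO(\size{\Gamma}\,n^{2})$ per constraint. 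Letting $F^{\star}=\bigcup_{\bh\in H}F^{\bh}$, one checks that $F^{\star}\subseteq\Phi'$, that $\cl{F^{\star}}=\Phi'$ (every tuple of $\Phi'$ lies in some $\Phi(\bh,\cdot)=\cl{F^{\bh}}$), and that $\proj F^{\star}=\proj\Phi'$ for each $i$; so $F^{\star}$ has $\bigO(n\size{\Gamma})$ tuples and may be used in place of the set $U_{i}$ of Lemma~\ref{lem140}.

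The remainder of the algorithm then runs exactly as in Lemma~\ref{lem140}, with any subsequent \closure{}-computation (in particular the intersection of $\Theta$ with the restricted relation $R^{\star}$) similarly handled by the branch-over-$\bh$ trick, keeping the effective $\ell$ at $\bigO(\size{\Gamma})$. Substituting into Lemma~\ref{lem60}'s bound $\bigO(n\ell^{3}+s\ell^{4})$ with $s=\bigO(n)$ and $\ell=\bigO(\size{\Gamma})$ yields $\bigO(n\,\size{\Gamma}^{4})$ per invocation, so $\bigO(n^{2}\size{\Gamma}^{4})$ summed over the $n$ positions $i$; the $\bigO(n^{3})$ Corollary~\ref{cor100} calls that implement the common-prefix construction still contribute $\bigO(n^{4})$ per constraint as in Lemma~\ref{lem140}. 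Summing over the $m$ constraints yields the stated bound $\bigO(mn^{4}+mn^{2}\size{\Gamma}^{4})$.

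The main obstacle will be verifying that $F^{\star}$ is a valid substitute for $U_{i}$, even though it satisfies only the covering property~\ref{item10} of a frame and may fail property~\ref{item20}: witnesses for a common $\tw'$-equivalence class can lie in distinct $F^{\bh}$ and so carry different prefixes. The crux is that the common-prefix construction of Lemma~\ref{lem140} uses $U_{i}$ only to exhibit \emph{some} witness for each $a\in\proj\Phi'$; the prefix used is that of an arbitrarily chosen representative, and the remaining witnesses with that prefix are rebuilt not from $F^{\star}$ but from the original frame $F$ of $\Phi$, via Corollary~\ref{cor100} followed by the branch-over-$\bh$ intersection with $\Theta$. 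Thus the failure of property~\ref{item20} in $F^{\star}$ is immaterial, and the construction still returns a small frame for $\Phi'$ within the claimed time bound.
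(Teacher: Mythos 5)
Your plan diverges genuinely from the paper's proof, which never needs to fix a non-prefix set of coordinates: rather than branching over $\bh\in H$, the paper re-imposes $\Theta$ incrementally as the chain of projections $\Theta_k = H_k(x_{i_1},\dots,x_{i_k})$ with $H_k=\proj[{[k]}]{H}$, and exploits the fact that $\Psi_{k-1}$ already satisfies $\Theta_{k-1}$ so that $|\proj[I_k]{\Psi_{k-1}}|\leq q|H_{k-1}|\leq q|H|$, bounding the effective $\ell$ in every \closure call by $\bigO(\size{\Gamma})$. Your route, however, has a real gap. To obtain the frames $F^{\bh}$ you propose to ``permute variables so that the coordinates in $I$ appear first and then apply Corollary~\ref{cor100}'', but the frame $F$ you are handed for $\Phi$ is defined relative to the \emph{original} coordinate ordering: properties~\ref{item10}--\ref{item20}, and in particular the congruences \tw and the common prefixes $\bv_{i,k}$, are not invariant under permuting coordinates, so $F$ is not in general a frame for a reordered copy of $\Phi$, and the paper gives no efficient procedure for translating a frame from one ordering to another.

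Nor can the permutation be avoided by directly generalising Lemma~\ref{lem90} to fix an arbitrary coordinate $x_j$: the proof of that lemma shows that the classes of $\tw'$ coarsen the restriction of \tw only at positions $i>j=1$, and for $i<j$ the classes of $\tw'$ may be a strict refinement. For instance, with $R=\set{(0,0,0),(0,1,1),(1,0,1),(1,1,0)}$ (preserved by coordinatewise XOR, hence strongly rectangular), $j=3$ and $i=2$, we have $0\tw[2]1$ in $R$ but \emph{not} in $R(\cdot,\cdot,0)=\set{(0,0,0),(1,1,0)}$. The only tool the paper actually uses to fix a non-prefix coordinate in a frame-preserving way is to add the unary constraint $\chi_a(x_j)$ via Lemma~\ref{lem140}, as in the proof of Lemma~\ref{lem200}, but that costs $\bigO(n^4)$ per coordinate rather than the $\bigO(n^2)$ you charge; branching over all $\bh\in H$ then inflates the total to $\bigO(m\,\size{\Gamma}\,n^4)$, which is not $\bigO(mn^4+mn^2\size{\Gamma}^4)$ (take $\size{\Gamma}\approx\sqrt{n}$). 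So as written the argument does not establish the claimed bound, and the natural repair via unary constraints fails to recover it.
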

\begin{proof}
    We indicate how the proof of Lemma~\ref{lem140} must be modified.
    It is only the computation of the $U_i$ that requires improvement,
    which we achieve by using a device from~\cite{BulDal06}.  Suppose
    we wish to add a constraint $\con = H(x_{i_1}, x_{i_2}, \dots,
    x_{i_r})$ to $\Phi$.  Instead, we add in turn the $r$ constraints
    $\con_k = H_k(x_{i_1}, x_{i_2}, \dots, x_{i_k})$, where $H_k =
    \proj[{[k]}]H$ for each $k\in[r]$.  Thus, $|H_1|\leq q$ and $H_r =
    H$.  Letting $\Psi_0 = \Phi$, we successively calculate frames for
    $\Psi_k = \Psi_{k-1} \wedge \con_k$ ($k\in[r]$), so $\Psi_r =
    \Phi'\!$.

    If $I_k = \set{i_1,i_2,\dots,i_k}$ ($k\in[r]$), we have
    \begin{equation*}
       \ell_k \ = \ |\proj[I_k]\Psi_{k-1}|
            \ \leq \ q|\proj[I_{k-1}]\Psi_{k-1}|
            \ \leq \ q|H_{k-1}| \ \leq \ q|H|\,.
     \end{equation*}
    Thus, for each $k\in[r]$, the time required to compute $U_i$ and $R^\circ$ in
    Lemma~\ref{lem140} becomes $\bigO(n^2|H|^3 + nr|H|^4)$.  In total, the time requirement
    is $\bigO(n^2r|H|^3 + nr^2|H|^4) = \bigO(n^2\size{H}^4) =
    \bigO(n^2\size{\Gamma}^4)$.\qquad
\end{proof}


\section{Counting problems}
\label{sec:Counting}

We consider the problem of determining $|R_\Phi|$, which we abbreviate
to $|\Phi|$, where $\Phi$ is a $\Gamma$-formula with $m$ constraints
and $n$ variables.  We require the computations to be done in time
polynomial in the size of the input $\Phi$ and we assume
$\size{\Gamma} = \bigO(1)$.  In fact, the size of $\Phi$ can be measured
by a polynomial in $n$.  A repeat of a constraint can be removed, since this
does not change $R_\Phi$.  Then an $r$-ary relation in $\Gamma$ can
give rise to $\bigO(n^r)$ constraints. We will assume that every
variable appears in at least one constraint. Otherwise, suppose $n_0$
variables do not appear: letting $\Phi'$ be $\Phi$  with these
variables deleted, we have $|\Phi|=q^{n_0}|\Phi'|$. Hence we will
assume that $m = \Omega(n)$.

Following Bulatov and Dalmau~\cite{BulDal07}, we call this
computational problem \ncsp.  If $\Gamma = \set{H,\equ}$, we write
\ncsp[H]. We will use the following result from~\cite{BulDal07},
which we prove here for completeness. The corollary is immediate.
\begin{theorem}[Bulatov and Dalmau~\cite{BulDal07}]
\label{thm10}
    Let $\frS=(D,\Gamma)$, $\frS'=(D,\Gamma')$ be relational structures
    with $\Gamma'\subseteq\cocl[\Gamma]$.  Then \ncsp[\Gamma'] is
    polynomial-time reducible to \ncsp.
\end{theorem}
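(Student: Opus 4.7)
The plan is to reduce $\ncsp[\Gamma']$ to $\ncsp$ by substituting pp-definitions and then correcting the resulting multiplicities via polynomial interpolation. For each $R' \in \Gamma' \setminus \Gamma$, fix a pp-definition $R'(\bx) \equiv \exists \by\,\psi_{R'}(\bx,\by)$ over $\Gamma$; such a definition exists because $\Gamma' \subseteq \cocl$. Given an instance $\Phi' = \bigwedge_{i=1}^{m} R'_i(\bx_i)$ of $\ncsp[\Gamma']$, I would substitute each constraint $R'_i(\bx_i)$ with the body $\psi_{R'_i}(\bx_i,\by_i)$ of its pp-definition, introducing fresh witness tuples $\by_i$ per constraint. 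The result is a $\Gamma$-formula $\Phi$ of polynomial size, and writing $N_i(\bx_i) := |\{\by : \psi_{R'_i}(\bx_i,\by)\}|$, a single oracle call returns
\[
|R_\Phi|\ =\ \sum_{\bx \in R_{\Phi'}} \prod_{i=1}^m N_i(\bx_i),
\]
which is a weighted count rather than the target $|R_{\Phi'}|$ itself.

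To cancel the weights, I would use polynomial interpolation via a family of ``$k$-fold'' variants. Let $\Phi^{(k)}$ be the $\Gamma$-formula obtained from $\Phi$ by replicating each witness block $\by_i$ into $k$ fresh copies, each copy subject to the same constraint $\psi_{R'_i}(\bx_i,\cdot)$. Then
\[
|R_{\Phi^{(k)}}|\ =\ \sum_{\bx \in R_{\Phi'}} \prod_{i=1}^m N_i(\bx_i)^k\ =\ \sum_{v \geq 1} H_v\, v^k,
\]
where $H_v := |\{\bx \in R_{\Phi'} : \prod_i N_i(\bx_i) = v\}|$ and the second equality uses $N_i(\bx_i) \geq 1$ on $R_{\Phi'}$. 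Each $N_i$ is bounded by an absolute constant $M$ depending only on $\Gamma'$, so a Vandermonde inversion applied to the sequence $|R_{\Phi^{(k)}}|$ for sufficiently many $k$ recovers each $H_v$, after which $|R_{\Phi'}| = \sum_{v \geq 1} H_v$.

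The hard part will be keeping the number of oracle queries polynomial. The global product $\prod_i N_i(\bx_i)$ ranges over $\{0,1,\ldots,M^m\}$, so a single monolithic Vandermonde inversion could need up to $\bigO(M^m)$ queries, which is exponential in $m$. To circumvent this, I would process the finitely many $R' \in \Gamma' \setminus \Gamma$ one at a time, chaining Turing reductions $\ncsp[\Gamma'] \leq_T \ncsp[\Gamma'_1] \leq_T \cdots \leq_T \ncsp[\Gamma]$ where each $\Gamma'_{j+1}$ differs from $\Gamma'_j$ by removal of one pp-definable relation. Within each link, the oracle still handles the relations that have not yet been eliminated, so the occurrences of the currently-targeted relation can be peeled off one at a time using only $\bigO(M)$ queries apiece. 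Composing the finite chain yields the required polynomial-time Turing reduction from $\ncsp[\Gamma']$ to $\ncsp[\Gamma]$.
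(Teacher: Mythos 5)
Your opening steps coincide with the paper's: substitute the pp-definitions to get a $\Gamma$-formula $\Phi^*$ with $|R_{\Phi^*}| = \sum_{\bx \in R_{\Phi'}} \prod_i N_i(\bx_i)$, form $k$-fold replications $\Phi^{(k)}$ of the witness blocks, and interpolate. Where you depart from the paper is in bounding the number of interpolation nodes, and your ``peeling'' fix has a genuine flaw. If a single relation $R'$ occurs in $t = \Theta(m)$ constraints of the input, then after peeling one occurrence the residual formula still contains $R'$ in the other $t-1$ positions, and the oracle for $\ncsp[\Gamma'_{j+1}]$ --- a language that no longer contains $R'$ --- cannot evaluate it. You would be forced to recurse, and the $\bigO(M)$ queries per occurrence compound to $\bigO(M)^t$, which is exponential in $m$. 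Substituting all $t$ occurrences simultaneously is just the monolithic interpolation you had already set aside. So the chain of one-relation-at-a-time reductions does not remove the difficulty you identified; it merely relocates it.

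What is missing is the observation that the monolithic interpolation in fact needs only polynomially many nodes. Your estimate that the product ranges over ``up to $\bigO(M^m)$'' values is an upper bound on its \emph{magnitude}, not on the number of \emph{distinct} values it attains. On $\bx \in R_{\Phi'}$ each factor $N_i(\bx_i)$ lies in $[M]$ for the absolute constant $M$, so the product $\prod_{i=1}^m N_i(\bx_i)$ is determined by the multiset of its factors. Setting $\mu_j(\bx) = |\{i : N_i(\bx_i) = j\}|$ for $j \in [M]$, the vector $(\mu_1(\bx), \dots, \mu_M(\bx))$ has $M$ non-negative entries summing to $m$, so there are fewer than $m^M$ possible vectors, and the product equals $J(\bm) = \prod_{j=1}^M j^{m_j}$, which therefore takes at most $m^M$ distinct values --- polynomial in $m$. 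This is exactly the paper's argument: it writes $|\Phi^*_s| = \sum_{\bm} K(\bm) J(\bm)^s$, notes that the number of distinct $\bm$ (hence of distinct bases $J(\bm)$) is $L < m^M$, and then interpolates, via Lemma~3.2 of Dyer and Greenhill, to recover $\sum_\bm K(\bm) = |R_{\Phi'}|$ using a polynomial number of oracle calls. With this counting observation in place, no peeling or chaining is required, and your original ``monolithic'' plan goes through.
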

\begin{proof}
Let each $H'\in\Gamma'$ have pp-definition $H'(\bx)=\exists
\by\,H^*(\bx,\by)$, with $H^*(\bx,\by)$ a $\Gamma$-formula. If all
relations in $\Gamma$ have arity at most $r$ and at most $\ell$ tuples
and all the formulae $H^*$ are conjunctions of at most $k$
constraints, then each $H^*$ has arity at most $kr$ and $|H^*|\leq
\ell^k\!$. Observe that $k$, $\ell$ and $r$ are constants in
\ncsp[\Gamma'].

Consider any $\Gamma'$ formula $\Phi(\bx)=\Theta_1\wedge\cdots\wedge\Theta_m$,
where $\bx=(x_1,\ldots,x_n)$. Now, if $\Theta_i=H'(\bx)$, let $\Theta^*_i = H^*(\bx,\by_i)$, where the $\by_i$ ($i\in[m]$) are new variables. Let $\bz=(\by_1,\ldots,\by_m)$ and consider the $\Gamma$-formula $\Phi^*(\bx,\bz) = \Theta^*_1\wedge\cdots\wedge\Theta^*_m$. This is an instance of \ncsp, with at most $km$ constraints and $n+krm$ variables. Now, for $\bx\in\Phi$, let
\begin{equation*}N_i(\bx)\ =\ \big|\set{\by_i\,\colon(\bx,\by_i)\in \Theta^*_i}\big|\ \leq\ |H^*|\ \leq \ell^k\qquad(i\in[m]),
\end{equation*}
and let $N=\max\set{N_i(\bx): i\in[m],\,\bx\in\Phi}\leq \ell^k\!$. Now let
\begin{equation*}
   \mu_j(\bx)\ =\ \big|\set{i\in[m]\,\colon N_i(\bx)=j}\big| \qquad(j\in[N]).
\end{equation*}
Clearly $\sum_{j=1}^N \mu_j(\bx)=m$ for all $\bx\in \Phi$. Let
\begin{equation*}
    \bM\ =\ \set{(\mu_1(\bx), \dots, \mu_N(\bx)): \bx\in \Phi}\,.
\end{equation*}
Let $L=|\bM|$. Clearly, $|\bM|< m^N\!$, so $L$ has bit-size $\bigO(m)$. Now, for $\bm\in\bM$, let
\begin{equation*}
K(\bm)\ =\ \big|\set{\bx\in \Phi\,\colon \mu_j(\bx)=m_j,\, j\in[N]}\big|\ \leq\ q^n\ \leq\ q^m\,.
\end{equation*}
Thus, $|\Phi|=\sum_{\bm\in\bM}K(\bm)$.
Now let $J(\bm)=\prod_{j=1}^N {j}^{m_j}< N^m\!$. Thus, the $J(\bm)$, $K(\bm)$ ($\bm\in[\bM]$) are numbers with $\bigO(m)$ bits. Then we have
\begin{equation*}
|\Phi^*|\ =\ \sum_{\bx\in \Phi}\, \prod_{i\in[m]} N_i(\bx)
\ =\ \sum_{\bm\in\bM} K(\bm) \prod_{j=1}^N {j}^{m_j}
\ =\ \sum_{\bm\in\bM} K(\bm) J(\bm)\,.
\end{equation*}

Now, for $s\in[L]$, consider the  $\Gamma$-formulae
\begin{equation*}
\Phi^*_s(\bx,\bz_1,\ldots,\bz_s)\ =\ \bigwedge_{i\in[s]}\Phi^*(\bx,\bz_i)\,,
\end{equation*}
where $\bz_i$ ($i\in[s]$) are distinct variables.
Then $\Phi^*_s$ is an instance of \ncsp, with at most $kms$ constraints and $krms$ variables, and we clearly have
\begin{equation*}
|\Phi^*_s|\ =\ \sum_{\bm\in\bM} K(\bm) J(\bm)^s.
\end{equation*}
Note that $\Phi^*_s$ is of size polynomial in $m$. Therefore we can evaluate $|\Phi^*_s|$ for all $s\in[L]$ using a polynomial number of calls to an oracle for \ncsp, each having input of size polynomial in $m$. It then follows, using~\cite[Lemma~3.2]{DyeGre00}, that we can recover  $\sum_{\bm\in\bM}K(\bm)=|\Phi|$ from the values of the $|\Phi^*_s|$ ($s\in[L]$) in time polynomial in $L$, which is polynomial in $m$.\qquad
\end{proof}
\begin{corollary}
\label{cor40}
    If $H\in\cocl$ and \ncsp[H] is \numpc, then \ncsp is \numpc.
\end{corollary}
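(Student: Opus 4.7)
The plan is to derive Corollary~\ref{cor40} as a direct consequence of Theorem~\ref{thm10} together with the observation that \ncsp always lies in \nump. There is no real obstacle here; the point is simply to chase the definitions.

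First I would take $\Gamma' = \{H, \equ\}$ and verify that $\Gamma' \subseteq \cocl$. The relation $H$ lies in $\cocl$ by hypothesis, and $\equ \in \Gamma$ by our standing assumption on constraint languages, so $\equ \in \cocl$ automatically. Applying Theorem~\ref{thm10} to the pair $\frS=(D,\Gamma)$, $\frS'=(D,\Gamma')$ yields a polynomial-time Turing reduction from $\ncsp[\Gamma'] = \ncsp[H]$ to \ncsp. Since \ncsp[H] is \numpc by assumption, composing any Cook reduction from a \nump-complete problem to \ncsp[H] with the reduction above shows that \ncsp is \numph.

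Finally I would note that \ncsp{} is in \nump: on input a $\Gamma$-formula $\Phi$, a nondeterministic polynomial-time machine can guess an assignment $\bx\colon V\to D$ (using $n\log q$ bits) and deterministically check each of the $m$ constraints in polynomial time, accepting iff $\bx\in R_\Phi$. The number of accepting computations is exactly $|R_\Phi|=|\Phi|$, so $\ncsp\in\nump$. Combining membership in \nump with \nump-hardness gives \numpc{}, completing the proof.
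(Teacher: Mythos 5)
Your proof is correct and is exactly the intended argument: the paper simply declares the corollary ``immediate'' from Theorem~\ref{thm10}, and your write-up spells out that derivation (take $\Gamma'=\{H,\equ\}\subseteq\cocl$, apply Theorem~\ref{thm10} to get a Cook reduction from $\ncsp[H]$ to \ncsp, compose to get \numph{}-ness, and observe $\ncsp\in\nump$ by guess-and-check).
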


First, we apply Corollary~\ref{cor40} to give a short proof of the
main result of~\cite{BulDal07}.  (Bulatov and Dalmau phrase the result
in terms of the existence of a \maltsev polymorphism but, by
Lemma~\ref{lem20}, our phrasing is equivalent.)

\begin{lemma}[Bulatov and Dalmau~\cite{BulDal07}]
\label{lem170}
    If the constraint language $\Gamma$ is not strongly rectangular, then \ncsp is \numpc.
\end{lemma}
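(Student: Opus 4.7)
The plan is to extract a non-rectangular binary relation from $\cocl$, invoke a known dichotomy for counting homomorphisms to show that the associated counting problem is \numph, and transfer hardness back to $\ncsp$ via Corollary~\ref{cor40}.

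First, the hypothesis gives some relation $B\in\cocl$ of arity $n\geq 2$ and a split of its coordinates $[n]=I\sqcup J$ such that the binary relation $B^\star\subseteq D^{|I|}\times D^{|J|}$ obtained by grouping coordinates fails to be rectangular: there exist $\ba,\bb\in D^{|I|}$ and $\bc,\bd\in D^{|J|}$ with $(\ba,\bc),(\ba,\bd),(\bb,\bc)\in B^\star$ but $(\bb,\bd)\notin B^\star$. Since permutations of columns of a pp-definable relation are pp-definable, $B^\star\in\cocl$. Viewed as a $0/1$ matrix $M$ indexed by $\proj[I]{B}\times\proj[J]{B}$, the failure of rectangularity says precisely that no reordering of rows and columns makes $M$ a block-diagonal matrix whose diagonal blocks are all-ones rectangles; equivalently, the bipartite graph $\cG_M$ with biadjacency matrix $M$ is not a disjoint union of complete bipartite subgraphs.

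Second, by the dichotomy of Dyer and Greenhill~\cite{DyeGre00}, generalised to the asymmetric weighted setting by Bulatov and Grohe~\cite{BulGro05}, computing the partition function $Z_M(G)$---the number of homomorphisms from an input bipartite graph $G$ into $\cG_M$---is \numph. To reduce $Z_M$ to $\ncsp[B^\star]$, I take an input bipartite graph $G$ with parts $L,R$, allocate a block of $|I|$ variables $\bx_u$ over $D$ for each $u\in L$ and a block of $|J|$ variables $\by_v$ over $D$ for each $v\in R$, and form the $B^\star$-formula $\Phi_G=\bigwedge_{(u,v)\in E(G)}B^\star(\bx_u,\by_v)$. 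Satisfying assignments of $\Phi_G$ are in bijection with homomorphisms $G\to\cG_M$, so $|\Phi_G|=Z_M(G)$. Chaining this polynomial-time reduction with Corollary~\ref{cor40} applied to $B^\star\in\cocl$ yields $Z_M\leq_T\ncsp[B^\star]\leq_T\ncsp$, so $\ncsp$ is \numpc.

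The main subtle point is ensuring that non-rectangularity of $B^\star$ in the relational sense corresponds exactly to the negated tractable case of the Bulatov--Grohe/Dyer--Greenhill dichotomies, which are phrased in terms of block-diagonal rank-one structure on symmetric non-negative matrices. The standard symmetrization $\bigl(\begin{smallmatrix}0&M\\M^\top&0\end{smallmatrix}\bigr)$ bridges the asymmetric and symmetric settings, and the connected blocks of this symmetric matrix have rank one precisely when the corresponding rectangular submatrices of $M$ are all-ones, i.e., precisely when $B^\star$ is rectangular. Hence non-rectangularity of $B^\star$ lands exactly in the \numph{} case of the dichotomy, and this is what drives the whole reduction.
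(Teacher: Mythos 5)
Your proof is correct and reaches the same conclusion, but it takes a genuinely different route from the paper's. The paper's proof of Lemma~\ref{lem170} uses only Dyer and Greenhill's dichotomy~\cite{DyeGre00} for counting \emph{undirected} graph homomorphisms. Because the undirected homomorphism count from a connected bipartite $G$ into the bipartite graph $\cG_B$ allows \emph{both} orientations of the bipartition, the paper builds two formulas $\Phi_1$ and $\Phi_2$ (one for each orientation of the constraint $B$) and shows $|\Phi_1|+|\Phi_2| = \#\Hom(G,\cG_B)$; connectedness of $G$ is what allows the two-term split. You instead construct a single oriented formula $\Phi_G$ and invoke the Bulatov--Grohe weighted dichotomy (Theorem~\ref{thm30}) on the $0/1$ matrix $M$. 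This is exactly the approach the paper itself uses for the stronger hardness result, Lemma~\ref{lem180}, and it avoids the orientation bookkeeping at the cost of appealing to the weighted machinery.

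Two small points to tighten. First, you introduce $Z_M(G)$ as ``the number of homomorphisms from an input bipartite graph $G$ into $\cG_M$'' and then claim $|\Phi_G| = Z_M(G)$; but $|\Phi_G|$ counts only homomorphisms respecting the fixed orientation $L\to\proj[1]{B^\star}$, $R\to\proj[2]{B^\star}$, which for connected $G$ is half the undirected count. This matters only if you lean on the undirected Dyer--Greenhill statement; since you are invoking Bulatov--Grohe's asymmetric version (whose native object is precisely the oriented/weighted count, i.e.\ $\ncsp[M]$ itself), the cleanest fix is to drop the detour through $Z_M$ entirely and note that for a $0/1$ matrix $M$ the problem $\ncsp[M]$ \emph{is} $\ncsp[B^\star]$, which is then \numph{} by Theorem~\ref{thm30} because $M$ fails to be a rank-one block matrix (by Lemma~\ref{lemma:rank-one-rectangular} applied in the contrapositive). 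Second, your closing symmetrization remark is a correct sketch but is not actually needed once you invoke the asymmetric statement of Theorem~\ref{thm30} directly.
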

\begin{proof}
    Clearly $\ncsp\in \nump$ for any $\Gamma\!$.  If $\Gamma$ is not
    strongly rectangular, there is an $r$-ary relation $B\in\cocl$
    that is not rectangular when considered as a binary relation over
    $D^k\times D^{r-k}$ for some $k$ with $1\leq k<r$.  Let $G =
    (V,E)$ be a connected, undirected bipartite graph with vertex
    bipartition $V_1,V_2$.  Let $\Phi_1$ be the $\Gamma$-formula with
    a constraint $B(\bx_i,\bx_j)$ for each $\set{\nu_i, \nu_j}\in E$
    with $\nu_i\in V_1$, $\nu_j\in V_2$.  Define $\Phi_2$ analogously,
    but with constraints $B(\bx_j,\bx_i)$.  It follows that $|\Phi_1|
    + |\Phi_2|$ is the number of graph homomorphisms from $G$ to
    $\cG_B$.  This problem is \numpc by~\cite{DyeGre00}, since $\cG_B$
    has a component which is not a bipartite clique.  Thus, \ncsp[B]
    is \numpc and, hence, \ncsp is \numpc by Corollary~\ref{cor40}.\qquad
\end{proof}

There is an important generalisation of the counting problem to
\emph{weighted} problems which we now describe briefly;
see~\cite{BulGro05,DyGoJe09} for details.  The relations $H\subseteq
D^r$ in $\Gamma$ are replaced by functions $f\colon D^r\to \nnrats\!$,
where $\nnrats$ denotes the non-negative rationals.\footnote{More
generally, we can take the function values to be non-negative
algebraic numbers.} Thus, $\Gamma$ is replaced by a set of
functions $\cF$.  We will call $(D,\cF)$ a \emph{weighted structure}.
The \emph{underlying relation} of $f\in\cF$ is $\set{\bu\in D^r :
f(\bu)>0}$.  Note that a relation $H$ can be identified with a
function $f_H\colon D^r\to\set{0,1}$, where $f_H(\bu)=1$ if, and only
if, $\bu\in H$.  Then $H$ is the underlying relation of $f_H$.  Thus, we
may just use $H$ to denote the function $f_H$ without further
comment.

Now, using notation similar to the relational case, an instance $\cI$
of \ncsp[\cF] is defined as follows.  A constraint $\Theta$ has the
form $f(x_{i_1}, x_{i_2}, \dots, x_{i_r})$ for some $r$-ary function
$f\in\cF$.  Thus, $(\nu_{i_1}, \nu_{i_2}, \dots, \nu_{i_r})$ is the
scope of the $\Theta$.  Suppose we have constraints $\Theta_1, \dots,
\Theta_m$, where $\Theta_s$ applies the function $f_s\in\cF$.  Write
$\bx_s$ for $(x_{i_1}, x_{i_2}, \dots, x_{i_r})$, where $(\nu_{i_1},
\nu_{i_2}, \dots, \nu_{i_r})$ is the scope of the $\Theta_s$.  Then,
the \emph{weight} of an assignment $\bx\colon V\to D$ is
\begin{equation*}
   \sW(\bx) = \prod_{s=1}^m f_s(\bx_s)\,.
\end{equation*}
The computational problem \ncsp[\cF] is then to compute the
\emph{partition function},
\begin{equation*}
   Z(\cI) = \!\!\sum_{\bx\colon V\to D}\!\! \sW(\bx)\,.
\end{equation*}
If $\cF=\set{f}$ for a single function $f\!$, we write \ncsp[f].

We may view a binary function $f\colon A_1\times A_2\to \nnrats$ as a
matrix with elements in $\nnrats\!$, rows indexed by $A_1$ and columns
indexed by $A_2$.  If $B$ is its underlying relation,
the submatrix of $f$ induced by a block of $B$ is called a block of $f\!$. If $f_1,f_2,\ldots,f_k$ are the blocks of $f\!$, then $f$ will be called a \emph{rank-one block matrix}, if each block of $f$ is a rank one matrix.

\begin{lemma}
\label{lemma:rank-one-rectangular}
    If $f\colon A_1\times A_2\to\nnrats$ is a rank-one block matrix,
    its underlying relation $B$ is rectangular.
\end{lemma}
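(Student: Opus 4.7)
The plan is to unpack the definitions and use the explicit rank-one factorisation of each block.

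First I would fix witnesses $(a,c),(a,d),(b,c)\in B$ and aim to derive $(b,d)\in B$. Observe that these three pairs form edges in $\cG_B$: the edge $\{a,c\}$ connects $a\in A_1$ to $c\in A_2$, the edge $\{a,d\}$ connects $a$ to $d$, and $\{b,c\}$ connects $b$ to $c$. Hence $a,b,c,d$ all lie in a single connected component $C$ of $\cG_B$, i.e.\ in a single block of $B$. (Note the off-diagonal parts of $f$, indexed by distinct components, are automatically zero, since no edges of $\cG_B$ go between distinct components, so entries outside the block submatrices of $f$ are forced to be $0$ by the underlying relation.)

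Next I would use the rank-one hypothesis on the block $f_i$ of $f$ associated with $C$. Being a rank-one non-negative matrix, $f_i$ admits a factorisation $f_i(x,y)=u_x v_y$ for non-negative scalars $u_x$ ($x\in A_1\cap C$) and $v_y$ ($y\in A_2\cap C$). Positivity of $f(a,c)=u_av_c$ and $f(a,d)=u_av_d$ forces $u_a,v_c,v_d>0$, and positivity of $f(b,c)=u_bv_c$ forces $u_b>0$. Therefore $f(b,d)=u_bv_d>0$, which means $(b,d)\in B$, establishing rectangularity.

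There is no real obstacle here; the only thing to be careful about is justifying that $a,b,c,d$ really do sit in a common block, so that the rank-one factorisation applies uniformly to all four of the relevant entries. Once that is observed, the proof is a one-line consequence of the factorisation. I would write the argument as a short direct proof, without a separate lemma.
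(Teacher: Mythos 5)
Your proof is correct and takes essentially the same approach as the paper: both hinge on the observation that a $2\times 2$ submatrix lying in one block with three positive entries cannot have its fourth entry zero, since that would make the block rank two. The paper phrases this contrapositively via a determinant calculation, while you argue directly through a rank-one factorisation $f_i(x,y)=u_xv_y$; these are cosmetically different but equivalent.
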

\begin{proof}
    If $B$ is not rectangular, there are $(a,c),\,(b,c),\,(a,d)\in R$
    such that $(b,d)\notin B$.  The $2\times 2$ sub-matrix of $f$
    induced by rows $a,\,b$ and columns $c,\,d$ is included within a
    single block and has determinant $-f(a,d)f(b,c)\neq 0$ and so has
    rank~2.  Therefore, the block of $f$ that contains this sub-matrix
    has rank at least~2.
\end{proof}

We will call a matrix $f\colon A_1\times A_2\to\nnrats$ \emph{rectangular} if its underlying relation $R$ is rectangular.
Thus, an alternative way of defining a rank-one block matrix is as
a rectangular matrix $f\!$, together with functions $\alpha_1\colon A_1\to\nnrats\!$, $\alpha_2\colon A_2\to\nnrats\!$, such that $f(x,y) = \alpha_1(x) \alpha_2(y)$ for all $(x,y)\in B$.

We can now state a theorem of Bulatov and
Grohe~\cite[Theorem~14]{BulGro05}, which generalises the result of Dyer
and Greenhill~\cite{DyeGre00} to the weighted case.  Although we give
the theorem for non-negative rational functions, in fact we only
require the case for non-negative integer functions.

\begin{theorem}[Bulatov and Grohe~\cite{BulGro05}]
\label{thm30}
    Let $f\colon A_1\times A_2\to \nnrats$ be a binary function.  Then
    \ncsp[f] is in \fp if $f$ is a rank-one block matrix.  Otherwise \ncsp[f] is \numph.
\end{theorem}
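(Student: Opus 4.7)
\textbf{Proof proposal for Theorem~\ref{thm30}.} I would prove the two directions of the dichotomy separately, with the hard direction being the main technical content.

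For the \emph{tractable direction}, suppose $f$ is a rank-one block matrix with block decomposition $B = \bigsqcup_j R_j \times C_j$ and $f(x,y) = \alpha_j(x)\beta_j(y)$ on the $j$th block. Because the $R_j$ partition $\proj[1]{B}$ and the $C_j$ partition $\proj[2]{B}$, the block indices $b_1(x)$ (with $x\in R_{b_1(x)}$) and $b_2(y)$ (with $y\in C_{b_2(y)}$) are well-defined wherever $f$ is nonzero. Expanding each constraint $f(x_{u_s},x_{w_s})$ in the rank-one form, the partition function $Z(\cI)$ becomes a sum over consistent block-label assignments of a product of per-variable local terms
\begin{equation*}
    g_v(a,b)\ =\ \sum_{d\in R_a\cap C_b}\alpha_a(d)^{d_1(v)}\beta_b(d)^{d_2(v)},
\end{equation*}
where $d_1(v),d_2(v)$ are the left/right degrees of $v$ in the constraint graph. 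Since the number of blocks is a constant depending only on $f$, this outer sum is computable in polynomial time by dynamic programming on the auxiliary label-propagation structure induced by the constraints.

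For the \emph{hard direction}, suppose $f$ is not a rank-one block matrix. By Lemma~\ref{lemma:rank-one-rectangular} and the definition of a rank-one block matrix, one of the following holds: (A) the support $B$ of $f$ is not rectangular, or (B) $B$ is rectangular but some block of $f$ has rank at least $2$. In case (A), $\cG_B$ has a connected component that is not a bipartite clique, so by the Dyer--Greenhill theorem (as invoked in Lemma~\ref{lem170}) the \emph{unweighted} problem of counting graph homomorphisms into $\cG_B$ is \numpc. I would reduce this to \ncsp[f]: for a bipartite input $G=(U\cup W,E)$, build $\Phi_G$ with a constraint $f(x_u,x_v)$ for each edge $\{u,v\}\in E$, so that $Z(\Phi_G) = \sum_{\psi\colon G\to\cG_B} W(\psi)$ with $W(\psi) = \prod_e f(\psi(e))$. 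To strip the weights, replace each constraint with $t$ parallel copies to obtain $\Phi_G^{(t)}$, whose partition function equals $\sum_\psi W(\psi)^t$. Grouping homomorphisms by the value $W(\psi)$, evaluating at polynomially many $t$, and inverting the resulting Vandermonde system recovers the number of $\psi$ of each distinct weight; summing these yields the unweighted homomorphism count.

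In case (B), there exist $a,b\in A_1$ and $c,d\in A_2$ with $\{a,b\}\times\{c,d\}\subseteq B$ and $f(a,c)f(b,d)\neq f(a,d)f(b,c)$, so the restriction $f'$ of $f$ to $\{a,b\}\times\{c,d\}$ is a $2\times 2$ positive matrix of rank~$2$. Combining gadget constructions with polynomial interpolation, I would simulate effective ``pinning'' of selected variables to $\{a,b\}$ and $\{c,d\}$, reducing to \ncsp[f']; \numph{}-ness of \ncsp[f'] is then classical, following for example by reduction from counting independent sets or perfect matchings in bipartite graphs. The main obstacle is precisely this case: with only the single function $f$ available and no unary constraints, producing pinning gadgets requires careful thickening-plus-interpolation, and one must argue that the effective functions obtained at the different thickening parameters have distinct enough ``signatures'' that the corresponding linear system is invertible in polynomial time. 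This is the technical heart of Bulatov and Grohe's proof.
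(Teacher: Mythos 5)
The paper does not prove Theorem~\ref{thm30}; it quotes it as an external result, namely Theorem~14 of Bulatov and Grohe~\cite{BulGro05}, described as generalising Dyer and Greenhill~\cite{DyeGre00} to the weighted case. So there is no in-paper proof against which to compare your attempt, and what you have written is an attempt to reconstruct the cited proof. At the level of strategy your sketch is aligned with the actual Bulatov--Grohe argument (factor the weight within rank-one blocks for tractability; in the hard case extract either a non-rectangular support or a positive $2\times 2$ submatrix of rank two and interpolate away the weights), but both directions have real gaps.

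In the tractable direction, the step I would push back on is the last sentence. After you rewrite the partition function as a sum over block-label assignments to port equivalence classes of the product $\prod_v g_v(a_v,b_v)$, what remains is itself a weighted spin-system/graph-homomorphism partition function on a derived graph whose vertices are the port classes and whose edge weights are the $g_v$. For arbitrary edge weights such a problem is \numph, and the functions $g_v$ that arise need not be rank-one block matrices (easy examples with two blocks give a full-support $2\times2$ matrix $g_v$ of rank two). So the claim that ``this outer sum is computable in polynomial time by dynamic programming on the auxiliary label-propagation structure'' is doing all the work of the tractable direction without justification: you must argue either that the derived instance is structurally degenerate (e.g.\ that it factors over connected components of the original constraint graph) or that the specific $g_v$ occurring here are tame, and that argument is missing.

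In the hard direction, Case~(A) is essentially sound; you should record explicitly that the interpolation is feasible because $W(\psi)$ takes only polynomially many distinct values (each is a product over $m$ constraints of entries drawn from the $O(1)$-size multiset of values of $f$), so the Vandermonde system has polynomial size. Case~(B) you describe only at the level of intent — ``careful thickening-plus-interpolation'', ``one must argue that the effective functions \ldots have distinct enough signatures'' — and you yourself flag it as ``the technical heart of Bulatov and Grohe's proof''. That is an honest assessment, but it means the sketch does not yet contain a proof of the harder half of the theorem.
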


In Section~\ref{sec:Dichotomy:Alg}, we will use the following property of rank-one block matrices.
\begin{lemma}\label{lem173}
    If $f\colon A_1\times A_2\to\nnrats$ is a rank-one block matrix,
    it is uniquely determined by its underlying relation and its row and
    column totals.
\end{lemma}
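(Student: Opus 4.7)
The plan is to exploit the block structure, which is visible in the underlying relation $B$, to reduce the problem to the classical fact that a strictly positive rank-one matrix is determined by its row and column totals. First, I would observe that the underlying relation $B$ tells us which rows and columns lie in which block, since by Lemma~\ref{lem10} the blocks correspond exactly to the connected components (bipartite cliques) of $\cG_B$. So the underlying relation partitions $A_1$ and $A_2$ into blocks $C_1^{(k)}\times C_2^{(k)}$ on which $f$ is a positive rank-one matrix, and vanishes outside these blocks. Any two rank-one block matrices with the same underlying relation therefore share this block decomposition, and to prove uniqueness it suffices to prove uniqueness block by block.

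Next I would fix one block $C_1\times C_2$ and write $f(x,y)=\alpha(x)\beta(y)$ for $x\in C_1$, $y\in C_2$ (with $\alpha,\beta$ strictly positive), as in the factorisation given immediately before the lemma. The row totals restricted to the block are $r_x=\alpha(x)\sum_{y\in C_2}\beta(y)$ and the column totals are $c_y=\beta(y)\sum_{x\in C_1}\alpha(x)$, and the full block total is
\begin{equation*}
   T\ =\ \sum_{x\in C_1}\sum_{y\in C_2} f(x,y)\ =\ \Bigl(\sum_{x\in C_1}\alpha(x)\Bigr)\Bigl(\sum_{y\in C_2}\beta(y)\Bigr).
\end{equation*}
Multiplying $r_x$ and $c_y$ gives $r_x c_y=\alpha(x)\beta(y)\cdot T=f(x,y)\,T$, so
\begin{equation*}
   f(x,y)\ =\ \frac{r_x c_y}{T}\qquad\text{for all }(x,y)\in C_1\times C_2.
\end{equation*}
Since the block's rows and columns are identified by $B$, both $T$ (obtained as the sum of $r_x$ over $x\in C_1$, or equivalently the sum of $c_y$ over $y\in C_2$) and the individual $r_x,c_y$ are determined by the given data, and $T>0$ because $f$ is positive on the block. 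Thus $f(x,y)$ is uniquely specified on the block, and on entries $(x,y)\notin B$ the value is $0$ by definition of the underlying relation.

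There is no real obstacle here: the only thing to watch is that the quotient $r_x c_y/T$ is well defined, which is guaranteed by the strict positivity of $f$ on each block, and that the block containing each $x$ (respectively $y$) can indeed be read off from $B$, which is immediate from Lemma~\ref{lem10}. Putting the blocks together reconstructs $f$ from its underlying relation and row and column totals, proving uniqueness.
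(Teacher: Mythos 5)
Your proof is correct and takes essentially the same approach as the paper: factor $f$ within each block as a product $\alpha(x)\beta(y)$, compute row, column and grand totals restricted to the block, and recover $f(x,y)=r_x c_y/T$. The paper gives the same calculation; your additional remarks that the block structure is read off from the underlying relation and that $T>0$ are sound but were left implicit there.
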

\begin{proof}
    Let $B$ be the underlying (rectangular) relation.  Consider any
    block $C$ of $B$, with $\proj[1]{C} = S_1$, $\proj[2]{C} = S_2$.
    Then there exist $\alpha_1\colon S_1\to\nnrats$ and
    $\alpha_2\colon S_2\to\nnrats$ such that $f(x_1,x_2) =
    \alpha_1(x_1) \alpha_2(x_2)$ for every $x_1\in S_1$ and $x_2\in
    S_2$.  Now, let
        \begin{alignat*}{3}
        & &f(x_1, \cdot)\
            &=\ \sum_{x_2\in S_2}f(x_1,x_2)\ &
            &=\ \alpha_1(x_1) \sum_{x_2\in S_2}\alpha_2(x_2) \\
        & &f(\cdot, x_2)\
            &=\ \sum_{x_1\in S_1} f(x_1,x_2)\ &
            &=\ \alpha_2(x_2) \sum_{x_1\in S_1}\alpha_1(x_1) \\
        & &f(\cdot, \cdot)\
            &=\ \sum_{x_1\in S_1} f(x_1, \cdot)\ &
            &=\ \sum_{x_1\in S_1} \alpha_1(x_1) \sum_{x_2\in S_2}
                \alpha_2(x_2)
        \end{alignat*}
    be the row, column and grand totals of $f(x_1,x_2)$ ($x_1\in S_1,
    x_2\in S_2$).  A simple calculation gives
    \begin{equation*}
       f(x_1,x_2)\ =\ \frac{f(x_1 , \cdot) f(\cdot, x_2)}
                          {f(\cdot, \cdot)}\,. \qedhere
    \end{equation*}
\end{proof}


\section{The dichotomy theorem}
\label{sec:Dichotomy}

We are now ready to describe the dichotomy.  We saw in the previous
section that, assuming $\fp\neq \nump$, strong rectangularity
is a necessary condition for tractability.  In this section, we
introduce a stronger condition, based on certain rank-one block
matrices and show that it characterises the  dichotomy for \numcsp{},
into problems in \fp{} and problems which are \numpc{}.
As one would expect, this condition turns
out to be equivalent to the criterion in Bulatov's dichotomy theorem.
We defer the algorithm for the polynomial-time cases to
Section~\ref{sec:Dichotomy:Alg} and some technical results to
Section~\ref{sec:Dichotomy:Cong}. In Section~\ref{sec:Decide},
we will show that the condition is decidable.

Let $H(x,y,z)$ be a ternary relation on $A_1\times A_2\times A_3$.  We will
call $H$ \emph{balanced} if the \emph{balance matrix}, 
\begin{equation*}
   M(x,y) = |\set{z\in A_3: (x,y,z)\in H}| \qquad (x\in A_1,\ y\in A_2)
\end{equation*}
is a rank-one block matrix.  A relation of arity $n>3$ is balanced if
every expression of it as a ternary relation on $D^k\times
D^\ell\times D^{n-k-\ell}$ ($d,\ell\geq 1$, $k+\ell<n$) is balanced.
We will say that $\Gamma$ is \emph{strongly balanced} if every
pp-definable ternary relation is balanced.

We will prove the following dichotomy theorem.

\begin{theorem}
\label{thm40}
    If $\Gamma$ is strongly balanced, \ncsp is in \fp.  Otherwise,
    \ncsp is \numpc. Moreover, the dichotomy is decidable.
\end{theorem}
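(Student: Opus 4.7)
The theorem makes three claims --- hardness, tractability, and decidability --- which I would prove in three separate stages. For the hardness direction, suppose $\Gamma$ is not strongly balanced. Then there is a pp-definable ternary relation $H\in\cocl$ whose balance matrix $M(x,y) = |\set{z : (x,y,z)\in H}|$ fails to be a rank-one block matrix. By Theorem~\ref{thm30} (Bulatov--Grohe), the weighted problem \ncsp[M] is \numph. I would transfer this hardness to \ncsp[H] by the observation that $M$ literally counts completions: a weighted instance using constraints $M(x_i, x_j)$ can be simulated by a \cspg[H]-instance using constraints $H(x_i, x_j, z_{ij})$ with fresh auxiliary variables $z_{ij}$, whose count of satisfying assignments is precisely the partition function of the original weighted instance. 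Corollary~\ref{cor40} then lifts hardness from \ncsp[H] to \ncsp.

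For the tractability direction, first observe that strong balance implies strong rectangularity: for any pp-definable binary $B$, letting $z$ range freely over $D$ gives a pp-definable ternary relation whose balance matrix is $|D| \cdot \mathbf{1}_B$, which is rank-one block only if $B$ is rectangular by Lemma~\ref{lemma:rank-one-rectangular}. So given an instance $\Phi$ with solution set $R \subseteq D^n$, Lemma~\ref{lem150} builds a small frame $F$ for $R$ in $\bigO(mn^4)$ time. Counting then proceeds inductively via the prefix counts $N_{i,j}(a)$ for $1 \le i < j \le n$. At each step, view $\proj[{[i] \cup \set{j}}]{R}$ as a ternary relation on $\proj[{[i-1]}]{R} \times \proj[i]{R} \times \proj[j]{R}$; strong balance guarantees its balance matrix $M$ is a rank-one block matrix. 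Since the prefix factor may be exponentially large, I would pass to a quotient of $M$ by the natural prefix congruences on $\proj[{[i-1]}]{R}$ (which lie in \cocl by Corollary~\ref{cor15}) to obtain a polynomially-sized rank-one block matrix $\widehat{M}$. Using the frame, one can read off the block structure and the row and column sums of $\widehat{M}$; Lemma~\ref{lem173} then recovers $\widehat{M}$ uniquely, and summing the column indexed by $a$ yields $N_{i,j}(a)$. Iterating up to $(i,j) = (n-1, n)$ and summing gives $|R|$.

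Decidability is deferred to Section~\ref{sec:Decide}, where the plan is to show that any violation of strong balance is witnessed by a pp-formula of size polynomial in $\size{\Gamma}$, placing the negation of strong balance in \np. The strategy is to reformulate the balance condition as an equality between counts of satisfying assignments to two derived pp-formulae, and then to invoke a classical technique of Lov\'asz~\cite{Lovasz67} to recast this as a symmetry property of a small structure explicitly computable from $\Gamma$. The main obstacle I anticipate lies in the tractability stage, specifically the construction of the quotient matrix $\widehat{M}$: one must simultaneously ensure that the quotient is well-defined (using strong rectangularity so that the prefix equivalence classes are consistent across columns), that its block structure and marginals can be extracted from the frame without ever enumerating the exponentially large relation $R$, and that the rank-one block property is inherited under the quotient so that Lemma~\ref{lem173} applies.
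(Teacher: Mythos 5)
Your high-level plan matches the paper closely in all three stages: reduce hardness through the Bulatov--Grohe theorem and Corollary~\ref{cor40}; count by iteratively determining the prefix counts $N_{i,j}$ from a small frame, using the rank-one block structure and Lemma~\ref{lem173}; and reduce decidability to a symmetry condition via M\"obius inversion and Lov\'asz's technique. The hardness and decidability sketches are essentially correct.

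The one concrete error is the orientation of the balance matrix in the counting step. You take the balance matrix of the ternary relation ordered as $\proj[{[i-1]}]{R}\times\proj[i]{R}\times\proj[j]{R}$, i.e.\ $M(\bu,x)=|\set{y:(\bu,x,y)\in H}|$, a matrix whose row index ranges over the exponentially large set $\proj[{[i-1]}]{R}$ and whose entries are small. You then propose to quotient the prefix set $\proj[{[i-1]}]{R}$ to shrink the matrix, but the quantity you need, $N_{i,j}(a)=|\set{(\bu,x):(\bu,x,a)\in H}|$, is not a marginal of this $M$, and any quotient that collapses prefixes discards exactly the multiplicity information that $N_{i,j}$ tracks. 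The paper instead makes the prefix coordinate the \emph{counted} one: $M(x,y)=|\set{\bu:(\bu,x,y)\in H}|$ for $x\in\proj[i]{R}$, $y\in\proj[j]{R}$. This $M$ is only $q\times q$, its $y$-indexed marginal $\sum_x M(x,y)$ equals $N_{i,j}(y)$, and its marginals restricted to representative sets under the congruences $\tw[i,j]$ on $\proj[j]{R}$ and $\tw[j,i]$ on $\proj[i]{R}$ recover the already-known $N_{i-1,j}$ and $N_{i-1,i}$. The quotient matrix $\widehat{M}$ is then formed by deleting duplicate rows and columns under these congruences on the \emph{index sets} of $M$, not by any congruence on $\proj[{[i-1]}]{R}$; this manifestly preserves the rank-one block property and the entry values, and the block structure and marginals of $\widehat{M}$ are all computable from the frame via Lemma~\ref{lem200}. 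So the obstacle you flagged is real, but it is resolved by choosing a different permutation of the three coordinates before forming the balance matrix, not by quotienting the prefix set.
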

\begin{proof}
    The first statement will be proved in
    Section~\ref{sec:Dichotomy:Alg}.  The second is proved in
    Lemma~\ref{lem180} below. The third is proved in Section~\ref{sec:Decide}.\qquad
\end{proof}

We first show that the condition of strong balance is strictly
stronger than that of strong rectangularity.

\begin{lemma}
\label{lem175}
    Strong balance implies strong rectangularity.
\end{lemma}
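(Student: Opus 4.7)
The plan is to reduce strong rectangularity directly to strong balance by invoking Lemma~\ref{lemma:rank-one-rectangular}, which already states that the underlying relation of a rank-one block matrix is rectangular. The only real work is to arrange, for any pp-definable relation $R$, a ternary pp-definable relation whose balance matrix has $R$ (in the appropriate binary view) as its underlying relation.

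First, I would note that the full domain is itself pp-definable as the unary relation $D^\ast(y) \equiv \exists z\, \equ(y,z)$, since the equality relation $\equ$ lies in $\Gamma$. Next, let $R\in\cocl$ have arity $n\geq 2$ and fix $k$ with $1\leq k\leq n-1$; the goal is to show that $R$, viewed as a binary relation on $D^k\times D^{n-k}$, is rectangular. Form the pp-definable $(n+1)$-ary relation
\begin{equation*}
R'(x_1,\ldots,x_n,y)\ \equiv\ R(x_1,\ldots,x_n)\,\wedge\,D^\ast(y)\,,
\end{equation*}
and view it as a ternary relation on $D^k\times D^{n-k}\times D$. Since $\Gamma$ is strongly balanced and $R'$ is pp-definable of arity at least~$3$, this ternary expression is balanced, so its balance matrix $M(\ba,\bb)=|\set{y\in D:(\ba,\bb,y)\in R'}|$ is a rank-one block matrix.

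The key observation is that
\begin{equation*}
M(\ba,\bb)\ =\ \begin{cases}|D|&\text{if }(\ba,\bb)\in R,\\ 0&\text{otherwise,}\end{cases}
\end{equation*}
so the underlying relation of $M$ is exactly $R$ considered as a binary relation on $D^k\times D^{n-k}$. By Lemma~\ref{lemma:rank-one-rectangular} this underlying relation is rectangular. As $k$ and $R$ were arbitrary, every pp-definable relation of arity at least~$2$ is rectangular in every binary expression, i.e.\ $\Gamma$ is strongly rectangular.

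I do not expect any serious obstacle here: the content of the lemma is already encoded in Lemma~\ref{lemma:rank-one-rectangular}, and the only trick is the innocuous padding by $D^\ast$, which lets us apply the ternary balance hypothesis even when $R$ itself is binary. The argument would fail if we had no pp-definable unary relation at our disposal, but the presence of $\equ$ in $\Gamma$ makes $D^\ast$ available for free.
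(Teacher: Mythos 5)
Your proof is correct and takes essentially the same approach as the paper: both pad the binary relation to a ternary pp-definable relation whose balance matrix has a constant positive entry on exactly the tuples of the relation, and then apply Lemma~\ref{lemma:rank-one-rectangular}. The paper pads with $\exists w\, B(z,w)$ (so the constant is $|\proj[1]{B}|$, requiring a trivial side case for $B=\emptyset$), whereas you pad with $D^\ast(y)$ (constant $|D|$); this is an inconsequential difference.
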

\begin{proof}
    This follows from the definition of strong
    balance.  Suppose $\Gamma$ is strongly balanced and let $B(x,y)$
    be any definable binary relation.  Let
    \begin{equation*}
        H(x,y,z)=\exists w\, B(x,y)\wedge B(z,w)\,,
    \end{equation*}
    which must be balanced.  Then $M(x,y) = |\set{z:
    \exists w\, B(z,w)}|= |\proj[1]{B}|$, for all
    $(x,y)\in B$.  If $|\proj[1]{B}| = 0$
    then $B = \emptyset$, which is trivially rectangular.  Otherwise,
    the underlying relation of $M$ is $B$, which must be rectangular
    by Lemma~\ref{lemma:rank-one-rectangular}.\qquad
\end{proof}

The converse of Lemma~\ref{lem175} is not true, however.

\begin{lemma}
\label{lem270}
    Strong rectangularity does not imply strong balance.
\end{lemma}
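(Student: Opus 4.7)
The aim is to exhibit a constraint language $\Gamma$ that is strongly rectangular — equivalently, by Lemma~\ref{lem20}, equipped with a \maltsev polymorphism $\ph$ — but whose pp-closure contains a ternary relation whose balance matrix fails to be rank-one block.

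A guiding observation is that $\ph$ cannot come from a group operation $ab^{-1}c$. If it did, then a standard closure argument shows that every nonempty $\ph$-invariant relation $R\subseteq G^n$ is a coset $r_0 H$ of a subgroup $H\leq G^n$: $H = r_0^{-1}R$ is closed under $h_1 h_2^{-1} h_3$, contains $e$, and is closed under products and inverses via the \maltsev identities. For a ternary coset $R=r_0 H\subseteq G^3$, the projection onto the first two coordinates is a rectangular coset of $\proj[{1,2}]{H}$ in $G^2$, every fibre $\{z:(x,y,z)\in R\}$ has the constant cardinality $|\{k:(e,e,k)\in H\}|$ across its support, and so the balance matrix is constant on each block and rank-one automatically. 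Hence the counterexample must use a non-group \maltsev operation. Such operations exist already on $|D|=3$: as the proof of Lemma~\ref{lem30} makes explicit, $\ph$ is determined by its values on the twelve triples $(a,b,c)$ with $b\notin\{a,c\}$, and most specifications are not of the form $a-b+c\bmod 3$.

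The plan is therefore to fix a non-group \maltsev $\ph$ on a small domain, take $\Gamma$ to be a small generating set of $\ph$-invariant relations (for instance, the constants $\chi_a$ together with a carefully chosen binary or ternary relation), and construct a pp-formula $\Psi(x,y,z)$ whose balance matrix $M(x,y)=|\{z:\Psi(x,y,z)\}|$ contains, within a single block of its underlying rectangular projection, a $2\times 2$ submatrix of nonzero determinant. The mechanism underlying non-balance is that, for non-group $\ph$, the fibre-transport map $z\mapsto\ph(z_0',z_0,z)$ between fibres over distinct points of a common block is no longer forced to be a bijection, so fibre sizes within a block can vary. Verification is then a finite inspection: check by direct enumeration that $\ph$ preserves each generator of $\Gamma$ (so by Lemma~\ref{lem15} it preserves all of $\cocl$, giving strong rectangularity via Lemma~\ref{lem20}), compute the tuples of $R_\Psi$, and exhibit the rank-two minor in $M$.

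The main obstacle is producing a sufficiently simple concrete witness: the most familiar \maltsev operations — those arising from groups — automatically yield strongly balanced languages and must be avoided. A careful case analysis on a non-associative $\ph$ with $|D|\leq 4$ should suffice, with all checks reducing to finite calculations on at most $|D|^3$ triples.
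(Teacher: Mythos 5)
Your submission is a plan for finding a counterexample, not a proof that one exists. Lemma~\ref{lem270} is an existence claim; proving it requires exhibiting a concrete strongly rectangular $\Gamma$ together with a pp-definable ternary relation whose balance matrix fails to be a rank-one block matrix. You correctly rule out group-type \maltsev operations $\ph(a,b,c)=ab^{-1}c$: the coset argument is sound, since every $\ph$-invariant relation is then a coset of a subgroup, its projection onto two coordinates is a rectangular coset, and all fibres over the support have the same cardinality, so each block of the balance matrix is a constant matrix of rank one. You also correctly identify the mechanism by which balance can fail when $\ph$ is not a group term: the fibre-transport map $z\mapsto\ph(z_1,z_0,z)$ need not be injective, so fibre sizes within a block can differ. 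This is the right diagnosis and it does align with the paper's construction.

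However, after that the proposal stops, asserting only that ``a careful case analysis on a non-associative $\ph$ with $|D|\leq 4$ should suffice.'' That is an unverified assertion, not a construction, and the bound $|D|\leq 4$ is itself unsubstantiated: the paper's witness, which carries out precisely the strategy you outline (it chooses a \maltsev operation that collapses a designated element $b$ onto $a_{0,0}$, hence cannot be a group term), lives on a seven-element domain, and no smaller example is demonstrated. Until you write down a specific $\Gamma$ and $\ph$, verify that $\ph$ is \maltsev and preserves every relation of $\Gamma$, and exhibit a $\Psi$ whose balance matrix has a $2\times 2$ submatrix of nonzero determinant lying inside a single block, the lemma remains unproved. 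The surrounding heuristics show you understand why such a counterexample should exist and where to look, but the existence itself has not been established.
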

\begin{proof}
    Consider the following example.  Let $A = \set{a_{0,0}, a_{0,1},
    a_{1,0}, a_{1,1}, b}$ and let $D = A \cup \set{0,1}$.  Let $\Gamma
    = \set{R}$, where $R$ is the ternary relation given by
    \begin{equation*}
        R = \set{(i, j, a_{i,j}) : i,j\in\set{0,1}} \cup \set{(0,0,b)}\,.
    \end{equation*}
    Note that $b$ is, in effect, a second copy of $a_{0,0}$; the
    effect is essentially that of a weighted relation where the tuple
    $(0,0,a_{0,0})$ has weight~2 and all other tuples have unit weight.
    The balance matrix $M$ for $R$ is as follows (we omit the rows and
    columns for $x\in A$ as they have only zeroes):
    \begin{equation*}%
        \raisebox{-1.5ex}{$M\ =\ $}
        \begin{array}{r}  \\ 0\\ 1\end{array}\hspace{-9pt}
        \begin{array}{c}
            \begin{array}{cc}  0 & 1 \end{array}\\[-1pt]
            \left[\begin{array}{ccc}
                2 & 1 \\
                1 & 1 \\
            \end{array}\right]\,.
        \end{array}
    \end{equation*}

    $M$ is clearly not a rank-1 block matrix, so $R$ is not strongly
    balanced.  Nonetheless, we will show that $R$ has a \maltsev
    polymorphism.  Consider the following function, where $\oplus$
    denotes addition modulo~2.
    \begin{equation*}
        f(x,y,z) = \begin{cases}
            \ x\oplus y\oplus z &
                 \text{if } x,y,z\in\set{0,1}                         \\
            \ a_{f(i, k, m), f(j, \ell, n)} &
                 \text{if } x=a_{i,j}, y=a_{k,\ell}, z=a_{m,n}        \\
            \ a_{0,0} &
                 \text{otherwise.}
        \end{cases}
    \end{equation*}
    Let $g(b) = a_{0,0}$ and $g(x) = x$ for all other $x\in D$.  We
    define the function $\ph$ as follows:
    \begin{equation*}
        \ph(x,y,z) = \begin{cases}
            \ x                   &\text{if }y=z                      \\
            \ z                   &\text{if }x=y                      \\
            \ f(g(x), g(y), g(z)) &\text{otherwise.}
        \end{cases}
    \end{equation*}

    In other words, $\ph$ behaves identically to $f\!$, except that it
    has the \maltsev property and, for inputs where $x\neq y$ and
    $y\neq z$, it ``pretends'' that any input of $b$ is actually an
    input of $a_{0,0}$.  Note that, for $i,j,k\in \set{0,1}$,
    $\ph(i,j,k) = i\oplus j\oplus k$, regardless of the \maltsev
    condition.

    We claim that, as well as being \maltsev, $\ph$ is a polymorphism
    of $R$.  To this end, let $\bx,\by,\bz\in R$, which we can write
    as $\bx=(i,j,x')$, $\by=(k,\ell,y')$ and $\bz=(m,n,z')$, where
    $x'=a_{i,j}$ or, if $i=j=0$, we may have $x'=b$, and similarly for
    $y'$ and $z'\!$.  So, we have
    \begin{align*}
        \ph(\bx,\by,\bz)
            &= \big(\ph(i,k,m),\ph(j,\ell,n),\ph(x'\!, y'\!, z')\big) \\
            &= \big(f(i,k,m), f(j,\ell,n), f(g(x'), g(y'), g(z')\big) \\
            &= \big(f(i,k,m), f(j,\ell,n),
                                       a_{f(i,k,m), f(j,\ell,n)}\big) \\
            &\in R\,.
    \end{align*}
    This establishes the claim.\qquad
\end{proof}

\begin{remark}
\label{rem105}%
    The example in Lemma~\ref{lem270} can be extended to relations of
    arbitrary size by extending $i$ and $j$ in the tuples
    $(i,j,a_{i,j})$ to longer binary strings and interpreting $\oplus$
    as bit-wise XOR (e.g., $0011\oplus 0101 = 0110$).
\end{remark}

\begin{remark}\label{rem110}%
    Bulatov and Dalmau conjectured in~\cite{BulDal03} that a \maltsev polymorphism was sufficient for \ncsp to be in \fp. That is a stronger claim than the converse of Lemma~\ref{lem175}. The conjecture was withdrawn in~\cite{BulDal07}, with a counterexample somewhat similar to that in the proof of Lemma~\ref{lem270}.
\end{remark}

Next, we strengthen Lemma~\ref{lem170} to prove one half of the
dichotomy.

\begin{lemma}
\label{lem180}
    If $\Gamma$ is not strongly balanced, then \ncsp is \numpc.
\end{lemma}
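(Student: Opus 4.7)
The plan is to transfer hardness from a weighted binary counting problem using the Bulatov--Grohe dichotomy (Theorem~\ref{thm30}). Since $\Gamma$ is not strongly balanced, the definition directly yields a pp-definable ternary relation $H\in\cocl$ over $D$ whose balance matrix
\begin{equation*}
M(x,y)\ =\ |\set{z\in D:(x,y,z)\in H}|
\end{equation*}
is not a rank-one block matrix. By Theorem~\ref{thm30}, the weighted problem \ncsp[M] is therefore \numph.

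Next I would reduce \ncsp[M] to \ncsp[H] with a single oracle call. Given a weighted instance $\cI$ of \ncsp[M] with variables $x_1,\dots,x_n$ and binary constraints $M(x_{i_s}, x_{j_s})$ for $s\in[m]$, I would introduce fresh variables $z_1,\dots,z_m$, one per constraint, and form the $\set{H,\equ}$-formula
\begin{equation*}
\Phi(\bx,\bz)\ =\ \bigwedge_{s\in[m]} H(x_{i_s}, x_{j_s}, z_s).
\end{equation*}
Substituting the definition of $M$ into the partition function and exchanging the sum and product will give $Z(\cI)=|R_\Phi|$, so a single call to an \ncsp[H] oracle returns $Z(\cI)$. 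Since $H\in\cocl$, Corollary~\ref{cor40} then reduces \ncsp[H] in polynomial time to \ncsp. Composing the two reductions will show that \ncsp is \numph, and since $\ncsp\in\nump$ for every $\Gamma\!$, we conclude that \ncsp is \numpc.

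The argument is essentially routine; the real content is packaged inside Theorems~\ref{thm10} and~\ref{thm30}. The only delicate points are that the witness variables $z_s$ must be fresh and distinct across constraints so that $\prod_s M(x_{i_s}, x_{j_s})$ emerges correctly as a satisfying-assignment count of $\Phi$, and that Theorem~\ref{thm30} is being invoked on a non-negative integer-valued balance matrix, which is a special case of its non-negative-rational hypothesis. Neither point is a genuine obstacle.
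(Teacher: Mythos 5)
Your proposal is correct and takes essentially the same route as the paper: extract an unbalanced ternary $H\in\cocl$ with balance matrix $M$, invoke Theorem~\ref{thm30} on $M$, simulate each weighted binary constraint $M(x,y)$ by $H(x,y,z)$ with a fresh existential-style block of variables $z$ so that $Z(\cI)=|R_\Phi|$, and finish by Corollary~\ref{cor40}. The only cosmetic difference is that the paper phrases the $\ncsp[M]$ instance as a weighted bipartite graph homomorphism count (keeping the two ``sides'' of $M$'s index sets $A_1=D^k$, $A_2=D^\ell$ on separate variable classes, and letting each $z$ range over $A_3=D^m$), whereas you write as though $H$ has arity exactly three over $D$ and each $z_s$ is a single variable; once $H$ is viewed as a relation on $D^{k}\times D^{\ell}\times D^{m}$ and the $x$'s and $z$'s are read as tuples of variables of the appropriate lengths, the two arguments coincide exactly.
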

\begin{proof}
    If $\Gamma$ is not strongly balanced, there is an unbalanced
    ternary relation $H\in\cocl$.  Let $E$ be a binary relation with
    $V = V_1\cup V_2$, $V_1\cap V_2 = \emptyset$ and $\proj{E} =
    V_i$ ($i=1,2$).  Let $\Phi$ be the $\Gamma$-formula with a
    constraint $H(x_i,x_j,z_{ij})$ for each $(\nu_i,\nu_j)\in E$.
    Thus, $\Phi$ has $|V| + |E|$ variables and $|E|$ constraints.  Let
    $M\colon V_1\times V_2\to\nnrats$ be $\Phi$'s balance matrix.

    We have $|\Phi| = Z(\cI)$, where $Z(\cI)$ is the partition
    function for an instance $\cI$ of \ncsp[M] with input $E$.  But
    this problem is \numph by Theorem~\ref{thm30} and, hence, \ncsp[H]
    is \numpc.  Thus, \ncsp is \numpc by Corollary~\ref{cor40}.\qquad
\end{proof}

In~\cite{Bulato08}, Bulatov defined \emph{congruence
    singularity}.  Suppose $\Gamma$ is a constraint language and
$\rho_1$ and $\rho_2$ are two congruences defined on the same pp-definable set
$A\subseteq D^r\!$.  Let the equivalence classes of $\rho_i$ be
$E_{ij}$ ($j\in[\nu_i]$, $i=1,2$).  Further, let
\begin{equation}
\label{eq20}
    \cM(j,k) = |E_{1j}\cap E_{2k}|\qquad(j\in [\nu_1],\ k\in [\nu_2]).
\end{equation}
$\Gamma$ is \emph{congruence singular} if $\cM$ is a rank-one
block matrix for every pair $\rho_1$, $\rho_2$ of congruences.\footnote{In
fact, Bulatov applies this term to the associated algebra, but
with essentially this meaning.}

\begin{lemma}
\label{lem190}
    $\Gamma$ is congruence singular if, and only if, it is strongly
    balanced.
\end{lemma}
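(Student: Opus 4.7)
My plan is to prove the equivalence by constructing, for each direction, a pp-definable ternary relation whose balance matrix is directly related to the congruence matrix $\cM$ of equation~\eqref{eq20}.

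For the direction ``strong balance $\Rightarrow$ congruence singularity'', suppose $\Gamma$ is strongly balanced and let $\rho_1,\rho_2$ be congruences on a pp-definable set $A\subseteq D^r\!$, with equivalence classes $E_{1j}$ ($j\in[\nu_1]$) and $E_{2k}$ ($k\in[\nu_2]$). I would form the pp-definable relation
\begin{equation*}
   H(\bx,\by,\bz)\ \equiv\ \rho_1(\bx,\bz)\wedge \rho_2(\by,\bz),
\end{equation*}
and view it as a ternary relation under the split $D^r\times D^r\times D^r\!$ (arity $3r\geq 3$). Its balance matrix is $M(\bx,\by)=|E_1(\bx)\cap E_2(\by)|$, where $E_i(\cdot)$ denotes the $\rho_i$-class, so $M$ is constant with value $\cM(j,k)$ on each rectangle $E_{1j}\times E_{2k}$; in other words, $M$ is an ``inflation'' of $\cM$. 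Strong balance forces $M$ to be a rank-one block matrix, and I then need to descend this property to $\cM$.

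For the converse, let $H$ be any pp-definable relation of arity at least three, together with a ternary split into $D^k\times D^\ell\times D^m\!$. I would set $A=H$, which is a pp-definable subset of $D^{k+\ell+m}\!$, and define two congruences on $A$ by
\begin{align*}
   \sigma_1\big((\bx,\by,\bz),(\bx',\by',\bz')\big)\ &\equiv\ \bx=\bx',\\
   \sigma_2\big((\bx,\by,\bz),(\bx',\by',\bz')\big)\ &\equiv\ \by=\by',
\end{align*}
both clearly pp-definable using $H$ and coordinatewise equality. The $\sigma_1$-class of $(\bx,\by,\bz)$ consists of all tuples in $H$ with first component $\bx$, and similarly for $\sigma_2$, so a one-line calculation gives $|E_1((\bx,\cdot,\cdot))\cap E_2((\cdot,\by,\cdot))|=|\{\bz:(\bx,\by,\bz)\in H\}|$. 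Thus the congruence matrix $\cM$ for the pair $(\sigma_1,\sigma_2)$ coincides with $H$'s balance matrix, and congruence singularity immediately yields that $H$ is balanced.

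I expect the main obstacle to be the endgame of the first direction: deducing that $\cM$ is a rank-one block matrix from the fact that its inflation $M$ is. The key observation is that $M$ is constant on each $E_{1j}\times E_{2k}$ rectangle, so rows (columns) belonging to the same $\rho_1$-class (resp.\ $\rho_2$-class) are identical and must therefore lie in a common block of $M$; in particular, each $E_{1j}$ lies entirely in one block of $M$. On a block of $M$ the rank-one factorisation $M(\bx,\by)=\alpha(\bx)\beta(\by)$ must then be constant on every $\rho_i$-class it meets, and the induced values $\bar\alpha(j),\bar\beta(k)$ furnish a rank-one factorisation of $\cM$ on the corresponding block. Reassembling the blocks gives the required rank-one block structure of $\cM$.
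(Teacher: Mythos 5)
Your proof is correct and is essentially the same as the paper's: the forward direction builds the ternary relation $\rho_1(\bx,\bz)\wedge\rho_2(\by,\bz)$ (the paper writes the equivalent $\chi(\bz)\wedge\rho_1(\bx,\bz)\wedge\rho_2(\bz,\by)$, where $\chi$ defines $A$; the extra conjunct is redundant since $\rho_1\subseteq A\times A$), observes that its balance matrix inflates $\cM$, and descends the rank-one block property by identifying duplicated rows and columns; the converse direction takes $A=H$ with the two coordinate-equality congruences, exactly as in the paper. Your more explicit treatment of the descent step (constancy of the rank-one factorisation on each $\rho_i$-class) fills in what the paper leaves as ``it follows that $\cM$ is a rank-one block matrix,'' but it is the same argument, not a new route.
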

\begin{proof}
    Suppose $\Gamma$ is strongly balanced, let $A\subseteq D^r$ be
    defined by the formula $\chi$ and let $\rho_1, \rho_2 \in \cocl$
    be congruences defined on $A\subseteq D^r$ with equivalence
    classes $E_{ij}$ ($j\in[\nu_i]$, $i=1,2$).  Then $\psi(\bx, \by,
    \bz) = \chi(\bz)\wedge \rho_1(\bx, \bz)\wedge \rho_2(\bz, \by)$ is
    a ternary relation.  Hence, for any $\bx\in E_{1j}$ and $\by\in
    E_{2k}$, the matrix
    \begin{equation*}
       M(\bx,\by)
            = |\set{\bz: \chi(\bz)\wedge \rho_1(\bx,\bz)\wedge \rho_2(\bz,\by)}|
            = |E_{1j}\cap E_{2k}|
    \end{equation*}
    is a rank-one block matrix.  But $M$ has a set of identical rows
    for all $\bx\in E_{1j}$ ($j\in[\nu_1]$) and a set of identical
    columns for all $\by\in E_{2k}$ ($k\in[\nu_2]$).  The matrix $\cM$
    has one representative from each of these sets.  It follows that
    $\cM$ is a rank-one block matrix.

    Now, suppose that $\Gamma$ is congruence singular and let $H\in
    \cocl$ be any ternary relation.  Define relations $\rho_i =
    \set{(\bx, \by): \bx, \by\in H \text{ and } x_i = y_i}$
    ($i=1,2$).  These are trivially equivalence relations, and are
    pp-definable as $H(x_1,x_2,x_3)\wedge H(y_1,y_2,y_3)\wedge
    (x_i=y_i)$.  Thus, they are two congruences defined on the same
    set, $H$, which is also pp-definable.  The equivalence classes of
    $\rho_i$ clearly correspond
    to $z_i\in\proj{H}$ ($i=1,2$) and we may index these classes by
    $z_i$.  Thus,
    \begin{align*}
        \cM(z_1,z_2)
            &= |\set{(x_1,x_2,x_3)\in H: x_1=z_1,x_2=z_2}|          \\
            &= |\set{x_3: (z_1,z_2,x_3)\in H}|                      \\
            &= M(z_1,z_2)\,.
    \end{align*}
    Since $\cM$ is a rank-one block matrix by assumption, so is $M$,
    and the conclusion follows.\qquad
\end{proof}

In~\cite{Bulato08}, Bulatov established the following theorem, giving a dichotomy for \numcsp that is equivalent, using~Lemma~\ref{lem190}, to Theorem~\ref{thm40}, except that the decidability of the dichotomy remained open.

\begin{theorem}[Bulatov~\cite{Bulato08}]
\label{thm50}
    If $\Gamma$ is congruence singular, \ncsp is in \fp.  Otherwise
    \ncsp is \numpc.
\end{theorem}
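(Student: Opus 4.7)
The plan is to derive Theorem~\ref{thm50} as an immediate corollary of Theorem~\ref{thm40} together with Lemma~\ref{lem190}, which is the form in which it already naturally appears in the excerpt. The scheme has just two short steps. First I would invoke Lemma~\ref{lem190} to translate Bulatov's algebraic criterion into the relational one of this paper: $\Gamma$ is congruence singular if and only if $\Gamma$ is strongly balanced. Second, I would apply Theorem~\ref{thm40}, whose tractability half gives \ncsp $\in \fp$ when $\Gamma$ is strongly balanced and whose hardness half gives \numpc otherwise. The hardness half is already established above in Lemma~\ref{lem180}; the tractability half is deferred to Section~\ref{sec:Dichotomy:Alg}.

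There is essentially no real obstacle inside Theorem~\ref{thm50} itself, since all of the substantive content has been absorbed into previous results. The hardness side rests on Lemma~\ref{lem180}, which reduces from the Bulatov--Grohe dichotomy Theorem~\ref{thm30} by exhibiting an unbalanced ternary relation whose balance matrix is not a rank-one block matrix. The tractability side requires the frame-based counting algorithm, together with the rank-one block matrix reconstruction of Lemma~\ref{lem173}. The equivalence of criteria in Lemma~\ref{lem190} has already been established by a pair of direct translations: given a ternary pp-definable $H$, take the congruences $\rho_i = \{(\bx,\by)\in H\times H : x_i = y_i\}$; and given congruences $\rho_1, \rho_2$ on a pp-definable set $A$ defined by $\chi$, take the ternary relation $\psi(\bx,\by,\bz) = \chi(\bz)\wedge \rho_1(\bx,\bz)\wedge \rho_2(\bz,\by)$, whose balance matrix records the intersection sizes $|E_{1j}\cap E_{2k}|$ of the congruence classes.

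Consequently, the proof I would write amounts to a single line of argument: by Lemma~\ref{lem190} the two criteria coincide, so Theorem~\ref{thm40} immediately yields Theorem~\ref{thm50}. I would, however, append a brief remark emphasising that Theorem~\ref{thm40} strictly strengthens Theorem~\ref{thm50}, since it also asserts decidability of the dichotomy, a point that was left open in Bulatov's original formulation and is the subject of Section~\ref{sec:Decide}.
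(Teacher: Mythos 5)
Your proposal is correct and matches the paper's treatment exactly: the paper presents Theorem~\ref{thm50} as Bulatov's original result and explicitly notes that it is equivalent to Theorem~\ref{thm40} via Lemma~\ref{lem190}, with decidability being the only extra content in Theorem~\ref{thm40}. There is no separate proof to compare against; the one-line derivation you propose is what the text intends.
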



\subsection{The counting algorithm}
\label{sec:Dichotomy:Alg}

This section is devoted to a proof of the poly\-nomial-time case of the
dichotomy theorem.

\begin{lemma}
\label{lem195}
    Let $\Gamma$ be strongly balanced and let $R\in\cocl$ be an
    $n$-ary relation.  Given a frame $F$ for $R$, $|R|$ can be
    computed in $\bigO(n^5)$ time.
\end{lemma}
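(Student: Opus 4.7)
The plan is to execute the inductive counting strategy sketched in the introduction. For $1 \leq i < j \leq n$ and $a \in \proj[j]R$, let $N_{i,j}(a)$ denote the number of length-$i$ prefixes in $\proj[{[i]}]R$ that extend to some $\bt \in R$ with $t_j = a$. Then $|R| = \sum_a N_{n-1,n}(a)$, so it suffices to compute every $N_{i,j}$, which I would do by induction on $i$. The base case $N_{1,j}(a)$ is immediate from the two-column projection $\proj[{1,j}]R$, obtained from $F$ by applying \closure to $\proj[{1,j}]F$ in $\bigO(n)$ time via Lemma~\ref{lem60}.

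For the inductive step, assume $N_{i-1,j'}(\cdot)$ is known for all $j' \geq i$, and fix $j > i$. View $H = \proj[{[i]\cup\{j\}}]R$ as a ternary relation on $\proj[{[i-1]}]R \times \proj[i]R \times \proj[j]R$; its balance matrix $M_{xy} = |\{\bu : (\bu,x,y) \in H\}|$ has $a$-indexed column sum equal to $N_{i,j}(a)$, so it suffices to reconstruct $M$. Strong balance of $\Gamma$ forces $M$ to be a rank-one block matrix, and Lemma~\ref{lem173} reduces the task to determining (i) the underlying relation of $M$, namely $\proj[{i,j}]R$, again extractable from $F$ via \closure, and (ii) the row and column sums of $M$.

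For (ii), I would pass to a quotient. Collapsing $\proj[i]R$ and $\proj[j]R$ by the natural common-$(i{-}1)$-prefix congruences (cf.\ Lemma~\ref{lem40}, Corollary~\ref{cor10}) yields a smaller matrix $\widehat{M}$ indexed by equivalence classes, with dimensions bounded by $q$. Using the rectangularity properties from Section~\ref{sec:Rect}, the block structure of $M$ is captured by these congruences, so $\widehat{M}$'s entries are exactly the block totals of $M$. A counting argument using the previously computed $N_{i-1,\cdot}$ values expresses both these block totals and the row and column marginals of $\widehat{M}$ in closed form; Lemma~\ref{lem173}, applied within each block of $M$ (with block totals and original within-block marginals playing the roles of the matrix total and the row/column sums), then recovers the individual entries of $M$, and hence $N_{i,j}(a) = \sum_x M_{xa}$.

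For the time bound, there are $\bigO(n^2)$ pairs $(i,j)$; for each, the dominant cost is a constant number of projection-and-closure computations ($\bigO(n^3)$ apiece by Corollary~\ref{cor100} and Lemma~\ref{lem60}), plus $\bigO(1)$-sized linear algebra on the $q$-bounded matrices $M$ and $\widehat{M}$. This totals $\bigO(n^5)$. The principal technical obstacle is the quotient construction: pinning down the correct congruences so that $\widehat{M}$'s block structure matches that of $M$ and its marginals are genuinely determined by $N_{i-1,\cdot}$, and verifying that the row and column sums of $M$ can indeed be unfolded from this data without circularity. These details are what Section~\ref{sec:Dichotomy:Cong} is earmarked to address.
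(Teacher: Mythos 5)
Your overall architecture matches the paper's: iterate on $i$, regard $\proj[{[i]\cup\{j\}}]R$ as a ternary relation whose balance matrix $M$ is rank-one block, collapse to a small matrix $\widehat{M}$, and reconstruct $M$ from its block structure and marginals via Lemma~\ref{lem173}. But the quotient step --- which you correctly identify as the crux --- is sketched with the wrong ingredients, and as written it would not work.

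You propose to collapse $\proj[i]R$ and $\proj[j]R$ by ``the natural common-$(i{-}1)$-prefix congruences'' of Lemma~\ref{lem40}/Corollary~\ref{cor10}, i.e.\ by $\tw[i]$ on $\proj[i]R$ (common prefix in $D^{i-1}$) and the analogous relation on $\proj[j]R$. This is too coarse: $M$ is generally \emph{not} constant on $\tw[i]$-classes. For instance, if $H=\proj[{[i]\cup\{j\}}]R = \{(\bu,a,c),(\bu,b,d)\}$, then $a\tw[i] b$ (both have prefix $\bu$) yet $M(a,c)=1\neq 0=M(b,c)$. The congruences that actually make the argument go through are $\tw[j,i]$ on $\proj[i]R$ (common prefix in $\proj[{[i-1]\cup\{j\}}]R$, so the $j$-coordinate must also agree) and $\tw[i,j]$ on $\proj[j]R$ (common prefix in $\proj[{[i]}]R$). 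These are strictly finer than the Lemma~\ref{lem40} congruences; establishing that they are pp-definable and computable in $\bigO(n^5)$ total is exactly the content of Section~\ref{sec:Dichotomy:Cong}. You have essentially deferred the hard part and, in the process, pointed at the wrong lemma.

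Two further slips. First, the statement ``$\widehat{M}$'s entries are exactly the block totals of $M$'' is incorrect: $\widehat{M}$ keeps one \emph{representative} entry per class-pair (rows/columns of $M$ within a $\tw[j,i]$- or $\tw[i,j]$-class are identical); it is the \emph{marginals} of $\widehat{M}$, not its entries, that are recovered from $N_{i-1,\cdot}$. Second, those marginal identities are nontrivial: the paper shows $\sum_{x\in\cS(y)}M(x,y)=N_{i-1,j}(y)$ and $\sum_{y\in\cS'(x)}M(x,y)=N_{i-1,i}(x)$ by a rectangularity argument, and a further \maltsev calculation is needed to show that the fixed-$y$ equivalence $\theta_y$ depends only on the block of $\proj[i,j]R$ containing $y$, so that a single pair of congruences $\tw[i,j],\tw[j,i]$ suffices. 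None of this appears in your sketch; the ``counting argument'' you invoke is precisely what needs to be supplied.
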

\begin{proof}
    If $n=1$ then $R = \proj[1] R = \proj[1] F = F$ so $|R| = |F|$ and
    we are done.  So we may assume that $n\geq 2$.  Now, for $1\leq
    i<j\leq n$, define $N_{i,j}\colon \proj[j]{R}\to\nats$ by
    \begin{equation*}
       N_{i,j}(a) = |\set{(\bu,a)\in \proj[{[i] \cup \set{j}}]{R}}|\,.
    \end{equation*}
    Since we have
    \begin{equation*}
       |R| = \sum_{a\in\proj[n]{R}} N_{n-1,n}(a)\,,
    \end{equation*}
    we need to compute the function $N_{n-1,n}$, which we do
    iteratively.  For each $j\in \set{2,\dots, n}$, $N_{1,j}(a) = |\set{b\in
    \proj[1]{R}: (b,a)\in \proj[1,j]{R}}|$.  By Lemma~\ref{lem60},
    these quantities can be computed by using $F$ to determine
    $\proj[1,j]{R}$, in total time $\bigO(n^2)$.  (Note, in particular,
    that $|\proj[1,j]{R}| \leq q^2 = \bigO(1)$ and $F$ may be assumed to
    be small so $|F|\leq \bigO(n)$.)  To continue the iteration, we use
    $N_{i,i+1}$ and $N_{i,j}$ to compute $N_{i+1,j}$ for $j = i+2,
    \dots, n$.  We repeat these computations for each $i=1,\dots,n-1$.

    Consider a particular $i$ and $j$ and suppose that we have
    computed $N_{i-1, k}$ for all $k\geq i$.  Let $J=[i]\cup\set{j}$
    and let $H=\proj[J]{R}$, which we will express as a ternary
    relation
    \begin{equation*}
       H = \set{(\bu,x,y)\in\proj[J]{R}:
                                \bu\in\proj[{[i-1]}]{R},\,
                                x\in\proj{R},\, y\in\proj[j]{R}}\,.
    \end{equation*}
    Since $R$ is strongly balanced, the matrix
    \begin{equation*}
       M(x,y) = |\set{\bu\in\proj[{[i-1]}]{R}: (\bu,x,y)\in H}|
    \end{equation*}
    is a rank-one block matrix.  The block structure of $M$ is given
    by the relation \proj[i,j]{R}, since if $(x,y)\in\proj[i,j]{R}$,
    there is at least one $\bt\in R$ with $\proj{\bt}=x$ and $\proj[j]{\bt}
    = y$.  By Lemma~\ref{lem60}, we can compute \proj[i,j]{R} in
    $\bigO(n)$ time, using $F$.

    For notational simplicity, let us write $\cD_i = \proj{R}$.
    Consider $M(\cdot,y)$, the $y$-indexed row of $M$.  We have
    \begin{equation}
    \label{eq25}
        \sum_{x\in \cD_i} M(x,y)
            = \sum_{x\in \cD_i}|\set{\bu: (\bu,x,y)\in H}|
            = |\set{(\bu,x): (\bu,x,y)\in H}|
            = N_{i,j}(y)\,.
    \end{equation}
    Now observe that the relation $B_y(\bu,x) = \set{(\bu,x):
    (\bu,x,y)\in H}$ is rectangular, by Lemma~\ref{lem35}.
    Write $S_y(x) = \set{\bu: (\bu,x,y)\in H}$.  By
    Corollary~\ref{cor10}, there is an equivalence relation on $\cD_j$
    \begin{equation*}
       \theta_y(x_1, x_2) = \exists\bu\,\big(H(\bu,x_1,y)\wedge H(\bu,x_2,y)\big)
    \end{equation*}
    such that $S_y(x_1)$ and $S_y(x_2)$ are equal, if $\theta_y(x_1,
    x_2)$, and disjoint, otherwise.  Thus, if $\cS(y) \subseteq \cD_i$
    contains one representative of each equivalence class of
    $\theta_y$, then
    \begin{equation}
    \label{eq30}
        \sum_{x\in \cS(y)} M(x,y)
            = |\set{\bu: \exists x\,(\bu,x,y)\in H}|
            = N_{i-1,j}(y)\,.
    \end{equation}

    Now, suppose that $\theta_y(x_1,x_2)$ and $y'\neq y$.  Thus,
    $H(\bu, x_1, y)$ and $H(\bu, x_2, y)$ for some $\bu$, so $(x_1,y),
    (x_2,y)\in C$ for some block $C$ of \proj[i,j]{R}.  There is
    $\bu'$ such that $H(\bu'\!,x_1,y')$ if, and only if, $(x_1,y')\in
    C$.  But then we have
    \begin{equation*}
       \begin{array}{c@{\hspace{1cm}}c@{\hspace{1cm}}c}
            \bu'       & x_1 & y'\phantom{,}   \\
            \bu\phdash & x_1 & y\phantom{{}',} \\
            \bu\phdash & x_2 & y\phantom{{}',} \\ \hline
            \bu'       & x_2 & y',
        \end{array}
    \end{equation*}
    and, hence, $\theta_{y'}(x_1, x_2)$.  Thus, the equivalence
    relations $\theta_y$ depend only on the block $C$ containing $y$.
    Thus, we may deduce the classes of $\theta_y$ from \proj[i,j]{R}
    and those of the relation \tw[i,j], defined by
    \begin{equation*}
       x_1\tw[i,j]x_2 \quad \iff \quad
            \exists\bu,y\, \big(H(\bu,x_1,y)\wedge H(\bu,x_2,y)\big)\,.
    \end{equation*}
    We prove in Section~\ref{sec:Dichotomy:Cong}, below, that the
    \tw[i,j] are congruences in \cocl[R].  Thus, the matrix $M$ has
    identical columns corresponding to the equivalence classes of
    \tw[i,j].

    Similarly, there are identical rows corresponding to the
    equivalence classes of \tw[j,i], where
    \begin{equation*}
       y_1\tw[j,i]y_2 \quad \iff \quad
            \exists\bu,x\, \big(H(\bu,x,y_1)\wedge H(\bu,x,y_2)\big)\,.
    \end{equation*}
    (There is no ambiguity of notation between $\tw[i,j]$
    and $\tw[j,i]$ since we have $i < j$.)

    We prove in Section~\ref{sec:Dichotomy:Cong} that the \tw[j,i]
    are also congruences in \cocl[R].  Now, if $\cS'(x)$ contains one
    representative of each of the classes of the corresponding
    equivalence relation $\theta'_x$, we have
    \begin{equation}
    \label{eq40}
        \sum_{y\in \cS'(x)} M(x,y)
            = |\set{\bu: \exists y\,(\bu,x,y)\in H}|
            = N_{i-1,i}(x)\,.
    \end{equation}

    The matrix $\widehat{M}$, obtained by choosing one
    representative from each of the equivalence classes of \tw[i,j]
    and \tw[j,i], is also a rank-one block matrix.  Moreover, we know
    the block structure, row and column sums of $\widehat{M}$, from
    \proj[i,j]{R}, \tw[i,j], \tw[j,i], \eqref{eq30} and \eqref{eq40}.
    Hence, by Lemma~\ref{lem173}, we can reconstruct all the entries
    of $\widehat{M}$.  Then, using \proj[i,j]{R}, \tw[i,j] and
    \tw[j,i], we can reconstruct the matrix $M$.  Finally we compute
    the row sums, as in \eqref{eq25}, to give the value of $N_{i,j}(a)$
    for each $a\in\proj[j]{R}$.

    The time complexity of the algorithm is $\bigO(n)$ for a
    given $i$ and $j$, even in the bit-complexity model.  Since there
    are $\bigO(n^2)$ pairs $i,j$, the overall complexity is $\bigO(n^3)$.

    To complete the proof, we must show how to compute the congruences
    \tw[i,j] and \tw[j,i] in $\bigO(n^5)$ time.  We do this in the following
    section.\qquad
\end{proof}

The time complexity of this algorithm is $\bigO(n^5)$.  However, observe
that the time needed to compute $F$ is already $\bigO(mn^4)$.  We may
assume that $m=\Omega(n)$ as, otherwise, there is a variable, $x_1$
say, which appears in no constraint.  Thus, $x_1$ can be removed to
give a relation $R_1(x_2,\dots,x_n)$ such that $|R| = q|R_1|$.  Therefore, the time complexity of the counting algorithm is no worse than the
$\bigO(mn^4)$ cost of computing the frame $F$.


\subsection{The congruences \tw[i,j] and \tw[j,i]}
\label{sec:Dichotomy:Cong}

We now prove that the relations \tw[i,j] and \tw[j,i] used in the
proof of Lemma~\ref{lem195} are congruences and that they can be
computed efficiently.  Let $\Gamma$ be strongly rectangular and let
$R$ be an $n$-ary relation determined by a $\Gamma$-formula $\Phi$.
For $1 < i < j \leq n$, recall that
\begin{enumerate}[label=(\roman*)]
\item $a \tw[i,j] b$ ($a,b\in\proj[j]{R}$) if there are $\bt,\bt'\in
    R$ such that $\proj[{[i]}]\bt = \proj[{[i]}]\bt'\!$, $t_j=a$ and
    $t'_j=b$;
\item $a \tw[j,i] b$ ($a,b\in\proj{R}$) if there are $\bt,\bt'\in R$
    such that $\proj[J]\bt = \proj[J]\bt'\!$, $t_i=a$ and
    $t'_i=b$,\\ where $J=[i-1]\cup\set{j}$.
\end{enumerate}

\begin{lemma}
\label{lem110}
    For all $1 < i < j \leq n$, \tw[i,j] and \tw[j,i] are congruences
    in \cocl[R].
\end{lemma}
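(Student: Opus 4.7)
The plan is to identify each of $\tw[i,j]$ and $\tw[j,i]$ with an instance of the equivalence relation $\theta_2$ from Corollary~\ref{cor10}, applied to a suitable binary relation obtained from $R$ by projection and column reordering, and then invoke Corollary~\ref{cor15} to deduce that each is a congruence. Throughout, I rely on the fact that $\cocl[R] \subseteq \cocl$, so $R$ and any relation pp-defined from it is strongly rectangular.

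I would handle $\tw[i,j]$ first. View the projection $\proj[{[i]\cup\{j\}}]R$ as a binary relation $B \subseteq D^i \times D$, where the $D^i$-coordinate encodes positions $1,\ldots,i$ of $R$ and the final $D$-coordinate encodes position $j$. The definition of $\tw[i,j]$ then unpacks immediately to $a \tw[i,j] b \iff \exists \bu\,\big(B(\bu,a) \wedge B(\bu,b)\big)$, which is precisely the relation $\theta_2$ of Corollary~\ref{cor10}. Since $B$ is a binary relation pp-definable from $R$ and $\Gamma$ is strongly rectangular, $B$ is rectangular, so Corollary~\ref{cor15} yields that $\tw[i,j]$ is a congruence. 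Because the defining formula uses only $R$, it actually lies in $\cocl[R]$.

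For $\tw[j,i]$ I would follow an identical strategy with a binary relation $B' \subseteq D^i \times D$ obtained from $\proj[{[i]\cup\{j\}}]R$ by permuting columns so that the first $i$ coordinates correspond to the positions in $J = [i-1]\cup\{j\}$ (in order) and the final coordinate corresponds to position $i$. Permutations of columns preserve pp-definability and rectangularity, so $B'$ is again binary, pp-definable from $R$ and rectangular; the definition of $\tw[j,i]$ now literally coincides with $\theta_2$ for $B'$, and Corollary~\ref{cor15} again produces the desired congruence in $\cocl[R]$.

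The only real work is the bookkeeping in setting up $B$ and $B'$ so that the respective definitions of $\tw[i,j]$ and $\tw[j,i]$ match the form of $\theta_2$ exactly; once that is done, reflexivity, symmetry, transitivity and pp-definability are all handed to us by Corollary~\ref{cor15}, and no further manipulation using the \maltsev polymorphism is required.
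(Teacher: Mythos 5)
Your proposal is correct and follows essentially the same route as the paper's proof: both identify $\tw[i,j]$ (resp.\ $\tw[j,i]$) with the relation $\theta_2$ of Corollary~\ref{cor10} applied to a binary relation obtained by projecting $R$ onto $[i]\cup\{j\}$ (with a column permutation placing coordinate $i$ last in the $\tw[j,i]$ case), and then invoke Corollary~\ref{cor15} together with the strong rectangularity of $R$ to conclude it is a congruence in $\cocl[R]$. The only cosmetic difference is that the paper writes the binary relation via explicit existential quantification over the remaining coordinates of $R$ rather than as a projection, but these are the same object.
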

\begin{proof}
    Consider the binary relation $B$ defined by $B(\bu,y) = \exists \bz_1, \bz_2\, R(\bu,\bz_1,y,\bz_2)$ on $\proj[{[i]}]{R} \times
    \proj[j]{R}$.  This is rectangular so induces a
    congruence $\theta_2$ on \proj[j]{R} by Corollary~\ref{cor15}.
    This congruence is \tw[i,j].

    The proof for \tw[j,i] is similar, using $B$ defined by $B(\bu,y)=\exists \bz_1,\bz_2\,     R(\bx,y,\bz_1,w,\bz_2)$ on $\proj[J]{R}     \times \proj{R}$, where $\bu=(\bx,w)$.\qquad
\end{proof}

\begin{lemma}
\label{lem200}
    The set of congruences \tw[i,j] and \tw[j,i] ($1<i<j\leq n$) can be
    computed in $\bigO(n^5)$ time.
\end{lemma}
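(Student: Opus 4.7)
The plan is to exploit the fact that both $\tw[i,j]$ and $\tw[j,i]$ are equivalence relations on sets of size at most $q = \bigO(1)$, so each is described by an $\bigO(1)$-size partition. With $\bigO(n^2)$ pairs $(i,j)$ to process, reaching the claimed $\bigO(n^5)$ bound amounts to computing each congruence in $\bigO(n^3)$ time from the given frame $F$ of $R$.

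The central reduction is that both $\tw[i,j]$ and $\tw[j,i]$ arise as the $\tw[i+1]$-congruence, in the sense of Section~\ref{sec:Strong-rect}, of the $(i+1)$-ary strongly rectangular relation $R' = \proj[{[i]\cup\{j\}}]{R}$, but under two different column orderings of the index set $[i] \cup \{j\}$: the natural ordering $(1,\dots,i,j)$ recovers $\tw[i,j]$, while the alternative ordering $(1,\dots,i-1,j,i)$ recovers $\tw[j,i]$. Once a small frame $F'$ for $R'$ is available under the chosen ordering, the equivalence classes of the congruence can be read off by grouping the last-coordinate values appearing in $F'$ according to their common prefixes (Corollary~\ref{cor30}), in $\bigO(\size{F'}) = \bigO(n)$ time.

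It therefore remains to construct such a small frame $F'$ in $\bigO(n^3)$ time per pair. By Lemma~\ref{lem50}, the set $G_0 = \proj[{[i]\cup\{j\}}]{F}$ is a generating set of size $\bigO(n)$ for $R'$, i.e.\ $\cl{G_0} = R'$, but $G_0$ need not itself be a frame: the coarser congruences of $R'$ can merge equivalence classes that were distinct in $R$, so property~(b) of the frame definition may fail. I plan to repair $G_0$ position by position, in the spirit of Lemmas~\ref{lem70} and~\ref{lem90}. Processing positions $k = 2, 3, \dots, i+1$ of the chosen ordering in turn, for each pair of values at position $k$ whose inclusion in a common $\tw[k]$-class of $R'$ is not already witnessed by a common length-$(k-1)$ prefix in the current working set, I will synthesize such a witness by a single \maltsev combination of three tuples already in the working set, exactly in the style of the tables displayed in the proofs of Lemmas~\ref{lem70} and~\ref{lem90}. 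Each of the $\bigO(n)$ positions requires $\bigO(n^2)$ work for scanning candidate triples, yielding $\bigO(n^3)$ per frame; a final invocation of Lemma~\ref{lem70} reduces the result to a small frame.

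The principal obstacle is verifying correctness of this repair step---specifically, that at each $k$ any $\tw[k]$-class merge implied by the full closure $R'$ can in fact be certified by a bounded \maltsev derivation from the current working set. I expect this to follow by induction on $k$: after positions $1,\dots,k-1$ have been repaired, the working set is already a frame for $\proj[{[k-1]}]{R'}$, providing enough structural information about prefixes at positions $[k-1]$ to identify a single three-tuple \maltsev combination that produces the missing witness, by a calculation essentially identical to the one displayed at the end of the proof of Lemma~\ref{lem90}.
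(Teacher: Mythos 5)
Your structural reduction is correct and is actually a cleaner way to see what is going on than the paper offers: $\tw[i,j]$ is indeed the $\tw[i+1]$-congruence of $\proj[{[i]\cup\{j\}}]{R}$ under the natural column order, and $\tw[j,i]$ is the $\tw[i+1]$-congruence under the order $(1,\dots,i-1,j,i)$. However, after that observation your proof diverges from the paper's, and the point where it diverges contains a genuine gap that you yourself flag as the ``principal obstacle'' but do not close.

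The paper never attempts to build a frame for the $(i{+}1)$-ary projection $R'$. Instead it stays with $n$-ary relations obtained from $R$ by instantiation, where the existing machinery applies directly: for $\tw[i,j]$ it picks a prefix $(t_1,\dots,t_i)$ from $F$, applies Corollary~\ref{cor100} to obtain a frame $F^\star$ for $R(t_1,\dots,t_i,x_{i+1},\dots,x_n)$ in $\bigO(n^3)$ time, and reads off the class of $t_j$ as $\proj[j]{F^\star}$; for $\tw[j,i]$ it precomputes, once for each $j$ and each $a\in\proj[j]R$, a frame $F_{j,a}$ for $\Phi\wedge\chi_a(x_j)$ via Lemma~\ref{lem140} in $\bigO(n^4)$ each (so $\bigO(n^5)$ total), and then the $\tw[i]$-congruence of $R_{j,a}$ read from $F_{j,a}$ gives the $\tw[j,i]$-classes. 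Crucially, Corollary~\ref{cor100} and Lemma~\ref{lem140} both require a \emph{frame} as input, which the paper always has.

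Your plan starts from $G_0 = \proj[{[i]\cup\{j\}}]{F}$, which by Lemma~\ref{lem50} generates $R'$ but need not be a frame for it. You are right that it can fail property~\ref{item20} at the position corresponding to coordinate $j$ (and, for the reordering, at the positions corresponding to $j$ and $i$): the congruence $\tw[i+1]$ of $R'$ is coarser than $\tw[j]$ of $R$, so $\tw[j]$-classes can merge and $G_0$ need not contain witnesses with a common $[i]$-prefix for the merged class. But the proposed repair --- ``for each pair of values at position $k$ whose inclusion in a common $\tw[k]$-class of $R'$ is not already witnessed \ldots synthesize such a witness by a single \maltsev combination of three tuples already in the working set'' --- is under-specified in two respects. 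First, you need to determine which values lie in a common $\tw[k]$-class of $R'=\cl{G_0}$, and that information lives in the closure, not in $G_0$; you give no procedure for computing it. Second, the claim that a single \maltsev combination over the current working set suffices is unverified and does not follow from Lemma~\ref{lem70} or Lemma~\ref{lem90}: both of those begin with a set that is already a frame, so the three required tuples (two agreeing at position $k$, two agreeing on the $[k{-}1]$-prefix) are guaranteed to be present; starting from a mere generating set, the required tuples may sit deep in $\cl{G_0}$ and need a chain of \maltsev applications, not a single one. Once you fill these two holes, you will likely find yourself fixing a prefix and invoking Corollary~\ref{cor100} on the $n$-ary $R$ after all --- that is, reconstructing the paper's argument. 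As written, though, the repair step is a gap, not a proof.

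Two smaller quibbles: $\size{F'}$ for a small $(i{+}1)$-ary frame is $\bigO(n^2)$, not $\bigO(n)$ (a small frame has $\bigO(n)$ \emph{rows}, each of length $i{+}1$), though this does not affect the overall bound; and scanning candidate triples over a working set of size $\bigO(n)$ is $\bigO(n^3)$ per position, not $\bigO(n^2)$, so your per-pair accounting would need to be tightened even if the repair were correct.
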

\begin{proof}
    We compute the relations \tw[i,j], with $i<j$, as follows.  From
    the frame $F$, we compute $\proj[i,j]{R}$.  For each
    $b\in\proj{R}$, this gives a tuple $\bt$ such that $\proj[j]{\bt}
    = b$.  We now use Corollary~\ref{cor100}, to compute a frame $F^\star$
    for $R(t_1,\dots,t_i,x_{i+1},\dots,x_n)$ in $\bigO(n^3)$ time.  Now
    $\proj[j]{F^\star}$ gives the equivalence class of \tw[i,j]
    containing $b$.  We repeat this procedure, as in the proof of
    Lemma~\ref{lem140}, until we have determined all the equivalence
    classes.

    There are $\bigO(n^2)$ pairs $i,j$ with $i<j$ and computing each
    \tw[i,j] requires $\bigO(n^3)$ time.  Thus, the we can compute all
    \tw[i,j] in $\bigO(n^5)$ time.

    Now consider the relations \tw[j,i], with $i<j$.  For each
    $a\in\proj{R}$, compute a frame $F_{j,a}$ for the relation
    $R_{j,a}$ determined by $\Phi\wedge \chi_a(x_j)$.  (Recall that
    $\chi_a$ is the relation containing only $a$ and we may assume that
    $\chi_a\in \Gamma$ by Lemma~\ref{lem35}.)  From
    Lemma~\ref{lem140}, we can do this in $\bigO(n^4)$ time, so $\bigO(n^5)$
    time in total.  Now, for each $i<j$, determine $\proj[i,j]{R}$,
    using $F$.  This requires $\bigO(n)$ time for each pair $i,j$, so
    $\bigO(n^3)$ time in total.

    Now, for each block $C$ of \proj[i,j]{R}, choose $a\in\proj[j]{R}$
    so that $(x,a)\in C$ for some $x\in\proj{R}$.  Then the congruence
    \tw[i] of $R_{j,a}$ gives the equivalence classes of \tw[j,i]
    corresponding to $C$.  These can be determined in $\bigO(n)$ time
    using $F_{i,a}$.  Thus, the total time to compute \tw[j,i] for all
    pairs $i,j$ with $i<j$ is $\bigO(n^5)$.

    Hence the total time needed to compute all of these congruences is
    $\bigO(n^5)$.\qquad
\end{proof}


\section{Decidability}
\label{sec:Decide}
Having shown that \numcsp has a dichotomy, we must consider whether it
is effective. That is, given a relational structure $\frS=(D,\Gamma)$
can we decide algorithmically whether the problem \ncsp is in \fp or
is \numpc? This is the major question left open
in~\cite{Bulato08}. Here we show that the answer is in the
affirmative.

We will construct an algorithm to solve the following decision problem.\vspace{1ex}

\begin{tabular}{lll@{\,}cl}
\quad&\multicolumn{4}{l}{\strbal}\\[0.25ex]
&\ \ &\textsf{Instance}&:& A relational structure $\frS=(D,\Gamma)$. \\
&\ \ &\textsf{Question}&:& Is $\Gamma$ strongly balanced?
\end{tabular}\vspace{1ex}

Recall from Section~\ref{sec:Defs} that we may assume that
$\size{\Gamma}\geq q$.  Thus, we may take $\size{\Gamma}$ as the
measure of input size for \strbal. We bound the complexity of \strbal
as a function of $\size{\Gamma}$.  Complexity is a secondary issue,
since $\size{\Gamma}$ is a constant in the nonuniform model for
\ncsp. In the nonuniform model, we are only required to show that some
algorithm exists to solve \strbal. However, we believe that the
computational complexity of deciding the dichotomy is intrinsically
interesting.

Our approach will be to show that the strong balance condition is equivalent to a structural property of $\Gamma$ that can be checked in \np.

We must first verify that $\Gamma$ is strongly rectangular, since otherwise it cannot be strongly balanced, by Lemma~\ref{lem175}. Thus, we consider the following computational problem.\vspace{1ex}

\begin{tabular}{lll@{\,}cl}
\quad&\multicolumn{4}{l}{\strect}\\[0.25ex]
&\ \ &\textsf{Instance}&:& A relational structure $\frS=(D,\Gamma)$. \\
&\ \ &\textsf{Question}&:& Is $\Gamma$ strongly rectangular?
\end{tabular}\vspace{1ex}

\begin{lemma}\label{lem280}
\strect is in \np.
\end{lemma}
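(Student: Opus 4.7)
The plan is to use Lemma~\ref{lem20}: $\Gamma$ is strongly rectangular if and only if it has a \maltsev polymorphism $\ph\colon D^3\to D$. This characterisation immediately suggests a certificate for the \np-algorithm, namely the full truth table of a candidate \maltsev polymorphism $\ph$.

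First, I would verify that the certificate has polynomial size. The table of $\ph$ consists of $q^3$ entries, each an element of $D$, so its bit length is $\bigO(q^3 \log q)$. Since every element of $D$ appears in some tuple of $\Gamma\!$, we have $q \leq \size{\Gamma}$ (as noted in Section~\ref{sec:Defs}), and the certificate is therefore of size polynomial in $\size{\Gamma}$.

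Next, I would describe the verification. Given a purported $\ph$, the algorithm checks two things: (a) the \maltsev identities $\ph(a,b,b)=\ph(b,b,a)=a$ for all $a,b\in D$, which requires $\bigO(q^2)$ table lookups; and (b) that $\ph$ preserves every $H\in\Gamma\!$, i.e.\@ for every triple $\bu_1,\bu_2,\bu_3\in H$ the tuple $\ph(\bu_1,\bu_2,\bu_3)$ lies in $H$. For an $r_H$-ary $H$ with $\ell_H$ tuples this is $\bigO(\ell_H^3)$ triples, each requiring $r_H$ table lookups in $\ph$ and a membership test in $H$ (e.g.\@ by a linear scan through the matrix of $H$). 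The total cost, summed over $H\in\Gamma\!$, is a polynomial in $\size{\Gamma}$, essentially the same bound as in Lemma~\ref{lem30}.

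Correctness of the algorithm is then immediate from Lemma~\ref{lem20}: if $\Gamma$ is strongly rectangular some such $\ph$ exists and will be accepted, whereas if $\Gamma$ is not strongly rectangular no $\ph$ exists so every certificate is rejected. The only subtlety—and the reason one cannot just quote Lemma~\ref{lem30}—is that the enumeration argument there gives an $\bigO(q^{q(q-1)^2})$ search space, which is super-polynomial in $\size{\Gamma}$ when $q$ is part of the input; the \np-algorithm sidesteps this by guessing $\ph$ rather than enumerating. No other real obstacle arises, so the lemma follows.
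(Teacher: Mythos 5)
Your proof is correct and takes essentially the same approach as the paper: both guess the table of a candidate \maltsev polymorphism $\ph$ and verify the \maltsev identities and the preservation of every $H\in\Gamma$ in polynomial time, with correctness following from Lemma~\ref{lem20}. Your extra remark about why one cannot simply enumerate as in Lemma~\ref{lem30} when $q$ is part of the input is exactly the point the paper makes implicitly via Remark~\ref{rem80}.
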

\begin{proof}
By Lemma~\ref{lem30}, we can verify that a given function \ph is a \maltsev polymorphism in $\bigO(\size{\Gamma}^4)$ time. Thus, we select a function $\ph\colon D^3\to D$ nondeterministically in $\bigO(q^3)=\bigO(\size{\Gamma}^3)$ time and check that it is a \maltsev polymorphism in a further $\bigO(\size{\Gamma}^4)$ time.\qquad
\end{proof}

The remainder of this section is organised as follows.  We first give
definitions and notation that were held over from
Section~\ref{sec:Defs} because they are only used here.  In
Section~\ref{sec:Dec:Rank-one}, we give a characterisation of rank-one
block matrices that we use in our decidability proof.  The proof
itself appears in Section~\ref{sec:Dec:Proof}.


\subsection{Definitions and notation}

An equivalent but different view of \cspg from the one we
have used is often taken in the literature. This is to regard $\Phi$
as a finite structure with domain $V$ and relations determined by the
scopes of the constraints. Thus, we have relations $\tilde{H}$, where
$(i_1, i_2, \dots, i_r)\in \tilde{H}$ if $H(x_{i_1}, x_{i_2}, \dots,
x_{i_r})$ is a constraint.  In this view, a satisfying assignment $\bx$ is a
\emph{homomorphism} from $\Phi$ to $\Gamma\!$.

The following definitions and notation will be used in the remainder
of this section.  Let $[D_1\to D_2]$ denote the set of functions from
$D_1$ to $D_2$. Then a homomorphism between two relational structures
$\frS_1=(D_1,\Gamma_{\!1})$, $\frS_2=(D_2,\Gamma_{\!2})$ is a function
$\sigma\in[D_1\to D_2]$ that preserves relations. Thus, for each
$r$-ary relation $H_1\in\Gamma_{\!1}$ there is a corresponding $r$-ary
relation $H_2\in\Gamma_{\!2}$ and, for each tuple $\bu = (u_1, \ldots,
u_r)\in H_1$, we have $\sigma(\bu) = (\sigma(u_1), \ldots,
\sigma(u_r)) \in H_2$. We will write $\sigma\colon\frS_1\to\frS_2$ to
indicate that $\sigma$ is a homomorphism.

Let $[V\inj D]$ denote the set of all \emph{injective} functions $V\to
D$ and let $[V\bij D]$ denote the set of all \emph{bijective}
functions $V\to D$. If $\sigma\colon\frS_1\to\frS_2$ and
$\sigma\in[D_1\inj D_2]$, then $\sigma$ is called a
\emph{monomorphism} and we will write $\sigma\colon\frS_1\inj \frS_2$.
If $\sigma$ is a bijective homomorphism and $\sigma^{-1}$ is also a
homomorphism, then $\sigma$ is called an \emph{isomorphism} and we
write $\sigma\colon \frS_1\bij \frS_2$.  Then $\frS_1$, $\frS_2$ are
\emph{isomorphic}, so isomorphic structures are the same up to
relabelling. An \emph{endomorphism} of a relational structure $\frS$
is a homomorphism $\sigma\colon\frS\to\frS$ and an \emph{automorphism}
is an isomorphism $\sigma\colon\frS\bij\frS$. Note that the definition
of an endomorphism is identical to that of a unary polymorphism. Note
also that $[D\inj D]=[D\bij D]$, since $D$ is finite, so an injective
endomorphism is always an automorphism. Clearly, the identity function
is always an automorphism, for any relational structure $\frS$.

We use the following construction of \emph{powers of $\frS$} (see, for example,~\cite[p.\,282]{NeSiZa10}). For any relational structure $\frS=(D,\Gamma)$ and $k\in\nats$, the relational structure $\frS^k=(D^k\!,\Gamma^k)$ is defined as follows. The domain is the Cartesian power $D^k\!$. The constraint language $\Gamma^k$ is such that, for each $r$-ary relation $H\in\Gamma\!$, there is an $r$-ary $H^k\in\Gamma^k\!$, which is defined to be the following relation. If $\bu_i=(u_{i,1},u_{i,2},\ldots,u_{i,k}) \in D^k$ ($i\in[r]$), then $(\bu_1,\bu_2,\ldots,\bu_r) \in H^k$ if, and only if,
$(u_{1,j},u_{2,j},\ldots,u_{r,j}) \in H$ for all $j\in[k]$.  Now, if
$\Psi$ is a pp-formula in $\Gamma\!$, we define the corresponding
formula $\Psi^k$ to be identical to $\Psi$, except that each
occurrence of $H\in\Gamma$ is replaced by the corresponding relation
$H^k\in\Gamma^k\!$. Observe that the relation $\Psi^k$ is actually
pp-definable in $\Gamma\!$, by the formula $\Psi^k(\bx) = \Psi(\bx_1) \wedge \Psi(\bx_2) \wedge\cdots\wedge\Psi(\bx_k)$,
where $\bx_i$ ($i\in[k]$) are disjoint $n$-tuples of variables. In particular, we have $|\Psi^k|=|\Psi|^k\!$.

Using this construction, the definition of a polymorphism can be reformulated. In this view of \cspg, it follows directly that a $k$-ary polymorphism is just a homomorphism $\psi\colon\frS^k\to\frS$.


\subsection{Rank-one block matrices}
\label{sec:Dec:Rank-one}

In our decidability proof, we use a different characterisation of
rank-one block matrices, given by Corollary~\ref{cor110}. This may
seem more complicated than the original definition but it is more
suited to our purpose.

\begin{lemma}\label{lem210}
A matrix $A$ is a rank-one block matrix if, and only if, every $2\times 2$ submatrix of $A$ is a rank-one block matrix.
\end{lemma}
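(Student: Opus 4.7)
The plan is to prove the two implications separately. For the forward direction, I would assume $A$ is a rank-one block matrix with underlying relation $B$ and consider an arbitrary $2 \times 2$ submatrix at rows $r_1,r_2$ and columns $c_1,c_2$, doing a case analysis on the support pattern. The case of exactly three nonzero entries is impossible by the rectangularity of $B$: if, say, $(r_1,c_1),(r_2,c_1),(r_1,c_2)\in B$, then $(r_2,c_2)\in B$ too. The remaining cases (zero, one, two, or four nonzero entries) give submatrices whose supports are trivially rectangular and whose blocks are either $1\times 1$ or contained within a single block of $A$; the $2 \times 2$ filled case inherits rank at most one as a submatrix of a rank-one matrix.

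For the backward direction I would first derive rectangularity of the underlying relation $B$ of $A$. Given $(r_1,c_1),(r_1,c_2),(r_2,c_1)\in B$, the $2\times 2$ submatrix at rows $r_1,r_2$ and columns $c_1,c_2$ has three positive entries; its support is therefore either not rectangular, or the fourth entry $f(r_2,c_2)$ must be positive too. Since the submatrix is a rank-one block matrix by hypothesis, its support is rectangular, hence $f(r_2,c_2)>0$ and $(r_2,c_2)\in B$. With rectangularity in hand, by Lemma~\ref{lem10} each block of $B$ is a bipartite clique, so the corresponding block $f'$ of $A$ consists entirely of positive entries. The hypothesis applied to $f'$ tells us that every $2\times 2$ submatrix of $f'$ is a rank-one block matrix; since $f'$ has no zero entries, each such submatrix lies inside a single block and therefore has rank at most one, i.e.\ every $2\times 2$ minor of $f'$ vanishes. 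By the standard linear-algebra characterisation (a matrix has rank at most one iff all of its $2\times 2$ minors are zero), $f'$ has rank at most one, hence rank exactly one (it is nonzero). This holds for every block, so $A$ is a rank-one block matrix.

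The main obstacle is ensuring the forward case analysis covers every support pattern and that rectangularity of the submatrix's support is verified carefully (especially in the two-nonzero sub-cases, where nonzeros may lie in a single row/column or in opposite corners). The backward step is then essentially linear algebra combined with the easy consequence of rectangularity that a three-nonzero pattern forces the fourth entry, which is the cleanest way to extract rectangularity of $B$ from the $2\times 2$ hypothesis.
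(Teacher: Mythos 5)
Your argument is correct and follows essentially the same route as the paper's: the forward direction uses the observation that rectangularity of $A$ rules out exactly one zero entry, leaving only trivially rank-one cases or an all-positive submatrix within a single block, and the backward direction combines rectangularity (forced by the three-nonzero case) with the $2\times2$-minor characterisation of rank one. The only differences are presentational — the paper proves the backward direction by contrapositive (exhibiting a bad $2\times2$ witness), while you argue it directly via Lemma~\ref{lem10} and the vanishing-minors criterion — but the underlying ideas are identical.
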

\begin{proof}
Let $A$ be a $k\times\ell$ rank-one block matrix and let
\begin{equation*}
    B\ =\ \begin{bmatrix} a_{ir} & a_{is}\\a_{jr} & a_{js}\end{bmatrix}
        \qquad \text{($i,j\in[k]$, $i\neq j$; $r,s\in[\ell]$, $r\neq s$)}.
\end{equation*}
be any $2\times 2$ submatrix of $A$. If any of $a_{ir}$, $a_{is}$, $a_{jr}$, $a_{js}$ is zero, at least two must be zero, since $A$ is rectangular. Then $B$ is clearly a rank-one block matrix. If $a_{ir}$, $a_{is}$, $a_{jr}$, $a_{js}$ are all nonzero, $B$ must be a submatrix of some block of $A$. Since this block has rank one, $B$ also has rank one.

Conversely, suppose $A$ is not a rank-one block matrix. If its underlying relation is not rectangular, there exist $a_{ir},a_{is},a_{jr}>0$ with
$a_{js}=0$. The corresponding matrix $B$ clearly has rank $2$, but has only one block so is not a rank-one block matrix. If the underlying relation of $A$ is rectangular, then $A$ must have a block of rank at least~$2$. This block must have some $2\times2$ submatrix $B$ with rank $2$ and all its elements $a_{ir},a_{is},a_{jr},a_{js}>0$.\qquad
\end{proof}
\begin{lemma}\label{lem220}
    Let $A$ be a rectangular $2\times 2$ matrix. $A$ is a rank-one block matrix
    if, and only if, $a_{11}^2a_{22}^2a_{12}a_{21} = a_{12}^2a_{21}^2a_{11}a_{22}$.
\end{lemma}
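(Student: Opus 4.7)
The plan is to observe that the stated polynomial identity factors cleanly and then to carry out a short case analysis on the support pattern of $A$, using rectangularity to eliminate the awkward case.

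First I would rewrite the identity $a_{11}^2a_{22}^2a_{12}a_{21} = a_{12}^2a_{21}^2a_{11}a_{22}$ as
\begin{equation*}
    a_{11}a_{22}a_{12}a_{21}\,(a_{11}a_{22} - a_{12}a_{21})\ =\ 0.
\end{equation*}
Since the entries of $A$ are non-negative, this holds if, and only if, either at least one $a_{ij}=0$ or $a_{11}a_{22}=a_{12}a_{21}$. The proof then reduces to showing that, under rectangularity, each of these two alternatives corresponds exactly to $A$ being a rank-one block matrix.

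Next I would analyse the support of $A$, i.e.\@ the positions $(i,j)$ where $a_{ij}>0$. By rectangularity, it is impossible for exactly three of the four entries to be positive (three positive entries always contain a triple $(i,j),(i,k),(l,j)$ forcing the fourth entry to be positive as well). Hence the number of positive entries is $0$, $1$, $2$, or $4$. If it is $0$, $1$, or $2$, then each connected component of the underlying bipartite graph contains at most two edges and spans either a single row, a single column or a diagonal pair, so every block of $A$ is a matrix with at most one nonzero row (or column). Such a block trivially has rank at most $1$, so $A$ is a rank-one block matrix; in this case the first factor $a_{11}a_{22}a_{12}a_{21}$ vanishes, matching the identity.

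Finally I would treat the case where all four entries are positive. Here the underlying relation has the single block $\{1,2\}\times\{1,2\}$, so $A$ is a rank-one block matrix if, and only if, the whole $2\times 2$ matrix has rank at most $1$, i.e.\@ $a_{11}a_{22}-a_{12}a_{21}=0$; this is precisely the vanishing of the second factor. Combining the two cases gives the equivalence. There is no real obstacle here; the only point requiring care is the use of rectangularity to exclude three-positive-entry patterns, after which the factorisation does all the work.
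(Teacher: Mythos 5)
Your proof is correct and takes essentially the same route as the paper: both observe that rectangularity excludes the pattern of exactly three nonzero entries, handle the at-most-two-nonzero cases directly, and reduce the all-positive case to $a_{11}a_{22}=a_{12}a_{21}$, i.e.\@ singularity of the one block. Your explicit factorisation of the polynomial and the more detailed support case analysis are a slightly more methodical presentation of what the paper states tersely, but there is no substantive difference.
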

\begin{proof}
This equation holds if any of $a_{11}$, $a_{22}$, $a_{12}$ or $a_{21}$ is zero. But then rectangularity implies that at least two of them must be zero and $A$ is a  rank-one block matrix in all possible cases.
Otherwise, the equation is equivalent to $a_{11}a_{22}=a_{12}a_{21}$, which is the condition that $A$ is singular. So $A$ is one block, with rank one. The argument is clearly reversible.\qquad
\end{proof}
\begin{corollary}\label{cor110}
    Ket $A$ be a rectangular $k\times\ell$ matrix. $A$ is a rank-one
    block matrix if, and only if, $a_{ir}^2a_{js}^2a_{is}a_{jr} =
    a_{is}^2a_{jr}^2a_{ir}a_{js}$ for all $i,j\in[k]$ and all
    $r,s\in[\ell]$.
\end{corollary}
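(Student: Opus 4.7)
The plan is to derive the corollary by chaining Lemmas~\ref{lem210} and~\ref{lem220}, so the only real content is a short bookkeeping argument together with one preliminary observation about rectangularity being inherited by $2\times 2$ submatrices.

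First I would verify that every $2\times 2$ submatrix of a rectangular matrix is itself rectangular. Suppose, for contradiction, that the submatrix indexed by rows $i,j\in[k]$ and columns $r,s\in[\ell]$ has exactly one zero entry; by symmetry, say $a_{ir},a_{is},a_{jr}>0$ and $a_{js}=0$. Then the underlying relation of $A$ contains $(i,r),(j,r),(i,s)$ but not $(j,s)$, contradicting the rectangularity of $A$. Thus every $2\times 2$ submatrix of $A$ meets the hypothesis of Lemma~\ref{lem220}.

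Next, for any fixed $i,j\in[k]$ and $r,s\in[\ell]$ with $i\neq j$ and $r\neq s$, apply Lemma~\ref{lem220} to the $2\times 2$ submatrix with entries $a_{ir},a_{is},a_{jr},a_{js}$: this submatrix is a rank-one block matrix if and only if the identity $a_{ir}^2a_{js}^2a_{is}a_{jr} = a_{is}^2a_{jr}^2a_{ir}a_{js}$ holds. In the degenerate cases where $i=j$ or $r=s$, the claimed identity collapses to an expression of the form $x^2y^2yx = y^2x^2xy$, which is trivially true. Hence the polynomial condition stated in the corollary is equivalent to the assertion that every $2\times 2$ submatrix of $A$ is a rank-one block matrix.

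Finally, invoking Lemma~\ref{lem210}, the latter is in turn equivalent to $A$ being a rank-one block matrix, which gives the corollary. There is no substantive obstacle here; the nontrivial work has already been discharged by Lemmas~\ref{lem210} and~\ref{lem220}, and the only thing worth highlighting is the small rectangularity-inheritance step at the start, which is needed to invoke Lemma~\ref{lem220} legitimately on arbitrary $2\times 2$ submatrices.
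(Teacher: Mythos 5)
Your proof is correct and follows essentially the same route as the paper: chain Lemma~\ref{lem210} with Lemma~\ref{lem220}, dispatching the degenerate cases $i=j$ or $r=s$ by observing the identity is trivially symmetric there. The one thing you add that the paper leaves implicit is the explicit check that rectangularity of $A$ is inherited by its $2\times 2$ submatrices (needed to legitimately invoke Lemma~\ref{lem220}); that is a genuine and worthwhile clarification, since the paper's one-line proof silently relies on it, and the relevant observation appears only inside the proof of Lemma~\ref{lem210} rather than as a standalone fact.
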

\begin{proof}
When $i=j$ or $r=s$, the two sides of this equation are identical. Otherwise, the equality follows directly from Lemmas~\ref{lem210} and~\ref{lem220}.\qquad
\end{proof}
\begin{remark}\label{rem55}%
    It is possible to modify the above so that Corollary~\ref{cor110} involves products of only five elements, rather than six, but we do not pursue this refinement.
\end{remark}


\subsection{Decidability}
\label{sec:Dec:Proof}

To show the decidability of strong balance, we relax the criterion of
strong balance, by noting the conditions sufficient for the success of
the algorithm in Section~\ref{sec:Dichotomy:Alg}. Observe that only
ternary relations on $D\times D\times D^{i}\!$, for $i\in[n-2]$, are
required to be balanced.  Therefore, let $\Psi(\bx)$, with $\bx=(x_1,\ldots,x_n)$, be an arbitrary formula pp-definable in $\Gamma\!$, which we consider fixed for the rest of this section. Then, for the algorithm to succeed, it suffices that the
$q\times q$ matrix
\begin{equation*}
 M(a,b)\, = \, \big|\set{\bx\in [V\to D] : \bx\in\Psi,\,x_1=a,\,x_2=b}\big|\qquad(\forall a,b\in D)
\end{equation*}
is always a rank-one block matrix. Note that we can always assume that the underlying relation of $M$ is rectangular, since $\Gamma$ is known to be strongly rectangular.
\begin{remark}\label{fpnumprem}%
    Call this condition \emph{almost-strong} balance. It is equivalent
    to strong balance if $\fp\neq\nump$. If \frS is strongly balanced,
    it is clearly almost-strongly balanced. Almost-strong balance
    implies that the algorithm of Section~\ref{sec:Dichotomy:Alg}
    succeeds, which implies that $\ncsp\in\fp$. Thus \ncsp is not
    \numpc, which implies that it is strongly balanced by
    Lemma~\ref{lem180}. This chain of implications requires
    $\fp\neq\nump$, so we make that assumption in the remainder of
    this section. If $\fp=\nump$, no dichotomy exists and the property
    of strong balance ceases to be of computational interest.
\end{remark}

We may therefore take almost-strong balance as the criterion for strong balance. By Corollary~\ref{cor110}, the condition for $M$ to be a rank-one block matrix is that
\begin{equation}\label{eq:M}
M(a,c)^2M(a,d)M(b,d)^2M(b,c)\ =\ M(a,d)^2M(a,c)M(b,c)^2M(b,d),
\end{equation}
for all $a,b,c,d\in D$.

We can reformulate the condition for strong balance using the construction of powers of \frS.
If $\ba=(a_1,\ldots,a_k)$ and $\bb=(b_1,\ldots,b_k)$, the balance matrix $M_k$ for $\Psi^k$ is the $q^k\times q^k$ matrix
\begin{align*}
M_k(\ba,\bb)\,&=\,\big|\set{\bx\in [V\to D^k]\,:\,\bx\in\Psi^k,\,x_1=\ba,x_2=\bb}\big|\\
&=\,M(a_1,b_1)M(a_2,b_2)\cdots M(a_k,b_k)\,.
\end{align*}
Using this, equation~\eqref{eq:M} can be rewritten as
\begin{equation}\label{eq:M6}
M_6(\eua,\euc)\,=\,M_6(\eua,\eud)\,,
\end{equation}
where
\begin{equation}\label{eq:abcd}
\eua=(a,a,a,b,b,b), \ \euc=(c,c,d,d,d,c),\ \eud=(d,d,c,c,c,d)\,.
\end{equation}
Fix $\eua$, $\euc$, $\eud$ and, for notational simplicity,
write $\euS$ for $\frS^6\!$, $\euGam$ for $\Gamma^6\!$, $\euPsi$ for
$\Psi^6\!$, $\euM$ for $M_6$ and $\euD$ for $D^6\!$. Then, from
\eqref{eq:M6}, we must verify that $\euM(\eua,\euc)=\euM(\eua,\eud)$
for all relations $\euPsi$ which are pp-definable in $\euGam$ and
given $\eua,\euc,\eud\in \euD$. We use a method of
Lov\'{a}sz~\cite{Lovasz67}; see also~\cite{DyGoPa07}. For $\eus\in \euD$, let
\begin{align*}
    \Hom_{\eua,\eus}(\euPsi) &= \set{\bx\in [V\to \euD]:
                               \bx\in \euPsi,\, x_1=\eua,\, x_2=\eus} \\
    \hom_{\eua,\eus}(\euPsi) &= |\Hom_{\eua,\eus}(\euPsi)|\,.
\end{align*}

However, a homomorphism $V\to \euD$ that is consistent with $\euPsi$ is
just a satisfying assignment to $\euPsi$.  $\euM(\eua,\eus)$ is the
number of such assignments with $x_1=\eua$ and $x_2=\eus$, i.e., the
number of homomorphisms that map $x_1\mapsto \eua$ and $x_2\mapsto
\eus$.  This proves the following.

\begin{lemma}
\label{lem229}
    $\Gamma$ is strongly balanced if, and only if,
    $\hom_{\eua,\euc}(\euPsi)=\hom_{\eua,\eud}(\euPsi)$ for all formulae $\euPsi$
    and all $\eua, \euc, \eud$ of the form above.
\end{lemma}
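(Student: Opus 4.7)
The plan is simply to chain together the equivalences already developed in the paragraphs preceding the statement. The first step is to translate strong balance into the polynomial identity~\eqref{eq:M}. By definition, $\Gamma$ is strongly balanced iff, for every pp-definable formula $\Psi$, the associated balance matrix $M$ is a rank-one block matrix. Since we may assume that the underlying relation of $M$ is already rectangular (otherwise strong balance already fails via Lemma~\ref{lem175}), Corollary~\ref{cor110} makes this equivalent to equation~\eqref{eq:M} holding for all $a,b,c,d\in D$ and all pp-definable $\Psi$.

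The second step is the direct calculation already recorded in the text. Using the product formula $M_6(\ba,\bb)=\prod_{i=1}^{6} M(a_i,b_i)$ with $\eua,\euc,\eud$ chosen as in~\eqref{eq:abcd}, I would expand
\begin{equation*}
\euM(\eua,\euc)=M(a,c)^2 M(a,d) M(b,d)^2 M(b,c),\qquad \euM(\eua,\eud)=M(a,d)^2 M(a,c) M(b,c)^2 M(b,d),
\end{equation*}
so that~\eqref{eq:M} is literally~\eqref{eq:M6}. Since $(a,b,c,d)$ ranges over all of $D^4$ iff $(\eua,\euc,\eud)$ ranges over the triples of the stated form, no conditions are added or lost by this reformulation.

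The final step identifies $\euM(\eua,\eus)$ with $\hom_{\eua,\eus}(\euPsi)$ where $\euPsi=\Psi^6$. A satisfying assignment to the pp-formula $\euPsi$ over $\euGam$ is, by the reformulation of \cspg recalled at the start of the section, exactly a homomorphism $V\to\euD$ respecting the relations of $\euGam$; fixing $x_1=\eua$ and $x_2=\eus$ then gives $\euM(\eua,\eus)=\hom_{\eua,\eus}(\euPsi)$. Combining the three equivalences, and noting that pp-formulae $\Psi$ in $\Gamma$ are in bijection with their powers $\euPsi=\Psi^6$ in $\euGam$ (which is what the quantification ``for all formulae $\euPsi$'' means in context), yields the stated biconditional. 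There is no real obstacle here: the substantive work has gone into Corollary~\ref{cor110} and into the careful choice of the tuples $\eua,\euc,\eud$ that makes~\eqref{eq:M} and~\eqref{eq:M6} coincide, and the lemma is the formal summary of these observations.
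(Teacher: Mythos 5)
Your approach is the same as the paper's: chain the equivalences from the paragraphs immediately preceding the lemma (strong balance $\Rightarrow$ equation~\eqref{eq:M} $\Leftrightarrow$ equation~\eqref{eq:M6} $\Leftrightarrow$ the homomorphism-count identity) and identify $\euM(\eua,\eus)$ with $\hom_{\eua,\eus}(\euPsi)$. Steps two and three match the paper exactly and are fine.

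The one place where you are imprecise is your first step, which you justify with ``By definition.'' The definition of strong balance requires balance of \emph{every} expression of every pp-definable relation as a ternary relation on $D^k\times D^\ell\times D^m$, for all $k,\ell,m\geq 1$; the balance matrices involved are therefore indexed by \emph{tuples}, not single elements of $D$. The condition you reduce to---that the $q\times q$ matrix $M(a,b)=|\{\bx\in\Psi: x_1=a,\,x_2=b\}|$ is a rank-one block matrix for every pp-definable $\Psi$---is only the special case $k=\ell=1$, which the paper labels \emph{almost-strong} balance. That strong balance is equivalent to almost-strong balance is not by definition: the forward implication is trivial, but the converse is obtained in Remark~\ref{fpnumprem} via the counting algorithm and Lemma~\ref{lem180}, and so depends on the standing assumption $\fp\neq\nump$. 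You should cite Remark~\ref{fpnumprem} here rather than claiming a definitional equivalence, since otherwise the reverse direction of the lemma (the hom-count identity implies strong balance) is not actually established. Once that citation is in place, the rest of your argument is correct and is the argument the paper intends.
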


We will also need to consider the injective functions in
$\Hom_{\eua,\eus}(\euPsi)$.  For $\eus\in \euD$, let
\begin{align*}
    \Mon_{\eua,\eus}(\euPsi) &= \set{\bx\in [V\inj\euD]:
                               \bx\in \euPsi,\, x_1=\eua,\, x_2=\eus} \\
    \mon_{\eua,\eus}(\euPsi) &= |\Mon_{\eua,\eus}(\euPsi)|\,.
\end{align*}

\begin{lemma}\label{lem230}
    $\hom_{\eua,\euc}(\euPsi)=\hom_{\eua,\eud}(\euPsi)$ for all $\euPsi$ if, and only if, $\mon_{\eua,\euc}(\euPsi)=\mon_{\eua,\eud}(\euPsi)$ for all~$\euPsi$.
\end{lemma}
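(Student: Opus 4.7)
The plan is to follow the classical M\"obius-inversion technique of Lov\'asz~\cite{Lovasz67}, to which the paper already alludes. First I would extend the constructions: for each partition $\pi$ of the variable set $V$, let $\euPsi^\pi$ be the formula obtained from $\euPsi$ by identifying any variables lying in a common $\pi$-class. Then $\euPsi^\pi$ is again a pp-formula over $\euGam$, on $|V/\pi|$ variables, so the hypothesis of the lemma (quantified over all such formulae) applies to each $\euPsi^\pi$ in turn.

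Next I would establish the key identity: for any fixed $\eus\in\euD$,
\begin{equation*}
    \hom_{\eua,\eus}(\euPsi)\ =\ \sum_\pi\mon_{\eua,\eus}(\euPsi^\pi)\,,
\end{equation*}
where $\pi$ ranges over all partitions of $V$. Each $\bx\in\Hom_{\eua,\eus}(\euPsi)$ has a kernel partition $\pi(\bx)$, defined by $v\sim_{\pi(\bx)}v'$ iff $\bx(v)=\bx(v')$, and the assignments with kernel exactly $\pi$ biject, via factoring through the quotient $V/\pi$, with injective satisfying assignments of $\euPsi^\pi$ sending $[x_1]\mapsto\eua$ and $[x_2]\mapsto\eus$. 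Terms with $x_1\sim_\pi x_2$ but $\eua\neq\eus$ vanish automatically, since $[x_1]=[x_2]$ as a variable in $V/\pi$.

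The $(\Leftarrow)$ direction now follows by applying the mon hypothesis termwise in the identity and summing. For the $(\Rightarrow)$ direction I would invoke M\"obius inversion on the partition lattice $\Pi(V)$: applying the identity with $\euPsi^\sigma$ in place of $\euPsi$ yields $\hom_{\eua,\eus}(\euPsi^\sigma)=\sum_{\tau\geq\sigma}\mon_{\eua,\eus}(\euPsi^\tau)$, where $\tau\geq\sigma$ means $\tau$ is coarser than $\sigma$. Inversion then gives
\begin{equation*}
    \mon_{\eua,\eus}(\euPsi)\ =\ \sum_\tau\mu(\bot,\tau)\,\hom_{\eua,\eus}(\euPsi^\tau)\,,
\end{equation*}
where $\bot$ is the discrete partition and $\mu$ is the M\"obius function of $\Pi(V)$. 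Applying the hom hypothesis to each $\euPsi^\tau$ and summing then delivers the required equality of mon counts.

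The only step needing genuine care will be the kernel-factoring bijection in the key identity, and in particular checking that relabelling $\euPsi$ to $\euPsi^\pi$ preserves satisfiability; this is routine because the constraints of $\euPsi$ depend only on the values assigned, not on the variable names. Once the key identity is in hand, both directions of the equivalence become purely mechanical, and I do not expect any serious obstacle.
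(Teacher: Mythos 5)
Your proposal is correct and follows essentially the same route as the paper: both arguments run Lov\'asz's M\"obius-inversion over the partition lattice of $V$, using the kernel-factoring bijection between homomorphisms of $\euPsi$ with kernel $\pi$ and injective satisfying assignments of the collapsed formula $\euPsi^\pi$. The only cosmetic difference is that you sum over the full partition lattice $\Pi(V)$ and observe that terms with $x_1\sim_\pi x_2$ vanish when $\eua\neq\eus$, whereas the paper restricts the index set $\cI$ from the outset to partitions keeping $1$ and $2$ in distinct classes; these are equivalent bookkeeping choices.
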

\begin{proof}
Consider the set $\cI$ of all partitions $I$ of $V$ into disjoint classes $\euI_1,\ldots,\euI_{k_I}$, such that $1\in \euI_1$, $2\in \euI_2$. Writing $I \preceq I'$  whenever $I$ is a refinement of $I'\!$, $\bP=(\cI,\preceq)$ is a poset. We will write $\bot$ for the partition into singletons, so $\bot\preceq I$ for all $I\in \mathcal{I}$.

Let $V/I$ denote the set of classes $\euI_1,\ldots,\euI_{k_I}$ of the
partition $I$, so $|V/I|=k_I$, and let $\euI_1$, $\euI_2$ be denoted
by $1/I$, $2/I$. Let $\euPsi/I$ denote the relation obtained from
$\euPsi$ by imposing equality on all pairs of variables that occur in
the same partition of $I$.  Thus, the constraints $x_1=\eua$, $x_2=\eus$ become $x_{1/I}=\eua$, $x_{2/I}=\eus$. Then we have
\begin{equation}
\label{eq:poset}
    \hom_{\eua,\eus}(\euPsi)
      \ =\ \hom_{\eua,\eus}(\euPsi/\bot)
      \ = \sum_{I\in\mathcal{I}} \mon_{\eua,\eus}(\euPsi/I)
      \ = \sum_{I\in\mathcal{I}} \mon_{\eua,\eus}(\euPsi/I)\zeta(\bot,I)\,,
\end{equation}
where $\zeta(I,I')=1$, if $I \preceq I'\!$, and $\zeta(I,I')=0$,
otherwise, is the $\zeta$-function of the poset $\bP$. Thus, if
$\mon_{\eua,\euc}(\euPsi)=\mon_{\eua,\eud}(\euPsi)$ for all $\euPsi$, then
\begin{equation}
\label{eq:hom}
    \hom_{\eua,\euc}(\euPsi)
      \ = \sum_{I\in\cI} \mon_{\eua,\euc}(\euPsi/I)\zeta(\bot,I)
      \ = \sum_{I\in\mathcal{I}} \mon_{\eua,\eud}(\euPsi/I)\zeta(\bot,I)
      \ =\ \hom_{\eua,\eud}(\euPsi)\,.
\end{equation}

More generally, the reasoning used to give \eqref{eq:poset} implies that
\begin{equation*}
    \hom_{\eua,\eus}(\euPsi/I)
      \ = \sum_{I\preceq I'} \mon_{\eua,\eus}(\euPsi/I')
      \ = \sum_{I'\in\mathcal{I}} \mon_{\eua,\eus}(\euPsi/I')\zeta(I,I')\,.
\end{equation*}
Now, M\"obius inversion for posets~\cite[Ch.\,25]{VanLint01} implies
that the matrix $\zeta\colon \mathcal{I}\times \mathcal{I}\rightarrow
\{0,1\}$ has an inverse $\mu\colon \mathcal{I}\times
\mathcal{I}\rightarrow \ints$. It follows directly that
\begin{equation*}
    \mon_{\eua,\eus}(\euPsi)\ =\ \sum_{I\in\cI} \hom_{\eua,\eus}(\euPsi/I)\mu(\bot,I)\,.
\end{equation*}

Thus, if $\hom_{\eua,\euc}(\euPsi)=\hom_{\eua,\eud}(\euPsi)$ for all $\euPsi$, then
\begin{equation}
\label{eq:mon}
    \mon_{\eua,\euc}(\euPsi)
      \ = \sum_{I\in\cI} \hom_{\eua,\euc}(\euPsi/I)\mu(\bot,I)
      \ = \sum_{I\in\mathcal{I}} \hom_{\eua,\eud}(\euPsi/I)\mu(\bot,I)
      \ =\ \mon_{\eua,\eud}(\euPsi)\,.
\end{equation}
Now,~\eqref{eq:hom} and~\eqref{eq:mon} give the conclusion.\qquad
\end{proof}

\begin{lemma}\label{lem260}
$\mon_{\eua,\euc}(\euPsi)=\mon_{\eua,\eud}(\euPsi)$,  for all $\euPsi\!$, if, and only if, there is an automorphism $\eta\colon\euD\bij \euD$ of $\euS=(\euD,\euGam)$ such that $\eta(\eua)=\eua$ and $\eta(\euc)=\eud$.
\end{lemma}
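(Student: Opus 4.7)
The ``if'' direction is straightforward. Given an automorphism $\eta\colon\euS\bij\euS$ with $\eta(\eua)=\eua$ and $\eta(\euc)=\eud$, the map $\bx\mapsto \eta\circ\bx$ is a bijection from $\Mon_{\eua,\euc}(\euPsi)$ to $\Mon_{\eua,\eud}(\euPsi)$, with inverse $\by\mapsto \eta^{-1}\circ\by$. Indeed, $\eta\circ\bx$ is injective because $\eta$ and $\bx$ are; its first two values are $\eta(\eua)=\eua$ and $\eta(\euc)=\eud$; and it still satisfies $\euPsi$ because $\eta$ preserves every relation in $\euGam$ and hence, by an induction on the structure of pp-formulae essentially identical to that of Lemma~\ref{lem15}, every pp-definable relation over $\euGam$.

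For the ``only if'' direction I would apply a Lov\'asz-style argument, constructing a single pp-formula $\euPsi^\star$ that encodes the entire structure $\euS$ and then evaluating $\mon$ at that formula. Take the variable set to be $V=\euD$, writing $\nu_u$ for the variable indexed by $u\in\euD$; designate $\nu_{\eua}$ as variable~$1$ and $\nu_{\euc}$ as variable~$2$ (the degenerate case $\eua=\euc$ easily reduces to $\eud=\eua$, where $\eta=\mathrm{id}$ suffices). Let $\euPsi^\star$ be the conjunction, over every $r$-ary $\euH\in\euGam$ and every tuple $(u_1,\ldots,u_r)\in \euH$, of the atomic constraint $\euH(x_{u_1},\ldots,x_{u_r})$. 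An injective assignment $\bx\colon V\to\euD$ is the same as an injective map $\sigma\colon\euD\to\euD$ defined by $\sigma(u)=\bx(\nu_u)$, and $\bx$ satisfies $\euPsi^\star$ precisely when $\sigma$ is a homomorphism $\euS\to\euS$. Since $\euD$ is finite, an injective homomorphism of $\euS$ into itself must be bijective and, by pigeonhole within each relation, must have an inverse that also preserves relations; hence $\sigma$ is an automorphism of $\euS$.

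Therefore $\mon_{\eua,\euc}(\euPsi^\star)$ counts automorphisms $\sigma$ of $\euS$ with $\sigma(\eua)=\eua$ and $\sigma(\euc)=\euc$, while $\mon_{\eua,\eud}(\euPsi^\star)$ counts those with $\sigma(\eua)=\eua$ and $\sigma(\euc)=\eud$. The identity lies in the first set, so $\mon_{\eua,\euc}(\euPsi^\star)\geq 1$; the hypothesis then yields $\mon_{\eua,\eud}(\euPsi^\star)\geq 1$, and any such automorphism is the desired $\eta$. The only subtle point of the proof is this ``canonical formula'' bookkeeping in the only-if direction, together with the finiteness argument used to upgrade an injective relation-preserving self-map of $\euS$ to a full automorphism.
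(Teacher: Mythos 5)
Your proof is correct and takes essentially the same approach as the paper: the ``if'' direction uses the bijection $\bx\mapsto\eta\circ\bx$ on $\Mon$-sets, and the ``only if'' direction constructs the same canonical $\euGam$-formula over the variable set $\euD$ (the full conjunction of every constraint instance realised in $\euGam$), then uses the identity assignment to witness $\Mon_{\eua,\euc}\neq\emptyset$ and deduce the existence of an injective endomorphism fixing $\eua$ and sending $\euc$ to $\eud$. Your explicit pigeonhole justification for upgrading an injective endomorphism to an automorphism is a slightly fuller account of the step the paper dispatches by its standing remark that $[D\inj D]=[D\bij D]$ on finite domains.
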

\begin{proof}
The condition holds if $\euS$ has such an automorphism since, if
$\euPsi(\bx)=\exists\by\,\euPhi(\bx,\by)$ for some $\euPhi$, then
\begin{align*}
    \mon_{\eua,\euc}(\euPsi)\ &=\ |\set{\bx\in[V\inj\euD] : x_1=\eua,\,x_2=\euc,\, \exists\by\,(\bx,\by)\in\euPhi}| \\
        &=\ |\set{\eta(\bx)\in[V\inj\euD] : x_1=\eta(\eua),\,x_2=\eta(\euc),\, \exists\by\,(\eta(\bx),\eta(\by))\in\euPhi}|\\
        &=\ |\set{\bx\in[V\inj\euD]:x_1=\eua,\,x_2=\eud,\, \exists\by\,(\bx,\by)\in\euPhi}|\\
        &=\ \mon_{\eua,\eud}(\euPsi)\,.
\end{align*}

For the converse, suppose we have $\mon_{\eua,\euc}(\euPsi)=\mon_{\eua,\eud}(\euPsi)$ for all $\euPsi$. Consider the following $\euGam$-formula $\euPhi$ with domain $\euD$ and variables $x_\eui$ ($\eui\in\euD$),
\begin{equation*}
\euPhi(\bx)\ =\ \bigwedge_{\euH\,\in\,\euGam}\, \bigwedge_{(\euu_1,\ldots,\euu_r)\,\in\,\euH} \euH(x_{\euu_1},\ldots,x_{\euu_r})\,.
\end{equation*}
Then
\begin{equation*}
\Mon_{\eua,\eus}(\euPhi)\ =\ \set{\bx\in[\euD\inj\euD]:  \ x_\eua=\eua, \,x_\euc=\eus,\,\bx\in\euPhi}\,.
\end{equation*}
We have $\Mon_{\eua,\euc}(\euPhi)\neq\emptyset$, since the identity assignment $x_\eui=\eui$ $(\eui\in\euD)$ is clearly satisfying. Thus, by the assumption, $\Mon_{\eua,\eud}(\euPhi)\neq\emptyset$. Let $\eta\in\Mon_{\eua,\eud}(\euPhi)$, so $\eta$ is an endomorphism of $\euS$ with $\eta(\eua)=\eua$, $\eta(\euc)=\eud$. Since $[D\inj D]=[D\bij D]$, $\eta\colon D\bij D$ is the required automorphism.\qquad
\end{proof}

\begin{corollary}\label{cor140}
$\frS=(D,\Gamma)$ is strongly balanced if, and only if, for all $a,b,c,d\in D$ and $\eua,\euc,\eud$ as defined in \eqref{eq:abcd}, $\euS=(\euD,\euGam)$ has an automorphism $\eta$ such that $\eta(\eua)=\eua$ and $\eta(\euc)=\eud$.
\end{corollary}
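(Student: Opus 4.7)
The plan is to chain together the three equivalences established in Lemmas~\ref{lem229}, \ref{lem230}, and~\ref{lem260}. Observe that the quantification in Lemma~\ref{lem229} over $\eua,\euc,\eud$ can be pulled outside, since the condition ``$\hom_{\eua,\euc}(\euPsi)=\hom_{\eua,\eud}(\euPsi)$ for all $\euPsi$'' must hold for each admissible tuple. Thus, fixing $a,b,c,d\in D$ and the corresponding $\eua,\euc,\eud\in\euD$ from \eqref{eq:abcd}, the three preceding lemmas give the equivalences
\begin{equation*}
   \hom_{\eua,\euc}(\euPsi)=\hom_{\eua,\eud}(\euPsi)\ \text{for all }\euPsi
   \ \iff\ \mon_{\eua,\euc}(\euPsi)=\mon_{\eua,\eud}(\euPsi)\ \text{for all }\euPsi
   \ \iff\ \exists\eta\colon\euS\bij\euS\ \text{with}\ \eta(\eua)=\eua,\,\eta(\euc)=\eud\,,
\end{equation*}
where the first equivalence is Lemma~\ref{lem230} applied at fixed $\eua,\euc,\eud$, and the second is Lemma~\ref{lem260}.

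I would then combine this with Lemma~\ref{lem229}, which states that $\Gamma$ is strongly balanced if, and only if, $\hom_{\eua,\euc}(\euPsi)=\hom_{\eua,\eud}(\euPsi)$ for all formulae $\euPsi$ and all admissible $\eua,\euc,\eud$. Quantifying the displayed chain over all $a,b,c,d\in D$ yields exactly the statement of Corollary~\ref{cor140}.

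There is no substantive obstacle here: the work has already been done in Lemmas~\ref{lem229}--\ref{lem260}, and Corollary~\ref{cor140} is merely the statement obtained by composing those three equivalences for each fixed quadruple $(a,b,c,d)$. The only thing to verify carefully is that the quantifier structures line up, i.e.\ that ``for all $\euPsi$ and all $(\eua,\euc,\eud)$'' can be reorganised as ``for all $(\eua,\euc,\eud)$, for all $\euPsi$'', which is immediate.
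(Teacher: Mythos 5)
Your proposal is correct and takes essentially the same approach as the paper, which also derives the corollary by composing Lemmas~\ref{lem229}, \ref{lem230}, and~\ref{lem260} (the paper's one-line proof simply cites these together with \eqref{eq:M6}, which is already subsumed in Lemma~\ref{lem229}). Your explicit observation that Lemmas~\ref{lem230} and~\ref{lem260} hold for each fixed $\eua,\euc,\eud$, so the outer quantifier over $(a,b,c,d)$ distributes over the chain, is exactly the verification the paper leaves implicit.
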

\begin{proof}
This follows from~\eqref{eq:M6} and Lemmas~\ref{lem229}, \ref{lem230} and~\ref{lem260}.\qquad
\end{proof}

This characterisation of strong balance leads to a nondeterministic algorithm.

\begin{theorem}\label{thm60}
\strbal is in \np.
\end{theorem}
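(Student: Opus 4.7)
The strategy is to turn the infinitary condition ``every pp-definable ternary relation is balanced'' into a finite, polynomial-time verifiable check about automorphisms of a fixed, easily computable structure. The key is Corollary~\ref{cor140}: $\Gamma$ is strongly balanced if, and only if, for each quadruple $(a,b,c,d)\in D^4$ (with $\eua,\euc,\eud\in\euD$ as in \eqref{eq:abcd}), the structure $\euS=\frS^6$ admits an automorphism $\eta\colon\euD\bij\euD$ fixing $\eua$ and sending $\euc$ to $\eud$. Observe that $\euS$ depends only on $\frS$; no auxiliary pp-formula $\euPsi$ needs to be guessed.

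The \np{} algorithm then proceeds as follows. First, verify that $\Gamma$ is strongly rectangular using Lemma~\ref{lem280}, which is in \np; this is necessary for strong balance by Lemma~\ref{lem175}. Second, enumerate the $q^4\leq\size{\Gamma}^4$ quadruples $(a,b,c,d)\in D^4$ and, for each, nondeterministically guess a function $\eta_{abcd}\colon\euD\to\euD$. Each such $\eta_{abcd}$ is specified by $|\euD|=q^6\leq\size{\Gamma}^6$ values from $\euD$, so has description length polynomial in $\size{\Gamma}$. For each guessed $\eta_{abcd}$, verify deterministically that it is a bijection, that $\eta_{abcd}(\eua)=\eua$, $\eta_{abcd}(\euc)=\eud$, and that it preserves every relation $\euH\in\euGam$.

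The crucial complexity observation is that checking that $\eta_{abcd}$ preserves $\euGam$ takes polynomial time in $\size{\Gamma}$. Indeed, for each $H\in\Gamma$ of arity $r_H$ with $\ell_H$ tuples, the corresponding relation $H^6\in\euGam$ consists of $\ell_H^6$ tuples of arity $r_H$, and checking that $\eta_{abcd}$ maps each such tuple into $H^6$ requires $\bigO(r_H\ell_H^6)$ comparisons with entries of $H^6$, so a total of $\bigO(\size{\Gamma}^7)$ work per relation and per quadruple. Checking bijectivity costs $\bigO(q^6)$. Summing over the $\bigO(\size{\Gamma}^4)$ quadruples and the relations of $\Gamma\!$, the entire verification runs in time polynomial in $\size{\Gamma}$.

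Finally, correctness is immediate: the algorithm accepts iff \ph is a \maltsev polymorphism and a suitable $\eta_{abcd}$ exists for each quadruple, which by Corollary~\ref{cor140} happens iff $\Gamma$ is strongly balanced. The only potential obstacle is ensuring that the witness $\eta_{abcd}$ does not need to depend on an arbitrary pp-formula $\euPsi$, but this is precisely what Corollary~\ref{cor140} provides, since the universal quantifier over $\euPsi$ in Lemmas~\ref{lem229}--\ref{lem230} is absorbed into the single ``universal'' formula $\euPhi$ constructed in the proof of Lemma~\ref{lem260}. Hence \strbal{} lies in \np.
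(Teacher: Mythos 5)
Your proposal is correct and follows essentially the same route as the paper: both first verify strong rectangularity nondeterministically via Lemma~\ref{lem280}, then for each quadruple $(a,b,c,d)\in D^4$ nondeterministically guess a permutation of $\euD$ and deterministically verify that it is an automorphism of $\euS=\frS^6$ fixing $\eua$ and mapping $\euc\mapsto\eud$, with correctness supplied by Corollary~\ref{cor140}. Your additional remark that the universal quantification over $\euPsi$ has already been absorbed into the single formula $\euPhi$ in the proof of Lemma~\ref{lem260} is exactly the right thing to flag as the crucial point making the whole scheme work, though the paper leaves it implicit.
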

\begin{proof}
    We first determine whether $\Gamma$ is strongly rectangular, using
    the method of Lemma~\ref{lem280}.  If it is not, then $\Gamma$ is
    not strongly rectangular by Lemma~\ref{lem175}.

    Otherwise, we can construct $\euS=(\euD,\euGam)$ in time
    $\bigO(\size{\Gamma}^6)$. Let $\euq=q^6=|\euD|$ and let $\Pi$ denote
    the set of $\euq!$ permutations of $\euD$.  Each $\pi\in\Pi$ is a
    function $\pi\colon\euD\inj\euD$ and so a potential automorphism
    of $\euS$. For each of the $q^4$ possible choices $a,b,c,d\in D$,
    we determine $\eua,\euc,\eud\in\euD$ in polynomial time. We select
    $\pi\in\Pi$ nondeterministically and check that $\pi(\eua)=\eua$,
    $\pi(\euc)=\eud$ and that $\pi$ preserves all $\euH\in\euGam$. The
    computation requires $\bigO(q^4\size{\euGam}^2)=\bigO(\size{\Gamma}^{16})$
    time in total, so everything other than the $\bigO(q^{10}) =
    \bigO(\size{\Gamma}^{10})$ nondeterministic choices can be done
    deterministically in a polynomial number of steps.\qquad
\end{proof}

\begin{remark}\label{rem65}%
    We have paid little attention to the efficiency of the
computations in Theorem~\ref{thm60}. If the elements of $D$ are encoded as binary numbers in $[q]$, comparisons and nondeterministic choices require $\bigO(\log q)$ bit operations, rather than the $\bigO(1)$ operations in our accounting. On the other hand, membership in $H^6$ can be tested in $\bigO(\size{H})$ comparisons, rather than the $\bigO(\size{H}^6)$ that we have allowed. This might be reduced further by storing $H$ in a suitable data structure, instead of a simple matrix. We could also use Remark~\ref{rem55} to improve the algorithm of Theorem~\ref{thm60}.
\end{remark}
\begin{remark}\label{rem70}%
    Theorem~\ref{thm60} and Lemma~\ref{lem190} together imply that the following problem, posed by Bulatov~\cite{Bulato08}, can also be decided in \np.\vspace{1ex}

\emph{\begin{tabular}{lll@{\,}cl}
\quad&\multicolumn{4}{l}{\congsing}\\[0.25ex]
&\ \ &\textsf{Instance}&:& A relational structure $\frS=(D,\Gamma)$. \\
&\ \ &\textsf{Question}&:& Is $\Gamma$ congruence singular?
\end{tabular}}\vspace{1ex}

  Whether this can be shown directly, and not via \strbal, remains open.
\end{remark}


\section{Conclusions}
\label{sec:Conclude}
We have shown that there is an effective dichotomy for the whole of \numcsp. We have given a new, and simpler, proof for the existence of the dichotomy and the first proof of its decidability.

The complexity of our counting algorithm is $\bigO(n^5)$, whereas algorithms for most known counting dichotomies are of lower complexity, often $\bigO(n)$.
Can the complexity of the general algorithm be improved to
$\bigO(n^4)$, or better? Since frames, on which the algorithm is
based, have size $\bigO(n)$, there is no obvious reason why this
cannot be done.

A second problem that we have not yet considered is an extension to a dichotomy for \emph{weighted} counting problems~\cite{BulGro05,DyGoJe09}. We believe that this is possible. In fact, a dichotomy for \emph{rational} weights has already been shown in~\cite{BDGJJR10}. This gives an indirect argument, using the unweighted dichotomy. Decidability of the dichotomy of~\cite{BDGJJR10} now follows from Section~\ref{sec:Decide} of this paper.

A third issue is to investigate whether known counting dichotomies can be recovered from these general theorems. We have some preliminary results in this direction. The characterisation of Lemma~\ref{lem260} appears to be useful in this respect.

A fourth problem is to determine the complexity of \strbal more
precisely, rather than just establishing membership in \np.  \strbal
seems unlikely to be \npc as the automorphism tests required can be
coded into a single instance of the graph isomorphism problem.
However, it is not obvious whether the converse reduction is possible
so it may be that \strbal is in \ptime.

Finally, a deeper question that arises from our work is to what
extent the detailed properties of the algebras associated with \csp
instances are of real significance. In recent years, the algebraic
approach has proven successful in the study of \csp, but it is
possible that these algebras are more complicated objects than the
relations they are intended to capture.

\emph{Note.} Since this paper was written, Cai, Chen and Lu have
extended and strengthened our methods to give an effective dichotomy
for the weighted counting problem \cite{CaChLu10}.

\emph{Acknowledgments.}  The authors are grateful to Jin-Yi Cai,
Xi Chen and Andrei Krokhin for carefully reading drafts of an earlier version of this paper.  We are also grateful to Andrei Bulatov for explaining parts of his proof, and to Leslie Ann Goldberg for useful discussions. We also thank a referee for pointing out the issue discussed in Remark~\ref{fpnumprem}.


\end{document}